\renewcommand{\d}{{\rm d}}
\newcommand{\I}{\mathrm{i}}
\DeclareMathOperator{\id}{id}
\newtheorem{satz}{Satz}[subsection]
\newtheorem{theorem}[satz]{Theorem}
\newtheorem*{theorem*}{Theorem}
\newtheorem{proposition}[satz]{Proposition}
\newtheorem{definition}[satz]{Definition}
\newtheorem*{definition*}{Definition}
\newcommand*{\toccontents}{\@starttoc{toc}}
\begin{document}
\thispagestyle{empty}
\begin{center}\begin{huge}\textbf{\textit{Singular propagators in deformation quantization and Shoikhet-Tsygan formality}}\end{huge}\\\vspace{0.1cm}
\begin{large}{\textit{Johannes L\"offler}}\end{large}\\\vspace{0.1cm}\end{center}

\section*{Abstract}
This paper adds some details to the seminal approach to logarithmic formality \cite{AWRT} and interpolation formality \cite{WR} by Alekseev, Rossi, Torossian and Willwacher: We prove that the interpolation family of Kontsevich formality maps extends to Shoikhet-Tsygan formality and a complex interpolation parameter. We show some elementary relations satisfied by this polynomials. We also compute some Kontsevich integral weights and reason on the number theoretic meaning of the invariance of Kontsevich's propagator under real translations and scalings in the case of the Merkulov $n$-wheels.

\section*{Contents}

\toccontents\vspace{1cm}

{\large \textbf{Acknowledgements}} First I want to heartily thank C.-A. Rossi. The discussions with him set the ground for some of the issues considered here and most of the work on the logarithmic propagator are joint with him, A. Alekseev and C. Torossian and T. Willwacher. I am deeply grateful to Willwacher for many useful discussions and comments. I thank S. Merkulov for useful comments. I want to thank P. Moore for a motivating discussion and the comment to compare my result with an established formula. For helpful discussions of my notes I thank H. Furusho and P. Mnev. 

Last but not least I thank the MPIM \textsc{Max-Planck-Institute For Mathematics} in Bonn for financial support. johannes@mpim-bonn.mpg.de

\clearpage\thispagestyle{empty}

\chapter{\textit{Introduction}}
This paper is concerned with recent developments in the framework of deformation quantization. In deformation quantization the central definition, established by Moyal, Weyl \cite{Moy}, Berezin \cite{BeQu,BeGCoQu}, Bayen, Flato, Fr\o{}nsdal, Lichnerowicz and Sternheimer \cite{BFFLS1,BFFLS2}, is the definition of a star product $\star$. To give the definition of $\star$ products we first introduce the notion of a Poisson manifold and it is not difficult to prove that every $\star$ product always defines a so-called Poisson bracket by its skew-symmetric part in the first order of a formal parameter $\hbar$. A Poisson bracket on a manifold $M$ can be identified with a skew-symmetric $2$-vector field $\Pi\in\Gamma(M,\wedge^2TM)$ that satisfies
\begin{equation}\label{Poisson} 
\Pi^{ia}\frac{\partial\Pi^{bc}}{\partial{x}^i}+\Pi^{ib}\frac{\partial\Pi^{ca}}{\partial{x}^i}+\Pi^{ic}\frac{\partial\Pi^{ab}}{\partial{x}^i}=0
\end{equation}
Equivalent we can say that the bilinear map $\{\cdot,\cdot\}:C^{\infty}(M)\times{C}^{\infty}(M)\rightarrow{C}^{\infty}(M)$ defined by
$\{f,g\}:=\Pi(\d{f},\d{g})=\Pi^{kl}\frac{\partial{f}}{\partial{x}^k}\frac{\partial{g}}{\partial{x}^l}$
satisfies the Jacobi identity
$\{f,\{g,h\}\}+\{g,\{h,f\}\}+\{h,\{f,g\}\}=0$.
Examples of Poisson manifolds $(M,\Pi)$ are symplectic manifolds where we have a non-degenerate $2$-vector field or in contrast another natural family of Poisson manifolds are the duals of finite-dimensional Lie algebras $\mathfrak{g}$ where a in some sense linear Poisson bracket is canonically defined with help of the structure constants of $\mathfrak{g}$.


\begin{definition}\label{Star}
A $\star$ product on $(M,\Pi)$ is a $\mathbb{C}[\![\hbar]\!]$-bilinear associative operation
$\star:{C^{\infty}(M)[\![\hbar]\!]}\times{C^{\infty}(M)[\![\hbar]\!]}\rightarrow{C^{\infty}(M)[\![\hbar]\!]}$
of the shape
$f\star{g}=\sum_{n=0}^{\infty}\hbar^{n}B_{n}(f,g)$
with bi-differential operators $B_n$ and with the properties $1\star{f}=f=f\star1$, $B_{0}(f,g)=fg$ and $B_{1}(f,g)-B_{1}(g,f)=\I\lbrace{f,g}\rbrace$.
Two star products on $(M,\Pi)$ are equivalent if there exists a formal series
$S=id+\sum_{l=1}^{\infty}\hbar^{l}S_{l}$
of $\mathbb{C}[[\hbar]]$-linear operators $S_{l}:{C^{\infty}(M)[[\hbar]]}\rightarrow{C^{\infty}(M)[[\hbar]]}$ with $S_{l}(1)=0$ for $l\geq1$ and
$f\star{g}=S^{-1}(Sf\star'Sg)\;\forall\;f,g\in{C^{\infty}(M)[[\hbar]]}$
\end{definition}

In the mentioned case of a finite dimensional Lie algebra it is well-known that for the canonical linear Poisson structure the deformation quantization method of Kontsevich \cite{K} essentially produces the deformed universal enveloping algebra of $\mathfrak{g}$.

The equivalence just identifies $\star$ products that are easy to construct out of each other by the previous formula, we refer the reader to the reference \cite{WaB} for physical arguments why we put the other conditions in definition \ref{Star}. Of course the definition of a $\star$ product is not a complete way to quantize a specific physical system, the representation of the $\star$ product algebra as operators acting on a Hilbert space $(\mathcal{H},\langle,\cdot,\cdot\rangle)$, here $\langle\cdot,\cdot\rangle:\mathcal{H}\times\mathcal{H}\rightarrow\mathcal{C}$ is a complex positive-definite inner product, are necessary because of the superposition principle for states of the quantum system \cite{Dir}. In the realm of the theory of deformation quantization a step forward to the Hilbert space representation of a $\star$ product algebra can be achieved with help of a Gel'fand-Naimark-Segal-construction in the sense of Bordemann and Waldmann \cite{BoWaGNS}, \cite{WaB}.

In the symplectic case the existence of $\star$ products has been first proved by DeWilde and Lecomte \cite{DeWL} and soon later Fedosov gave a quite explicit, elegant and very geometric construction \cite{FeS}. We will sk{\em etc}h this construction in the following to give the reader examples of $\star$ products where the construction is much easier compared to the general case \cite{K}. This write up differs from the original construction because we replace an essential fixed point equation by its unique solution and give the examples for K\"ahler manifolds of constant sectional curvature. We also consider another aspect of Fedosov's construction in this paper: Classical counterparts of Fedosov Taylor expansions generate the formal exponential map, as first noticed by Emmrich and Weinstein \cite{W}. We will give a constructive proof of this fact, this is done by just computing the Taylor expansion of local coordinate functions and comparison with another formula for the formal exponential map that we get by induction. We also demonstrate how this formula for the exponential map works in the two generic examples $2$-sphere and Poincar\'e half-plane.


Finally in 1997 a celebrated paper \cite{K} of Kontsevich gave an impressive proof of the existence of $\star$ products on any Poisson manifold. Kontsevich's constructive solution is based on his famous formality theorem and this stronger statement is a continuation of the Hochschild-Kostant-Rosenberg map: Let $D_{poly}(M)$ be defined by $D_{poly}(M):=\oplus_{k=-1}^{\infty}\lbrace{\text{Polydifferential operators}:{C}^{\infty}(M)^{\otimes{k+1}}\rightarrow{C}^{\infty}(M)\rbrace}$ and denote by $T_{poly}(M):=\oplus_{k=-1}^{\infty}{\Gamma}^{\infty}({\wedge}^{k+1}TM)$ the space of polyvector fields. The Kontsevich formality map is a $L_{\infty}$-quasi-isomorphism
$\mathcal{U}$ between the differential graded Lie algebra $T_\mathrm{poly}(\mathbb{R}^d)$ equipped with trivial differential and Schouten-Nijenhuis bracket, and the dg Lie algebra $D_\mathrm{poly}(\mathbb{R}^d)$ equipped with Hochschild differential and Gerstenhaber bracket. $\mathcal U$ has the $\mathrm{HKR}$ quasi-isomorphism of dg vector spaces as first Taylor component, and enjoys additional properties, which are relevant for the Fedosov like globalisation of this local result to an arbitrary smooth or complex manifold or algebraic variety. Cattaneo and Felder showed \cite{CaFe}, \cite{CF} that the graphical expansion in Kontsevich's proof can be interpreted by a Feynman diagram expansion of a special $2$-dimensional Poisson $\sigma$-model of Schaller and Strobl \cite{SS}. Let us  mention that Tamarkin gave an independent proof of the formality theorem for $\mathbb{R}^d$ with operadic methods \cite{TaF}, the formality of little discs and Drinfeld associators are some of the main ingredients in Tamarkin's alternative proof.

Kontsevich's formality map is defined with help of certain graphs and associated integral weights.
In the article \textit{Operads and Motives in Deformation Quantization} \cite{KoM} Kontsevich stated (without proof) a formula that yields a second formality map $\mathcal{U}^{\ln}$ where one replaces in the construction of the weights the argument function $\arg(\cdot)$ in the definition of the Kontsevich propagator by a logarithm $\ln(\cdot)$. The proof that this $\ln(\cdot)$ formula indeed enjoys all properties is the aforementioned logarithmic formality theorem, a result joint with Alekseev, Rossi, Torossian and Willwacher.

Further, in \cite{WR} Rossi and Willwacher generalised this result and realised that the standard formality quasi-isomorphism $\mathcal{U}$ and the logarithmic one $\mathcal{U}^{\ln}$ are special cases of a more general formula defining formality maps, in that one may replace the $\arg(\cdot)$-function, in the formula of the Kontsevich $L_\infty$-quasi-isomorphism $\mathcal{U}$, by the family
$\left[(1-\lambda)\ln(\cdot)-\lambda\ln(\overline{\cdot})\right]/{2\pi\I}$
with $\lambda$ any real number and as we will see the case where $\lambda$ is a complex number can be proved analogous. This yields a family of formality maps $\mathcal U^\lambda$ and the original Kontsevich formality corresponds to $\lambda=1/2$, this value and $\lambda=0,1$ are the choices of $\lambda$ with the most symmetry. Rossi and Willwacher originally introduced $\mathcal U^{\mathbb{R}}$ to produce a family of Drinfeld associators \cite{Dri} named $\Phi^\lambda$. This family interpolates between the associators $\Phi_{\mathrm{AT}}$, $\Phi_{\mathrm{KZ}}$ and $\Phi_{\overline{\mathrm{KZ}}}$ (The Alekseev-Torossian associator, the Knizhnik-Zamolodchikov associator and the anti-$\mathrm{KZ}$ associator respectively \cite{WR}).

 By $\Omega^{-\bullet}_A$ we denote the exterior algebra of the $\mathbb K$-module of K\"ahler differentials with reversed grading and by $C_{-\bullet}(A,A)$ the Hochschild chain complex of $A$ with reversed grading. In another chapter we give a brief introduction to Tsygan's formality conjecture and Shoikhet's proof. Tsygan formality conjecture states the existence of an $L_\infty$-quasi-isomorphism
$\mathcal{S}^{\lambda}:C_{-\bullet}(A)\leadsto\Omega^\bullet_A$
of $L_\infty$-modules over $T_\mathrm{poly}(A)$ compatible with $\mathcal{U}$. We show here that Shoikhet's construction works compatible with the interpolation family of propagators if we also replace the $\arg(\cdot)$ function in the Shoikhet integral weights by the interpolation family $\left[(1-\lambda)\ln(\cdot)-\lambda\ln(\overline{\cdot})\right]/2\pi\I$ with $\lambda\in\mathbb{C}$. Willwacher in \cite{WillChains} proved another formality conjecture raised by Tsygan \cite{TsyganChains}. His elegant proof relies on a nice compatibility of Shoikhet's formality with the deRham differential and we briefly comment why we can also adapt his result and proof for the interpolation propagator.

Kontsevich's and Shoikhet's maps are defined with help of certain integral weights. This weights on the one hand satisfy a remarkable sequence of quadratic identities and also have an obvious invariance in their definition, on the other hand they are able to generate important sequences like the values of the Riemann $\zeta$ function evaluated at the positive integers (it is maybe interesting that the series $\zeta(2n)$ also satisfies a certain quadratic identity, see for example the proof of the rationality of $\zeta(2n)/\pi^{2n}$) in \cite{ZA}.

Unfortunately there are not many examples where such weights have been calculated and not so much is known about this integrals appearing in Kontsevich's $\star$ product, {\em e.g.} it is also not yet known whether they are all rational \cite{FW}, but there is some progress: In~\cite{VdB} the weights of wheels with spokes pointing outwards from the centre are computed explicitly. Furthermore, in~\cite[Chapter 7]{BCKT}, it is shown how to compute weights using the generalised Stokes Theorem for integration along the fiber. The results \cite{AWRT} and \cite{WR} show that $\mathcal{U}^{\ln}=\mathcal{U}^{1}$ has nicer number theoretic properties compared to $\mathcal{U}$, because $\mathcal{U}^{\ln}$ corresponds to the Knizhnik-Zamolodchikov associator \cite{Dri} and hence to multiple $\zeta$ values. Some pages written here are also concerned with the computation of this integrals and related number theoretic identities:

The interpolation weights are polynomials in $\lambda$. Because we prove the formality theorems for a complex interpolation parameter $\lambda$ the interpolation weights for a given graph are either constant or they have to admit roots for certain values of $\lambda$ by the fundamental theorem of algebra. This fact can be restated saying that there are two classes of weights: On the one hand the class of weights that are constant and do not depend on the interpolation parameter, on the other hand the class of weights that admit zeros for certain values of the interpolation parameter $\lambda\in\mathbb{C}$. Abusing physical language one could consider this eliminating of a certain graph as some kind of gauge fixing, this choice seems to be a slight advantage of our interpolation family of formality morphisms where we use a complex interpolation parameter.

We take care about the three types of vertices with valency $2$ for the interpolation. To summarise what happens when we eliminate a type I vertex with valency $2$ of a Kontsevich weight by performing a $2$-dimensional integral: If the vertex has one incoming and one departing arrow or two incoming arrows the integration vanishes. If the vertex has two departing arrows the targets couple, graphically
$$\xymatrix{ w_1&w\ar@/^1pc/[l]\ar@/^1pc/[r]&w_2\;\ar@{=>}[rr]^{\quad\int\hspace{-0.1cm}{{\d}w}{\overline{{\d}w}}\quad}\;&&{w_1}\ar@{<:>}[r]&{w_2}}$$
where the dashed double arrow represents the difference of two Kontsevich propagator functions and not an exterior derivative of a Kontsevich propagator function.

In the appendix we give an explicit computation by ``elementary'' methods, {\em i.e.} Stokes' Theorem and Residue Theorem, of the integral weight of the $2$-wheel graph appearing in the terms of order $2$ w.r.t.\ $\hbar$ in Kontsevich's $\star$ product \cite{K}.
The ``method" for the calculation of this special weight is just to try to apply Stokes theorem several times, this method is not very intuitive, technical integrals remain and it may not work for arbitrary weights. However for the calculation of arbitrary integral weights there is a universal, but less elegant and tedious recipe that we will demonstrate in the calculation of the Merkulov $n$-wheels. Merkulov showed in \cite{M} that the so-called characteristic class of the logarithmic formality map $\mathcal{U}^{\ln}$ is given by
$\exp\left(\sum_{n=2}^{\infty}\frac{\zeta(n)}{n}\left(\frac{x}{2\pi\I}\right)^n\right)$.
Here $\zeta(n)$ denotes the Riemann zeta function evaluated at positive integers $2\leq{n}$ and the $n$-wheel graph corresponding to the weight $\zeta({n})$ is pictured below
\vspace*{+0.3cm} \begin{align}\label{Wheel}
\SelectTips{cm}{}
\xymatrix{
w_1\ar@/^1.2pc/[rr]&&w_2\ar@/^1.2pc/[dd]\\
&w\ar@{->}[ul]\ar@{->}[dl]\ar@{->}[ur]\ar@{->}[dr]\\
w_n\ar@/^1.2pc/[uu]&&w_3\ar@/^1.2pc/[ll]_{\cdots}\\
}
\end{align}
In the appendix we also demonstrate how to compute this important wheel weights, but we do not fix the central vertex for the computation and calculate with the universal recipe. Our upgraded computation where we do not use the freedom to fix a vertex shows some rather involved number theoretic identities that correspond to the obvious invariance of the Kontsevich propagator under rescalings and real translations.\\

\chapter{Fedosov's quantization of symplectic manifolds}
In this section we sketch Fedosov's construction of $\star$ products on a symplectic manifold, the description slightly differs from Fedosov's original notation and construction \cite{FeS} and follows \cite{WaB}, \cite{D}. In the literature Fedosov's construction is sometimes referred to as the Gelfand-Fuchs trick of formal geometry or mixed resolutions.

Let in the following $(M,\omega)$ be a symplectic manifold of necessary even dimension $d$. Let $\mathcal{S}M$ denote the formally completed symmetric algebra of the cotangent space $T^{\ast}M$, defined as the bundle whose sections are infinite collections of symmetric covariant tensors $a_{i_1\cdots{i}_p}(x)$ where the indices $i_k$ run from $1$ to $d$ and $p$ runs from $0$ to $\infty$. Let  $\mathrm{v}^{i}$ be variables, which transform as contravariant vectors, we can think of $\mathrm{v}^{i}$ as formal coordinates of the fibers of $T^\ast{M}$ and write a section of $\Gamma(\mathcal{S}M)$ as the formal series
$$a(x,\mathrm{v})=\sum_{p=0}^{\infty}a_{i_1\cdots{i}_p}(x)\mathrm{v}^{i_1}\cdots\mathrm{v}^{i_p}$$
Next we tensor $\mathcal{S}M$ with the algebra of exterior forms to obtain a super-commutative algebra $\Omega^\bullet(M,\mathcal{S}M)$, called the formal Weyl algebra
$$\Omega^\bullet\left(M,\mathcal{S}M\right):=\Biggr\{\;a(x,\mathrm{v},{\d}x)=\sum_{0\leq{p,q}}a_{i_1\cdots{i}_pj_1...j_q}(x){\d}x^{j_1}\cdots{\d}x^{j_q}\mathrm{v}^{i_1}...\mathrm{v}^{i_p}\;\Biggr\}$$
where the $a_{i_1\cdots{i}_pj_1\cdots{j}_q}$ transform as tensors and are symmetric in the indices $i_1,\cdots,i_p$ and antisymmetric in the indices $j_1,\cdots,j_q$. Here and in the following we often will suppress in our notation wedge and tensor products and also always think of complexified bundles and formal series in $\hbar$.

The bundle $\Omega^\bullet(M,\mathcal{S}M)$ naturally admits the gradings $\deg_\mathrm{v}$, $\deg_a$ and $\deg_\hbar$ with respect to the symmetric $\mathrm{v}$ degree, the anti-symmetric $\d{x}$ form degree and the $\hbar$ degree.

Let the operator
$\delta:\Omega^{q}(M,\mathcal{S}M)\rightarrow\Omega^{q+1}(M,\mathcal{S}M)$
be defined by
$$\delta={\d}x^i\frac{\partial}{\partial{\mathrm{v}^i}}$$

We denote by ${i\left(\frac{\partial}{\partial{x}^{k}}\right)}$ the interior derivative of exterior forms by the vector field $\frac{\partial}{\partial{x}^{k}}$ and define the homotopy operator $\delta^{-1}:\Omega^{q}(M,\mathcal{S}M)\rightarrow\Omega^{q-1}(M,\mathcal{S}M)$ by the formula 
$$\delta^{-1}a=\mathrm{v}^{k}{i\left(\frac{\partial}{\partial{x}^{k}}\right)}\int_{0}^{1}\frac{{\d}t}{t}a(x,t\mathrm{v},t{\d}x)$$
Symmetry and grading arguments show that we have
$\delta^2=(\delta^{-1})^2=0$ and this clearly implies that $\delta$ and $\delta^{-1}$ are not invertible and the traditional notation $\delta^{-1}$ may be a bit misleading. We have the so-called \textit{Poincar\'{e}-Lemma}
$$\delta\delta^{-1}+\delta^{-1}\delta+\sigma=\id$$
with the projection $\sigma$ on the functions $C^\infty(M)$ given by
$$\sigma{a}=a\vert_{{\d}x^i=\mathrm{v}^i=0}$$
By this we can identify the zeroth cohomology
$H^0(\Omega(M,\mathcal{S}M),\delta)=C^\infty(M)$.

The space of sections $\Gamma(\mathcal{S}M)$ is naturally endowed with a commutative fiberwise product $\cdot$ induced by the multiplication of formal power series.
This fiberwise product and the wedge product induce in a natural way an associative graded product $\cdot$ that inherits the super-commutativity of the wedge product
$a_1\cdot{a_2}=(-1)^{k_1{k_2}}a_2\cdot{a_1}$
for $a_i\in\Omega^{k_i}\left(M,\mathcal{S}M\right)$, {\em i.e.} $\deg_a(a_i)=k_i$.

Similar to the usual Moyal-Weyl $\star$ product for $\mathbb{R}^d$ we can deform the product $\cdot$ to the so-called fiberwise Weyl product $\circ_{\Pi}$ with help of the commuting derivations $\frac{\partial}{\partial{\mathrm{v}^i}}$ by the following formula
$$a\circ_{\Pi}b:=\cdot{\exp}\bigr({\hbar}{P}\bigr)(a\otimes{b})$$
where $P$ is defined in local coordinates by
$${P}:=\frac{\I}{2}\Pi^{kl}\frac{\partial}{\partial{\mathrm{v}^k}}\otimes\frac{\partial}{\partial{\mathrm{v}^l}}\quad\text{or}\quad{P}:=2g^{k\overline{l}}\frac{\partial}{\partial{\mathrm{v}^k}}\otimes\frac{\partial}{\partial{\overline{\mathrm{v}}^l}}$$
with the symplectic Poisson-Tensor in the Weyl- or the symplectic K\"ahler-Tensor in the Wick-case \cite{BoWaWT} of K\"ahler manifolds respectively \cite{Gu}. This method also allows to directly construct star products for the Poisson structures $\frac{1}{2}\sum_{k,l}\pi^{kl}X_k\wedge{X}_l$ where $X_l,1\leq{l}\leq{n}$ are Lie commuting vector fields and $\pi\in{M}_n(\mathbb{R})$ is a skew-symmetric matrix \cite{NeDiss}.

The associative product $\circ_{\Pi}$ induces a super Lie bracket $[\cdot,\cdot]$ by 
$$[a_1,a_2]:=a_1\circ_{\Pi}a_2-(-1)^{k_1k_2}a_2\circ_{\Pi}a_1$$
In other words the previous defined bracket $[\cdot,\cdot]$ satisfies the graded Jacobi identity
$$[a_1,[a_2,a_3]]=[[a_1,a_2],a_3]+(-1)^{k_1k_2}[a_2,[a_1,a_3]]$$ The maps $[b,\cdot]$ also are inner super derivations of $\circ_{\Pi}$ of degree $\deg_a{b}$.

In the following it is convenient to introduce a degree combination that respects the specific structure of the Weyl product: The so-called total degree $Deg$ is defined by
$$Deg:=\deg_\mathrm{v}+2\deg_\hbar$$
and we denote by
$$\Omega^{\bullet}(M,\mathcal{S}M)=\Omega^{\bullet(0)}(M,\mathcal{S}M)\supseteq\Omega^{\bullet(1)}(M,\mathcal{S}M)\supseteq\Omega^{\bullet(2)}(M,\mathcal{S}M)\supseteq\cdots\supseteq\lbrace0\rbrace\;$$
with
$\bigcap_{d=0}^{\infty}\Omega^{\bullet(d)}\left(M,\mathcal{S}M\right)=\lbrace0\rbrace$
the  filtration corresponding to the total degree $Deg$.

It is a well-known fact that symplectic manifolds $(M,\omega)$ always admit torsion-free symplectic connections \cite{HSC}, {\em i.e.} connections that do respect the symplectic structure: We have a compatible connection $\nabla$ by the explicit formula
$$\omega(\nabla_X{Y},Z)=\omega(\nabla'_X{Y},Z)+(\nabla'_X\omega)(Y,Z)/3+(\nabla'_Y\omega)(X,Z)/3$$
where $\nabla'$ is any torsion-free connection. The operator $\nabla:\Omega^{\bullet}\left(M,\mathcal{S}M\right)\rightarrow\Omega^{\bullet+1}\left(M,\mathcal{S}M\right)$ is defined with a symplectic connection $\nabla$ by
$$\nabla={\d}x^i\frac{\partial}{\partial{x^i}}-\Gamma^{k}_{ij}{\d}x^i\mathrm{v}^{j}\frac{\partial}{\partial\mathrm{v}^{k}}$$
The operator $\nabla$ is a $\circ_\Pi$ super derivation of degree $1$, {\em i.e.} we have the graded Leibniz rule
$$\nabla(a_1\circ_\Pi{a_2})=\nabla{a}_1\circ_\Pi{a_2}+(-1)^{k_1}a_1\circ_\Pi{\nabla{a}_2}$$
The operators $\nabla$ and $\delta$ anti commute
$$\delta\nabla+\nabla\delta=0$$
Because symplectic Poisson tensors are by definition non-degenerate it is not a difficult exercise to verify that the square $\nabla^2$ can be written as an inner derivation $[R,\;.\;]:\Omega^{\bullet}\left(M,\mathcal{S}M\right)\rightarrow\Omega^{\bullet+2}\left(M,\mathcal{S}M\right)$ where we denote by
$$R=\frac{1}{4}{\omega}_{kr}R^{r}_{lij}\mathrm{v}^{k}\mathrm{v}^l{{\d}x}^{i}{{\d}x}^j\quad\text{or}\quad{R}=\frac{\I}{2}g_{k\overline{r}}R^{\overline{r}}_{\overline{l}i\overline{j}}\mathrm{v}^{k}\overline{\mathrm{v}}^l{\d}z^{i}{{\d}\overline{z}}^j$$
the curvature in the Weyl- or the Wick-Case respectively.

The Bianchi identities for the curvature of a torsion-free connection imply
$$\delta{R}=\nabla{R}=0$$


We call a differential ${\mathcal{D}}:\Omega^{\bullet}\left(M,\mathcal{S}M\right)\longrightarrow\Omega^{\bullet+1}\left(M,\mathcal{S}M\right)$ of the shape
$$\mathcal{D}=-\delta+\nabla+\frac{\I}{\hbar}[{\mathcal{R}},\;\cdot\;]$$
with ${\mathcal{R}}\in\Omega^{1(2)}\left(M,\mathcal{S}M\right)$ a Fedosov connection. 

With the $Deg$-adic topology $\Omega^{1(2)}\left(M,\mathcal{S}M\right)$ is a complete ultrametric space where we have
$$d(a,b)\leq{\max}(d(a,c),d(c,b))\;\forall\;a,b,c\in\Omega^{1(2)}\left(M,\mathcal{S}M\right)$$
and Banach's famous fixed point theorem holds for contraction mappings.

For every closed two form $\Omega\in\hbar\Gamma^\infty(\Lambda^{2}T^\ast{M})[[\hbar]]$ we have a flat
$\circ_{\Pi}$ super derivation
$$\mathcal{D}_{\Omega}=-\delta+\nabla+\frac{\I}{\hbar}[\mathcal{R}_{\Omega},\;\cdot\;]$$
The existence of a unique solution ${\mathcal{R}}\in\Omega^{1(2)}\left(M,\mathcal{S}M\right)$ of the fixed point equation
$$\mathcal{R}_\Omega=\delta^{-1}(\Omega+R+\nabla\mathcal{R}_\Omega+\frac{\I}{\hbar}\mathcal{R}_\Omega\circ_\Pi\mathcal{R}_\Omega)$$
and the normalisation $\delta^{-1}\mathcal{R}_{\Omega}=0$
is due to Fedosov. For a interpretation of the condition $\delta^{-1}\mathcal{R}_{\Omega}=0$ we refer the reader to the article \cite{NeDiss}.

Moreover we can write
\begin{equation}\label{Quadratic}
{\mathcal{R}}_{\Omega}=\sum_{n=1}^{\infty}\,\sum_{{p}=1}^{{C_{n}}}\left(\frac{1}{1-\delta^{-1}\nabla}\delta^{-1}(\Omega+R)\right)^{\left(\frac{\I}{2\hbar}\frac{1}{1-\delta^{-1}\nabla}\delta^{-1}[\;\cdot\;,\;\cdot\;]\right){n}}_{p}
\end{equation}
where we sum over all
$$C_{n}=\frac{1}{n}\binom{2n-2}{n-1}\;\forall\;{n}\in\mathbb{N}^+$$
{\em a priori} different parenthesis of $n$ times the element
$$\frac{1}{1-\delta^{-1}\nabla}\delta^{-1}(\Omega+R)\in\Omega^{1(2)}\left(M,\mathcal{S}M\right)$$
combined with $n-1$ non-associative, but in this case commutative binary operations
$${\frac{\I}{2\hbar}\frac{1}{1-\delta^{-1}\nabla}\delta^{-1}[\;\cdot\;,\;\cdot\;]}:\Omega^{1(i)}\left(M,\mathcal{S}M\right)\times\Omega^{1(j)}\left(M,\mathcal{S}M\right)\rightarrow\Omega^{1(i+j-1)}\left(M,\mathcal{S}M\right)$$
The validity of the formula for $\mathcal{R}_\Omega$ involving the Catalan numbers $$C_{n}:=\frac{1}{n!}\frac{\partial^{n}}{\partial{z}^{n}}\frac{1-\sqrt{1-4z}}{2}\Big\vert_0$$ was proven by the author in \cite{J}. We comment a bit how one can rephrase this formula: In the language of trees parenthesis correspond to rooted full binary trees with $n$ leaves, for example the five trees for $n=4$ are pictured below: First the most symmetric graph
\begin{equation}\label{CGraph1}
\hspace{1.7cm}\xymatrix{&&{\bullet}&&&\\
&{\bullet}\ar[ur]&&{\bullet}\ar[ul]&&\\
{z}\ar[ur]&&{z}\ar[ur]\quad{z}\ar[ul]&&{z}\ar[ul]\\
}
\end{equation}
and second the four graphs
\begin{equation*}\label{CGraph2}
\xymatrix{&&&\bullet&&&&\bullet\\&&\bullet\ar[ur]&&&&\bullet\ar[ur]&\\&\bullet\ar[ur]&&&&&\bullet\ar[u]&\\{z}\ar[ur]&{z}\ar[u]&{z}\ar[uu]&{z}\ar[uuu]&{z}\ar[uurr]&{z}\ar[ur]&{z}\ar[u]&{z}\ar[uuu]\\
}
\end{equation*}
\begin{equation}\label{CGraph2}
\xymatrix{\bullet&&&&\bullet&&&\\&\bullet\ar[ul]&&&&\bullet\ar[ul]&&\\&&\bullet\ar[ul]&&&\bullet\ar[u]&&\\{z}\ar[uuu]&{z}\ar[uu]&{z}\ar[u]&{z}\ar[ul]&{z}\ar[uuu]&{z}\ar[u]&{z}\ar[ul]&{z}\ar[uull]\\
}
\end{equation}
where $z$ stands for $z=\frac{1}{1-\delta^{-1}\nabla}\delta^{-1}(\Omega+R)$ and $\bullet$ for ${\frac{\I}{2\hbar}\frac{1}{1-\delta^{-1}\nabla}\delta^{-1}[\;\cdot\;,\;\cdot\;]}:\Omega^{1(i)}\left(M,\mathcal{S}M\right)\times\Omega^{1(j)}\left(M,\mathcal{S}M\right)\rightarrow\Omega^{1(i+j-1)}\left(M,\mathcal{S}M\right)$. Notice that we plug in the parenthesis tree always the same argument $z$, hence because of commutativity the four graphs pictured in \ref{CGraph2} will clearly contribute with the same term in the sum \ref{Quadratic} over all parenthesis. However, the general reduction of the computation to the set of graphs obtained by dividing out commutativity seems to be a rather deep question.

The endomorphism $\mathcal{D}_{\Omega}^{-1}:\Omega^{\bullet}\left(M,\mathcal{S}M\right)\rightarrow\Omega^{\bullet-1}\left(M,\mathcal{S}M\right)$ defined by
$${\mathcal{D}_{\Omega}}^{-1}\;:=-\frac{1}{1-\delta^{-1}(\nabla+\frac{\I}{\hbar}[{\mathcal{R}}_{\Omega},\;\cdot\;])}\delta^{-1}$$
is well-defined in the $Deg$-adic topology. Also the quantum Fedosov-Taylor expansion
$$\tau_{\Omega}:{C}^{\infty}(M)[[\hbar]]\rightarrow\Omega^{0}\left(M,\mathcal{S}M\right)\cap{\ker}\;\mathcal{D}_{\Omega}$$
is well-defined by the formula
$$\tau_{\Omega}:=\frac{1}{1-\delta^{-1}(\nabla+\frac{\I}{\hbar}[{\mathcal{R}}_{\Omega},\;\cdot\;])}$$
The two previous maps are well-defined since $\delta^{-1}(\nabla+\frac{\I}{\hbar}[{\mathcal{R}}_{\Omega},\;.\;])$ is a linear contraction mapping in the $Deg$-adic topology. The Fedosov Taylor expansion can also be understood as the unique fixed point of the fixed point equation $$\tau_\Omega(f)=f+\delta^{-1}(\nabla+\frac{\I}{\hbar}[{\mathcal{R}}_{\Omega}\;\cdot\;])\tau_\Omega(f)$$

The previous defined maps satisfy the deformed \textit{Poincar\'{e}-Lemma}
$${\mathcal{D}_{\Omega}}^{-1}\mathcal{D}_{\Omega}+\mathcal{D}_{\Omega}{\mathcal{D}_{\Omega}}^{-1}+\tau_\Omega\sigma=id$$
Hence the linear map
$$\tau_{\Omega}\hspace{-0.05cm}:\hspace{-0.05cm}{C}^{\infty}(M)[[\hbar]]\hspace{-0.05cm}\rightarrow\hspace{-0.05cm}\Omega^{0}\left(M,\mathcal{S}M\right)\cap{\ker}\mathcal{D}_{\Omega}$$ is an isomorphism with inverse $\sigma$.
\begin{theorem}
For every closed two form $\Omega\in\hbar\Gamma^\infty(\Lambda^{2}T^\ast{M})[[\hbar]]$ on a symplectic manifold $(M,\omega)$ the Fedosov $\star$ product
$$f\star_{\Omega}{g}:=\sigma\left(\tau_{\Omega}(f)\circ_{\Pi}\tau_{\Omega}(g)\right)$$
is an associative $\star$ product. Two Fedosov products $\star_\Omega$ and $\star_{\Omega'}$ on $M$ are equivalent if and only if
$$[\Omega]=[\Omega']\in\hbar{H}^2_{dR}(M,\mathbb{C})[[\hbar]]$$
\end{theorem}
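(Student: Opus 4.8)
The plan is to verify the two assertions separately. For the first claim — that $\star_\Omega$ is an associative star product — I would proceed as follows. First I would show associativity is inherited from the fiberwise Weyl product $\circ_\Pi$: since $\tau_\Omega$ is an algebra isomorphism of $C^\infty(M)[[\hbar]]$ (with product $\star_\Omega$) onto the flat sections $(\Omega^0(M,\mathcal SM)\cap\ker\mathcal D_\Omega,\circ_\Pi)$, with inverse $\sigma$, it suffices to check that the flat sections form a subalgebra under $\circ_\Pi$. That is precisely where flatness of the Fedosov connection enters: $\mathcal D_\Omega$ is a $\circ_\Pi$ super derivation, so $\mathcal D_\Omega(a\circ_\Pi b)=0$ whenever $\mathcal D_\Omega a=\mathcal D_\Omega b=0$, and $\mathcal D_\Omega^2=0$ is equivalent to $\mathcal D_\Omega$ being a differential, which follows from the fixed-point equation defining $\mathcal R_\Omega$ together with $\delta R=\nabla R=0$ and the Bianchi identities. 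Then associativity of $\circ_\Pi$ (immediate from the exponential formula for $P$, since the $\partial/\partial\mathrm v^i$ commute) transports through $\sigma$ and $\tau_\Omega$. The remaining axioms from Definition \ref{Star} are degree checks on the leading terms: $1\star_\Omega 1=1$ since $\tau_\Omega(1)=1$; $B_0(f,g)=fg$ because at $\hbar$-order zero $\tau_\Omega(f)\equiv f$ modulo $\Omega^{\bullet(1)}$ and $\sigma$ kills the rest; and the antisymmetric part of $B_1$ reproduces $\I\{f,g\}$ by inspecting the first-order term of $\exp(\hbar P)$ against $P=\tfrac{\I}{2}\Pi^{kl}\partial_{\mathrm v^k}\otimes\partial_{\mathrm v^l}$, using that $\tau_\Omega(f)=f+\mathrm v^i\partial_i f+\cdots$ to lowest $\deg_{\mathrm v}$ order. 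That $B_n$ are bidifferential is clear from the explicit form of $\tau_\Omega$ as a geometric series in $\delta^{-1}(\nabla+\tfrac{\I}{\hbar}[\mathcal R_\Omega,\cdot])$, each term a differential operator in $f$.

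For the classification statement, one direction and then the other. First I would establish that $[\Omega]=[\Omega']$ implies equivalence: write $\Omega'-\Omega=\hbar\,\d A$ for $A\in\Gamma^\infty(T^*M)[[\hbar]]$ and build the intertwiner as a $\circ_\Pi$-inner automorphism $\exp(\tfrac{\I}{\hbar}\operatorname{ad}(h))$ for a suitable $h\in\Omega^{0(1)}(M,\mathcal SM)[[\hbar]]$ obtained by integrating a one-parameter family of Fedosov connections $\mathcal D_t$ interpolating between $\mathcal D_\Omega$ and $\mathcal D_{\Omega'}$; the function $h$ is found by solving $\dot{\mathcal R}_t=\mathcal D_t h$ order by order in $Deg$, which is solvable because the relevant cohomology is controlled by the Poincaré lemma $\delta\delta^{-1}+\delta^{-1}\delta+\sigma=\id$. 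Then $S:=\sigma\circ\exp(\tfrac{\I}{\hbar}\operatorname{ad} h)\circ\tau_\Omega$ is the desired equivalence, with $S(1)=1$ and $S=\id+O(\hbar)$. Conversely, if $\star_\Omega\sim\star_{\Omega'}$ I would extract the characteristic class: the second-order skew part of the star product, suitably normalised, represents the de Rham class $[\omega/\hbar + \Omega/\hbar^2]$ or rather $[\Omega]$ after subtracting the fixed symplectic piece, and equivalent star products have equal such classes. This can be seen by comparing the Hochschild $2$-cocycles $B_2$ and noting the equivalence changes them only by a coboundary plus an exact shift in the curvature term.

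The main obstacle I expect is the \emph{converse} direction of the classification — showing that equivalence of the star products forces $[\Omega]=[\Omega']$ in de Rham cohomology. The forward direction and the star-product axioms are essentially bookkeeping with the formal Weyl algebra and the Poincaré lemma, but the converse requires identifying an honest invariant of the equivalence class of $\star_\Omega$ (a truncated characteristic class in $\hbar H^2_{dR}(M)[[\hbar]]$) and verifying it is both well-defined on equivalence classes and computed by $[\Omega]$. The standard route is to show that any equivalence of Fedosov star products can be promoted to an isomorphism of Fedosov connections (up to gauge), so two cohomologous curvatures must arise; making this rigorous needs a careful induction in the $Deg$-filtration showing that the obstruction at each stage lies in a space that vanishes by the $\delta$-Poincaré lemma, together with a uniqueness argument for $\mathcal R_\Omega$ so that the class $[\Omega]$ is genuinely recovered and not just some representative. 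I would handle this by adapting the Fedosov/Neumaier argument: reduce to comparing $\mathcal D_\Omega$ and $\mathcal D_{\Omega'}$ after conjugation, show the difference of curvatures is $\mathcal D$-exact in the Weyl algebra, and then descend via $\sigma$ to get $\hbar$-exactness of $\Omega'-\Omega$ in ordinary de Rham cohomology.
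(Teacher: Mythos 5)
The paper does not actually prove this theorem: it states it and refers to Fedosov's original construction and to Gutt--Lichnerowicz for the classification, so there is no in-paper argument to compare against. Your proposal is a correct reconstruction of the standard proof. The associativity part is exactly right: the deformed Poincar\'e lemma makes $\tau_\Omega$ a bijection onto $\Omega^0(M,\mathcal{S}M)\cap\ker\mathcal{D}_\Omega$ with inverse $\sigma$, flatness $\mathcal{D}_\Omega^2=0$ plus the derivation property make the flat sections a $\circ_\Pi$-subalgebra, so $\tau_\Omega(f\star_\Omega g)=\tau_\Omega(f)\circ_\Pi\tau_\Omega(g)$ and associativity is inherited; the star-product axioms are the degree checks you describe. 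The forward direction of the classification (cohomologous $\Omega$ give equivalent products via a conjugation built from an interpolating family of Fedosov derivations) is likewise the standard Fedosov argument and your sketch of it is sound.

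The one place your sketch is genuinely thin is the converse, and within it the specific claim that ``the second-order skew part of $B_2$, suitably normalised, represents the class.'' As stated this is not an invariant: the skew-symmetric part of $B_2$ of an arbitrary star product is not a closed $2$-form on the nose, and under an equivalence transformation it changes by more than a coboundary unless one adds the correct lower-order correction terms. Making this rigorous requires either the full Deligne/Nest--Tsygan characteristic-class machinery (where the class is assembled from local $\nu$-Euler derivations, not read off from a single Taylor coefficient), or the Fedosov-specific route you also mention: lift any equivalence of $\star_\Omega$ and $\star_{\Omega'}$ to a fiberwise automorphism of the Weyl bundle conjugating $\mathcal{D}_\Omega$ into $\mathcal{D}_{\Omega'}$, then compare curvatures using the $\delta$-Poincar\'e lemma and the normalisation $\delta^{-1}\mathcal{R}_\Omega=0$ to conclude $\Omega'-\Omega$ is exact. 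The second route is the one that closes the gap; the ``read it off $B_2$'' heuristic should be dropped or replaced by a precise definition of the relative class.
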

It has been proved that on symplectic manifolds Fedosov's construction yields $\star$ products of all equivalence classes, the original proof of this classification result is due to Gutt and Lichnerowicz. One can also show that the Fedosov $\star$ product is hermitian if $\Omega\in\hbar\Gamma^\infty(\Lambda^{2}T^\ast{M})[[\hbar]]$ is real \cite{WaB}.

Regular Poisson manifolds equipped with a connection compatible with the Poisson tensor can be quantized via Fedosov as well, but the lack of compatible connections for arbitrary Poisson tensors is a problem for the construction to apply. Although we cannot apply the construction directly all known constructions of globalising the local Kontsevich $\star$ product or the Kontsevich formality map use in some sense Fedosov constructions as a key stone in the globalisation of the local result and we refer the reader to \cite{D}, Dolgushev's elegant approach to global formality with more references. There is also an even more Fedosov like globalisation procedure for $\star$ products on any Poisson manifold due to Cattaneo, Felder and Tomassini \cite{CFT}.

Bezrukavnikov and Kaledin showed that the Fedosov quantization can also work under some assumptions in the algebraic setting \cite{Ka1} and in positive characteristic \cite{Ka2}.
\section{Example: Fedosov quantization of $\mathbb{D}^n$ and ${\mathbb{C}}{\mathbb{P}}^n$}
The following example is a result of the author's diploma thesis. Here a lot of the terms appearing in the general formula vanish. We refer the reader to \cite{J} for the detailed proof where we use the adapted Fedosov quantization of K\"ahler manifolds as described in \cite{BoWaWT} and \cite{NeDiss}:

It is well-known \cite{KaN} that the curvature tensor of a simply connected K\"ahler manifold $M$ of constant holomorphic sectional curvature $\mathrm{C}$ satisfies
$$R_{k\overline{l}i\overline{j}}=-\mathrm{C}(g_{k\overline{l}}g_{i\overline{j}}+g_{k\overline{j}}g_{\overline{l}i})$$
and $M$ is either holomorphically isometric to ${\mathbb{C}}^n$, $\mathbb{D}^n$ or to the complex projective space
${\mathbb{C}}{\mathbb{P}}^n$ depending on $\mathrm{C}=0$, $\mathrm{C}<0$ or $\mathrm{C}>0$ respectively. From this formula it is clear that the curvature in this case is covariantly constant, {\em i.e.}
$\nabla_X{R}=0\quad\forall{X}\in\Gamma^{\infty}(TM)$.
By calculation for any formal power series $F(\hbar)\in\mathbb{C}[[\hbar]]$ the Fedosov differential resulting out of the construction for
$\Omega=-4\hbar{F}(\hbar)\omega$
where $\omega$ denotes the symplectic form defined with the complex structure $J$ by $\omega(x,y)=g(x,Jy)$, is
\begin{equation}\label{FDCC}\mathcal{D}=\nabla-\frac{1}{2\hbar}\Biggr[\sum_{n=0}^{\infty}(-1)^{n}\Biggr(\frac{\sum_{l=0}^{n}(-1)^{l}\binom{n}{l}\sqrt{1-4\hbar(F(\hbar)+{l}\mathrm{C}})}{{n!}(2\hbar)^{n}}\Biggr)g^{n}\rho,\quad\;\cdot\quad\;\Biggr]\;\end{equation}
where $\rho=g_{i\overline{j}}(\mathrm{v}^{i}\overline{{\d}z}^j-\overline{\mathrm{v}}^j{\d}z^{i})$ and
$g^{n}\rho=\underbrace{g...{g}}_{n}\cdot\rho$
with $g=g_{i\overline{j}}\mathrm{v}^i\overline{\mathrm{v}}^j$.

We split $\nabla$ and $\delta^{-1}$ in its purely holomorphic and purely anti-holomorphic part and as proved in \cite{NeDiss} we only need the purely holomorphic part $\pi_{z}\tau$ and the purely anti-holomorphic part $\pi_{\overline{z}}\tau$ of the Fedosov taylor series $\tau$ to construct the $\star$ products of Wick type. With \ref{FDCC} we yield for the projections $\pi_{z}\tau$ and $\pi_{\overline{z}}\tau$ the equations
$$\pi_{z}\tau=\sum_{n=0}^{\infty}\Biggr(\prod_{l=0}^{n-1}\frac{1}{\sqrt{1-4\hbar(F(\hbar)+{l}\mathrm{C})}}\Biggr)(\delta_{z}^{-1}\nabla_{z})^{n}$$
$$\pi_{\overline{z}}\tau_=\sum_{n=0}^{\infty}\Biggr(\prod_{l=0}^{n-1}\frac{1}{\sqrt{1-4\hbar(F(\hbar)+{l}\mathrm{C}})}\Biggr)(\delta_{\overline{z}}^{-1}\nabla_{\overline{z}})^{n}\;$$
Finally with this formulas it is quite obvious that on spaces with constant sectional curvature $\mathrm{C}$ the Fedosov Wick type $\star$ product for $\Omega=-4\hbar{F}(\hbar)\omega$ can be written as
$$f\star{g}=\sum_{n=0}^{\infty}\frac{(2\hbar)^n}{n!}\Biggr(\prod_{l=0}^{n-1}\frac{1}{1-4\hbar(F(\hbar)+{l}\mathrm{C})}\Biggr)\cdot{P^{n}}\Bigr((\delta_{z}^{-1}\nabla_{z})^{n}f\otimes(\delta_{\overline{z}}^{-1}\nabla_{\overline{z}})^{n}g\Bigr)$$

\section{Fedosov Taylor expansion and the $\exp$ map}
The works \cite{W}, \cite{CF} and  \cite{G} pointed out the connection between formal exponential maps and Fedosov Taylor expansions of coordinate functions. More precisely the classical Fedosov Taylor expansions of local coordinate functions generate the exponential map. We will show a direct approach to a formal solution that to the best of our knowledge is missing in the literature. By formal we mean that we assume that the solution of the geodesic equations in our local coordinates is analytic in the time variable. Our proof just uses self-consistency and the inductive step provides the calculus. We consider this formula from a constructive point of view and will also demonstrate how the general construction works by considering two generic examples. 

Let $M$ be an analytic manifold and let us denote by $\nabla$ a torsion-free connection. The geodesic equations are a system of differential equations, in local coordinates they read
$$\frac{{\d}^2{\phi}}{{\d}t^2}^i+\Gamma^{i}_{c_1{c_2}}({X})\frac{{\d}{\phi}}{{\d}t}^{c_1}\frac{{\d}{\phi}}{{\d}t}^{c_2}=0\quad\forall{i=1,...,\dim{M}}$$
Equivalent the curve $\gamma(t)$ corresponding to $\phi(t)$ satisfies 
the parallel transport equation
$\nabla_{\gamma'(t)}\gamma'(t)=0$
and geodesics are the generalisation of the notion ``{straight line}" to curved spaces. If
$\Gamma^{i}_{k{l}}=\frac{1}{2}g^{im}\left(\partial_l{g}_{km}+\partial_k{g}_{lm}-\partial_m{g}_{kl}\right)$
are the Christoffel symbols of the Levi-Civita connection $\nabla$ of a Riemann metric $g$ this equations are the Euler-Lagrange equations
$\frac{{\d}}{{\d}t}\frac{\partial{L}}{\partial{\dot{x}^{i}}}-\frac{\partial{L}}{\partial{x^{i}}}=0\quad\forall{i=1,\cdots,\dim{M}}$
corresponding to the variation problem
$\delta\int{{\d}{t}{g_{c_1{c_2}}}\bigr({\phi(t)}\bigr)\frac{{\d}{\phi}}{{\d}t}^{c_1}(t)\frac{{\d}{\phi}}{{\d}t}^{c_2}(t)}\stackrel{!}{=}0$.
In general relativity a fundamental postulate states that objects in free motion move along geodesics in a space time determined by the Einstein-Hilbert equations.

Its well-known that we can do a reduction of the geodesic equations to a first order system of differential equations. In this reduced situation, under mild conditions on the Christoffel symbols, the Picard-Lindel\"of theorem states locally the existence of a unique solution ${\phi}(t)$ for small $t$ with
${\phi}^i(0)=x^i$ and $\frac{d{\phi^i}(0)}{dt}=\mathrm{v}^i$ and uniqueness allows to glue together geodesics in an overlap of two coordinate charts.

Let us denote by
${\phi}(x,\mathrm{v}):={\phi}(x,\mathrm{v},1):{U}\rightarrow{M}$ the formal exponential map, where $U$ is a neighbourhood of the zero section of the tangent bundle ${T_x{M}}$, by formal we again mean the assumption that this map is analytic in $\mathrm{v}$. A few steps and the Picard-Lindel\"of integral iteration gets difficult to handle in general because one has to integrate the Christoffel symbols.

It is well-known that in Riemann normal coordinates the solution of the geodesic equations are just straight lines. Quite obviously the Taylor expansion should only depend on the start point and start velocity because of the Picard-Lindel\"of theorem, more precise the solutions of the geodesic equations satisfy
${\phi}(x,\mathrm{v},\lambda{t})={\phi}(x,\lambda\mathrm{v},t)$
because of the invariance of the geodesic equations under affine reparametrisations. 

Of course the Christoffel symbols are not tensors, but the geodesic equations provide some sort of tensor calculus that does not care about the bound variable in the geodesic equations: Let on smooth functions with a finite number of upper and lower indices for variables $\in\lbrace{1,\cdots,\dim{M}}\rbrace$ of the form $a^{i}_{{v_n}\cdots{v_1}}$ a lower index raising operation $\nabla$ be defined by
$$\nabla_{v_{n+1}}a^{(i)}_{{v_n}\cdots{v_1}}:=\partial_{v_{n+1}}a^{(i)}_{{v_n}\cdots{v_1}}-\sum_{l=1}^{n}\Gamma^{c}_{{v_{n+1}}v_{l}}a^{(i)}_{{v_n}\cdots{v_{l+1}}c{v_{l-1}}\cdots{v_1}}$$
with the usual covariant tensor calculus notation in the lower index, but without covariant differentiations with respect to the upper index denoted by $(i)$ and as usual no respect to indices of Einstein contractions. With this notation its possible to write $\phi$ as a contraction with the by iteration increased number of lower indices, to the best of our knowledge this formula never before appeared in the literature:
\begin{proposition}\label{TH}
Let $\nabla$ be a torsion-free connection and $\phi$ be the formal exponential map. We have
\begin{align*}{\phi}^{i}(x,\mathrm{v})=&\;x^{i}+\mathrm{v}^{i}-\sum_{n=0}^{\infty}\frac{1}{(n+2)!}\hspace{+0.00cm}\nabla^{{n}}_{c_{n+2}\cdots{c}_{3}}\Gamma^{(i)}_{c_{2}c_{1}}\hspace{-0.0cm}(x)\hspace{+0.15cm}{{\mathrm{v}^{c_{n+2}}}\hspace{-0.2cm}\dots{\mathrm{v}^{c_1}}}\end{align*}
for $\forall{i=1,\cdots,\dim{M}}$.
\end{proposition}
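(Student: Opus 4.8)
The plan is to verify the claimed closed form by the method suggested in the text: compute the Taylor coefficients of the geodesic flow directly from the geodesic equation, and independently compute the Taylor coefficients of the proposed right-hand side, then match them order by order in $\mathrm{v}$. First I would set $\psi(t) := \phi(x,\mathrm v, t)$ and use the homogeneity $\phi(x,\mathrm v,\lambda t)=\phi(x,\lambda\mathrm v,t)$ to reduce the whole problem to the single Taylor expansion of $t\mapsto\psi(t)$ in $t$ around $t=0$, since the degree-$m$ part in $\mathrm v$ of $\phi(x,\mathrm v)=\phi(x,\mathrm v,1)$ equals $\frac{1}{m!}\frac{\d^m}{\d t^m}\big|_0\psi(t)$ with each $\frac{\d\phi}{\d t}(0)=\mathrm v$ substituted. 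So the statement is equivalent to the identity
\[
\frac{\d^{n+2}}{\d t^{n+2}}\Big|_{0}\phi^i(x,\mathrm v,t)
= -\,\nabla^{n}_{c_{n+2}\cdots c_3}\Gamma^{(i)}_{c_2 c_1}(x)\,\mathrm v^{c_{n+2}}\cdots\mathrm v^{c_1}
\qquad (n\ge 0),
\]
together with $\phi^i(0)=x^i$, $\dot\phi^i(0)=\mathrm v^i$, $\ddot\phi^i(0)=-\Gamma^i_{c_2c_1}(x)\mathrm v^{c_2}\mathrm v^{c_1}$, which is just the $n=0$ case read off from the geodesic equation.

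The main step is an induction on $n$. Differentiating the geodesic equation $\ddot\phi^i=-\Gamma^i_{ab}(\phi)\dot\phi^a\dot\phi^b$ a total of $n$ times with respect to $t$ and evaluating at $t=0$ produces $\phi^{(n+2),i}(0)$ as $-\partial_{d_1}\cdots\partial_{d_n}\Gamma^i_{ab}(x)\,\mathrm v^{d_1}\cdots\mathrm v^{d_n}\mathrm v^a\mathrm v^b$ plus correction terms coming from the chain rule hitting the $\phi$ inside $\Gamma$ more than once and from the Leibniz rule distributing derivatives onto the two $\dot\phi$ factors — and crucially all such lower-order corrections already involve $\ddot\phi,\dddot\phi,\dots$ evaluated at $0$, i.e.\ the previously computed quantities $\Gamma,\nabla\Gamma,\dots$ The combinatorial heart of the argument is to check that, when one collects these corrections together with the bare partial-derivative term, the result reassembles exactly into the fully symmetrised iterated covariant derivative $\nabla^n_{c_{n+2}\cdots c_3}\Gamma^{(i)}_{c_2c_1}$ — that is, the Christoffel correction terms in the definition of $\nabla$ (acting only on the lower indices, not on the upper $(i)$) precisely account for all the chain-rule and product-rule corrections, because each $\ddot\phi^{c}(0) = -\Gamma^c_{\bullet\bullet}\mathrm v\mathrm v$ contributes a $-\Gamma^c_{\bullet\bullet}$ factor of exactly the shape appearing in $\nabla$. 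Torsion-freeness is used to symmetrise $\Gamma^c_{ab}$ in its lower indices freely; the factor $\frac1{(n+2)!}$ is the Taylor normalisation.

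The step I expect to be the genuine obstacle is bookkeeping the combinatorics of the higher chain rule (Faà di Bruno): when $\frac{\d^n}{\d t^n}$ hits $\Gamma_{ab}(\phi(t))\dot\phi^a\dot\phi^b$, one gets a sum over set partitions recording how many derivatives land on $\Gamma$ versus on each $\dot\phi$, with multinomial coefficients, and one must recognise this sum, after substituting the inductive expressions for $\phi^{(k)}(0)$, as a single clean covariant object. I would organise this by an auxiliary lemma: define $T^{(i)}_{v_{n}\cdots v_1}(t) := \nabla^n_{v_n\cdots v_1}$ applied to the appropriate tensorial quantity along the geodesic and show $\frac{\d}{\d t}$ of a contraction $a^{(i)}_{v_n\cdots v_1}\dot\phi^{v_n}\cdots\dot\phi^{v_1}$ equals the contraction of $\nabla_{v_{n+1}}a^{(i)}_{v_n\cdots v_1}$ against one more $\dot\phi$ (this is exactly where the definition of the lower-index-only $\nabla$ is tailored so that the $\Gamma$-terms from $\frac{\d}{\d t}\dot\phi^{v_l}=\ddot\phi^{v_l}=-\Gamma^{v_l}_{\cdots}\dot\phi\dot\phi$ are absorbed, using torsion-freeness and that $\phi$ is a geodesic). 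Granting that differentiation lemma, the proposition follows by differentiating the contraction $-\nabla^n\Gamma^{(i)}_{c_2c_1}\dot\phi^{c_{n+2}}\cdots\dot\phi^{c_1}$ and comparing with the $t$-derivative of $\ddot\phi^i(t)$, closing the induction; the two generic examples ($2$-sphere and Poincaré half-plane) then serve as a sanity check of the formula.
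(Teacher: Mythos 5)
Your proposal is correct and follows essentially the same route as the paper: the paper's inductive step is precisely your auxiliary differentiation lemma, namely that $\tfrac{\d}{\d t}$ of the contraction $-\nabla^{n-2}_{c_n\cdots c_3}\Gamma^{(i)}_{c_2c_1}\dot\phi^{c_n}\cdots\dot\phi^{c_1}$ reproduces the lower-index-only $\nabla_{c_{n+1}}$ because each $\ddot\phi^{c}=-\Gamma^{c}_{\bullet\bullet}\dot\phi\dot\phi$ supplies exactly the Christoffel correction terms, after which one evaluates at $t=0$ and sums the Taylor series. The only cosmetic difference is that you phrase the bookkeeping via Fa\`a di Bruno before collapsing it into that lemma, whereas the paper runs the induction directly on the closed form of $\tfrac{\d^n\phi}{\d t^n}$ as a function of $t$.
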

\begin{proof}[Proof:] We assume that the unique solution of the geodesic equations is analytic in the time variable. We have
$$\frac{{\d}^3{\phi}}{{\d}t^3}^i=-\nabla_{c_3}\Gamma^{(i)}_{{c_2}{c_1}}\frac{{\d}{\phi}}{{\d}t}^{c_3}\frac{{\d}{\phi}}{{\d}t}^{c_2}\frac{{\d}{\phi}}{{\d}t}^{c_1}$$
$\forall{i=1,\cdots,\dim{M}}$ and the claim is that 
$$\frac{{\d}^n{\phi}}{{\d}t^n}^i=-\nabla_{{c_n}\cdots{c_3}}^{{n-2}}\Gamma^{(i)}_{{c_2}{c_1}}\frac{{\d}{\phi}}{{\d}t}^{c_n}\hspace{-0.2cm}\dots\frac{{\d}{\phi}}{{\d}t}^{c_1}$$
$\forall{i=1,\cdots,\dim{M}}$ if $n>1$ holds.

The inductive step is analogous to the calculation for $n=3$ and the notation in proposition \ref{TH}, where we respect the order of $\nabla$ operations, is motivated by this inductive step:
\begin{align*}\frac{{\d}^{n+1}{\phi}}{{\d}t^{n+1}}^i=-\Biggr(&\partial_{c_{n+1}}\nabla_{{c_n}\cdots{c_1}}^{{n}}\Gamma^{(i)}_{{c_2}{c_1}}-\sum_{l=3}^{n}\Gamma^{c}_{{c_{n+1}}c_l}\nabla_{{c_n}\cdots{c_{l+1}}c{c_{l-1}}\cdots{c_3}}^{{n}}\Gamma_{c_2c_1}^{(i)}\\&-\Gamma^{c}_{c_{n+1}c_2}\nabla_{{c_n}\cdots{c_3}}^{{n}}\Gamma_{cc_1}^{(i)}-\Gamma^{c}_{c_{n+1}c_1}\nabla_{{c_n}\cdots{c_3}}^{{n}}\Gamma_{c_2c}^{(i)}\Biggr)\frac{{\d}{\phi}}{{\d}t}^{c_{n+1}}\hspace{-0.4cm}\dots\frac{{\d}{\phi}}{{\d}t}^{c_1}\end{align*}
for $\forall{i=1,\cdots,\dim{M}}$, because self-consistency allows to substitute all second order derivatives by first order derivatives and Christoffel symbols and we can also relabel contraction indices. Notice that the proof has an analogy with \cite{MA} and\end{proof}
Notice only the totally symmetric part
$$\frac{1}{(n+2)!}\sum_{\sigma\in{S_{n+2}}}\nabla_{c_{\sigma(n+2)}\cdots{c}_{\sigma(3)}}^{{n}}\Gamma^{(i)}_{c_{\sigma(2)}c_{\sigma(1)}}$$
contributes, because of the contractions with the velocities and relabelling.

If one rewrites the equations as a Hamiltonian flow on the tangent bundle the method of Lie transforms works quite similar, as pointed out to us by A. Weinstein.

As proved in \cite{W} the classical Fedosov Taylor expansions of local coordinate functions $x^i$ regenerate the exponential map. This fact was stated there without a formula for geodesics and the proof in \cite{W} is quite involved. We will show this here in a different way by computation, the correspondence with \ref{TH} finally yields the identification of the Fedosov taylor expansion of local coordinate functions with the exponential map:

For the definition of the classical Fedosov-Dolgushev Taylor expansion we refer the reader to \cite{W}. The computation of the classical Fedosov Taylor expansion is
\begin{align*}\tau(x^i)=&\sum_{n=0}^{\infty}\frac{1}{n!}\nabla_{c_n\cdots{c}_1}(x^i)\mathrm{v}^{c_n}\cdots\mathrm{v}^{c_1}\\&=x^i+\mathrm{v}^i-\sum_{n=0}^{\infty}\frac{1}{(n+2)!}\nabla_{c_{n+2}\cdots{c}_3}\Gamma^{(i)}_{c_2c_1}\mathrm{v}^{c_{n+2}}\cdots\mathrm{v}^{c_1}\end{align*}
The first equals in this computation is due to a formula contained in \cite{MA} for the classical Fedosov-Dolgushev Taylor expansion of functions \cite{D}, \cite{W}. The formula in \cite{MA} uses an analog of the proof of proposition \ref{TH} and now we can argue also as follows: The crucial difference between the classical and the quantum Fedosov Taylor expansions is that the classical Taylor series are only derivations of the natural product $\cdot$ induced by formal power series multiplication and the wedge product. As usual we define a derivation $\delta^{\ast}$ of the natural product $\cdot$ by
$\delta^{\ast}:=\mathrm{v}^{i}{i_{a}}(\partial_{i})$ and rewrite
$$\delta^{-1}\alpha:=\begin{cases}\frac{1}{s+k}\delta^{\ast}\alpha&\text{if}\;\deg_{s}\alpha=s,\;\deg_{a}\alpha=k\;\text{and}\;s+k\neq0\\0&\text{if}\;\deg_{a}\alpha=0\end{cases}$$
We obviously have $\delta^{-1}\tau(f)=0$ for a function $f\in{C}^\infty(M)$ and because we have put the normalization condition $\delta^{-1}\mathcal{R}=0$ the term containing $\delta^{-1}\mathcal{R}\tau(f)$ cancels in the fixed point equation
$\tau(f)=f+\delta^{-1}\left(\nabla+{\mathcal{R}}\right)\tau(f)$
{\em i.e.} we have
$\tau(f)=\left(1-\delta^{-1}\nabla\right)^{-1}\tau(f)$.
Here the reader should be aware that this formula is not true for the quantum Fedosov Taylor series for symplectic manifolds because $\delta^{\ast}$ is not a derivation of $\circ_\Pi$. The second equality is just a computation of the first orders of the locally defined expansion in the special case of coordinate functions. By comparison we see that the classical Fedosov Taylor expansion of coordinate functions coincides with the formula for the exponential map, {\em i.e.}
$$\tau(x^i)=\phi^i(x,\mathrm{v})$$
\section{Example calculations of some geodesics}
 On the one hand in specific examples the choice of coordinates is of course essential to make the formula computable and on the other hand the construction is tautological if one computes in geodesic normal coordinates.

Unlike the other calculations we found in the literature a construction of geodesics with theorem  \ref{TH} does not essentially use integration of combinations of the equations, this seems to be a slight advantage of our in some sense more recursive approximation: In general it is not {\em a priori} clear how to use integration methods, if one can presuppose invertible relations and it needs intuition to arrange the equations in a promisingly way, compute the integrals and set integration constants. 

\subsection*{$2$-sphere} We choose spherical coordinates  $x^1:=\theta$ an $x^2:=\varphi$. Explicit the metric coefficients and Christoffel symbols are given by the formulas
$$g_{kl}=\delta_{k}^{1}\delta_{l}^{1}+sin^2(\theta)\delta_{k}^{2}\delta_{l}^{2}$$
$$\Gamma^{k}_{ij}=-\sin(\theta)\cos(\theta)\delta_{1}^{k}\delta_{j}^{2}\delta_{i}^{2}+\frac{\cos(\theta)}{\sin(\theta)}\delta_{2}^{k}\left(\delta_{i}^{2}\delta_{j}^{1}+\delta_{j}^{2}\delta_{i}^{1}\right)$$
Hence the only Christoffel symbols that do not vanish identically are
$$\Gamma^{1}_{22}(\theta)=-\sin(\theta)\quad\text{and}\quad\Gamma^{2}_{12}=\Gamma^{2}_{21}(\theta)=\frac{\cos(\theta)}{\sin(\theta)}$$

By induction because of $\Gamma^{1}_{11}=\Gamma^{2}_{11}=0$ we have
$\nabla_1^{{n+1}}\Gamma^{(i)}_{11}=\partial_1\left(\nabla^{{n}}\Gamma\right)^{(i)}_{1...1}=0$.
With this result proposition \ref{TH} for $\phi^1(0)=\theta$, $\phi^2(0)=\varphi$ and $\mathrm{v}_1=1$, $\mathrm{v}_2=0$ implies the time evolution
$\phi^2(t)=\varphi$ and ${\phi^1}(t)=\theta+t$
and the geometric shape of all geodesics is clear in this example by symmetry.

\subsection*{Poincar\'e half-plane}
We choose Cartesian coordinates $x_1$ an $x_2$. The metric and Christoffel symbols are
$$g_{kl}=\left({1}/{x_2}\right)^2(\delta_{k}^{1}\delta_{l}^{1}+\delta_{k}^{2}\delta_{l}^{2})$$
$$\Gamma^{k}_{ij}=\frac{-1}{x_2}\Bigr(\delta_{2}^{k}\left(-\delta_{j}^{1}\delta_{i}^{1}+\delta_{j}^{2}\delta_{i}^{2}\right)+\delta_{1}^{k}(\delta_{i}^{1}\delta_{j}^{2}+\delta_{j}^{1}\delta_{i}^{2})\Bigr)$$
Hence the only Christoffel symbols that do not vanish identically are
$$\Gamma^{1}_{21}=\Gamma^{1}_{12}=\Gamma^{2}_{22}=-\Gamma^{2}_{11}=-{1}/{x_2}$$

The Cartesian coordinates are a good choice to calculate the vertical, with metric unit speed at $\phi^1(0)=x_1$, $\phi^2(0)=x_2$ starting geodesics. In this notation this vertical emanating geodesics correspond to $\mathrm{v}_1=0$, $\mathrm{v}_2={x_2}$. By induction, because of $\Gamma^{1}_{22}=0$ we have
$\nabla_2^{{n+1}}\Gamma^{(1)}_{22}=\partial_2\left(\nabla_2^{{n}}\Gamma^{(1)}_{22}\right)-(n+2)\Gamma^{2}_{{{2}}{2}}\nabla_2^{{n}}\Gamma
^{(1)}_{22}=0$.
This implies that
$\phi^1(t)=x_1$
is constant. The same argument shows that we have the recursion
$$\Gamma^{{2}}_{2{2}}=-\frac{1}{x_2}\quad\text{and}\quad\nabla_2^{{n+1}}\Gamma^{(2)}_{2\cdots2}=\partial_2\left(\nabla_2^{{n}}\Gamma^{(2)}_{22}\right)-(n+2)\Gamma^{2}_{{{2}}{2}}\nabla_2^{{n}}\Gamma^{(2)}_{22}$$
This recursion is solved by
$\nabla_2^{{n}}\Gamma_{22}^{(2)}=-{1}/{x_2}^{n+1}$
and because of the contractions with $\mathrm{v}_2^{n+2}=x_2^{n+2}$ proposition \ref{TH} leads to
$\phi^2(t)=x_2\exp(t)$
hence we get the unique speed parametrisation of the vertical geodesics.

The calculation of all geodesics in Cartesian coordinates is quite complicated. With polar coordinates instead it's not very difficult to figure out the geometric shape of horizontal geodesics, but the more conceptual approach seems to be the following \cite{Mi}: It is a well-known fact that local isometries map geodesics to geodesics. Moreover it is also well-known \cite{Mi} that for all $2\times2$ matrices with determinant $1$, {\em i.e.}
$$\left(\begin{matrix}a & b \\ c & d \end{matrix}\right)\in{SL(2,\mathbb{R}})$$
the map that sends $(x_1,x_2)$ to
$$\Biggr(\frac{ac\left(x_1^2+x_2^2\right)+(ad+bc)x_1+db}{\left(cx_1+d\right)^2+\left(cx_2\right)^2},\frac{(ad-bc)x_2}{\left(cx_1+d\right)^2+\left(cx_2\right)^2}\Biggr)$$
is a local isometry, in other words its pull back respects the metric tensor as a fixed point.

\chapter{\textit{Kontsevich's quantization of Poisson manifolds}}\label{GenQua}
In the following sections we outline the construction of $\star$ products that applies on any Poisson manifold due to Kontsevich. Kontsevich has constructed in~\cite{K} a universal formula for the deformation quantization of a general Poisson manifold, we proceed here in the same way and first discuss the local case of a general Poisson structure $\Pi$ on $X=\mathbb R^d$, or on an open subset thereof.

Graphs are an essential ingredient in Kontsevich's construction and the definition of the graphs we will need for Kontsevich's formality theorem is:
\begin{definition} An admissible graph $\Gamma\in\mathcal G_{n,m}$ is an oriented graph with labels such that
\begin{enumerate}
\item The set of vertices $V_\Gamma$ is $\{1,\cdots,n\}\sqcup\{\overline{1},\cdots\overline{m}\}$ with $n,m\in\mathbb{N}$ and $2n+2-m\geq0$. Vertices of the set $\{1,\cdots,n\}$ are called vertices of type I, vertices from $\{\overline{1},\cdots\overline{m}\}$ are called vertices of type II.
\item Every edge $(v_1,v_2)$ starts at a vertex of type I.
\item There are no short loops, {\em i.e.} no edges of the type $(v,v)$.
\item For every vertex $k\in\{1,\cdots,n\}$ of type I, the set of edges
$$\text{Star}(k)=\{(v_1,v_2)\in{E}_\Gamma\vert{v}_1=k\}$$
starting from $k$, is labeled by symbols $(e_k^1,\cdots,e_k^{\#\text{Star}(k)})$.
\end{enumerate}
\end{definition}
In the case $X=(\mathbb{R}^d,\Pi)$ Kontsevich's local formality theorem $[1]$ states as a corollary that we can define a $\star$ product by
$$f_1\star f_2=f_1 f_2+\sum_{n\geq 1}\frac{\hbar^n}{n!}B_n(f_1,f_2)$$
where
$B_n(\cdot,\cdot)=\sum_{\Gamma\in\mathcal G_{n,2}}w_\Gamma B_\Gamma(\cdot,\cdot)$
is a certain weighted sums over admissible graphs of type $(n,2)$. To an admissible graph $\Gamma$ of type $(n,2)$ with the property that every vertex of type I has exactly two departing edges, Kontsevich associates $i)$ a bidifferential operator $B_\Gamma(\cdot,\cdot)$ and $ii)$ an integral weight $w_\Gamma$ to define $B_n(\cdot,\cdot)=\sum_{\Gamma\in\mathcal G_{n,2}}w_\Gamma B_\Gamma(\cdot,\cdot)$ as the mentioned weighted sum. We will turn to the weights in the next section and give the reader a first intuitive definition of the operators in the Kontsevich $\star$ product:

There are (up to isomorphisms) three admissible graphs $(2,2)$ where every vertex of type I has exactly two departing edges, namely first the two graphs
\begin{align}\label{Graph1}
\SelectTips{cm}{}
\xymatrix{
*\txt{}&*\txt{}&z_1\ar@/_/[dl]\ar@/^/[d]&z_2\ar@/_/[dll]\ar@/^/[dl]\\
*\txt{$-\infty\;$}&r_1\ar[l]\ar@{-}[r]_{\substack{{B_\Gamma=}\\{\Pi^{i_1l_1}\Pi^{i_2l_2}\partial^2_{i_1i_2}\otimes\partial^2_{l_1l_2}}}}&r_2\ar[r]&*\txt{$\;+\infty$} }\quad\xymatrix{
*\txt{}&*\txt{}&z_1\ar@/_/[dl]\ar@/^/[d]&z_2\ar@/_/[l]\ar@/^/[dl]\\
*\txt{$-\infty\;$}&r_1\ar[l]\ar@{-}[r]_{\substack{{B_\Gamma=}\\{\partial_{i_2}\Pi^{i_1l_1}\Pi^{i_2l_2}\partial_{i_1}\otimes\partial^2_{l_1l_2}}}}&r_2\ar[r]&*\txt{$\;+\infty$}}\end{align}
where we have written the corresponding bidifferential operator under the previous graph and the following graph to give the reader a first intuitive definition of the operators in the Kontsevich $\star$ product \ref{KSTAR}. 
The bidifferential operators corresponding to this two graphs appear in the formula in~\cite[Subsubsection 1.4.2]{K}. The third wheel like graph of type $(2,2)$, that does not contribute to the formula in~\cite[Subsubsection 1.4.2]{K}, is
\begin{align}\label{Graph2}
\SelectTips{cm}{}
\hspace{-1cm}\xymatrix{*\txt{}&*\txt{}&z_1\ar@/_/[dl]\ar@/^/[r]&z_2\ar@/^/[l]\ar@/^/[dl]\\*\txt{$-\infty\;$}&r_1\ar[l]\ar@{-}[r]_{\substack{{B_\Gamma=}\\{\partial_{i_2}\Pi^{i_1l_1}\partial_{i_1}\Pi^{i_2l_2}\partial_{l_1}\otimes\partial_{l_2}}}}&r_2\ar[r]&*\txt{$\;+\infty$}}\end{align}
In the following we will explain how the integral weights are constructed and in the appendix we also give another proof of the quite well-known fact that the weight of the last pictured graph actually does not vanish. 
\section{The Kontsevich integral weights}
\subsection*{Manifolds with corners}
We need a brief memento of manifolds with corners: A compact $d$ dimensional manifold with corners $X$ is a compact differentiable manifold in the usual sense whose local charts are diffeomorphic to $U_{p,q}=({\mathbb{R}_+})^p\times\mathbb{R}^q$ where $\mathbb{R}_+=\{r\in{\mathbb{R}}, 0\leq{r}\}$ and  $0\leq{p,q}\in\mathbb{N}$ with $p+q=d$.

There is an action of $\mathfrak S_p\times\mathfrak S_q$ on $U_{p,q}$, where $\mathfrak S_n$ denotes the group of permutations of n elements and we consider transition diffeomorphisms $\phi$ of $U_{p,q}$ which preserve the boundary in the sense 
$\phi_i(x_1,\cdots,x_{i-1},0,x_{i+1},\cdots,x_d)=0$
for $i=1,\dots,p$. The set $U_{p,q}$ admits a boundary stratification into boundary strata of codimension $1,2,3,\cdots$. For instance the boundary of codimension $1$ is
$\partial{U}_{p,q}=\cup_{i=1}^{p}({\mathbb{R}_+})^{i-1}\times\{0\}\times({\mathbb{R}_+})^{p-i}\times\mathbb{R}^q$
and ${\mathbb{R}_+}^{i-1}\times\{0\}\times{\mathbb{R}_+}^{p-i}\times\mathbb{R}^q$ is up to permutations $U_{p-1,q}$. This stratification is preserved by transition isomorphisms and therefore the manifold $X$ admits a boundary stratification
$X\supseteq\partial{X}\supseteq\partial^2{X}\supseteq\cdots$
and the orientation on $X$ induces orientations on all boundary strata.

\subsection*{Compactified configuration spaces}
We consider the configuration space $C_{n,m}^+$, resp. $C^+_n$, of $n$ distinct points in $\mathbb H$ and $m$ distinct points on $\mathbb R$, resp.\ $n$ distinct points in $\mathbb C$, modulo the action of the group $\mathbb R^+\ltimes\mathbb R$, resp.\ $\mathbb R^+\ltimes\mathbb C$, acting component wise by rescalings and translations. In formulas we first denote the product of the configuration space of the upper half-plane $\mathbb H$ with the configuration space of the real line $\mathbb{R}$ by
$$\mathrm{Conf}_{n,m}\hspace*{-0.1cm}=\{(z_1,\cdots,z_n,r_1,\cdots,r_m){\mid}z_i\in\mathbb{H},r_k\in\mathbb{R},z_i\neq{z_j},r_i\neq{r_j}\;\text{if}\;i\neq{j}\}$$
By $\mathbb R^+\ltimes\mathbb R$ we denote the group of orientation-preserving transformations of the real line
$$\mathbb R^+\ltimes\mathbb R={\{z{\mapsto}p\cdot{z}+q\mid{p,q\in\mathbb{R}, p>0}}\}$$
The condition $2n+m\geq2$ ensures that the action of $\mathbb R^+\ltimes\mathbb R$ on $\mathrm{Conf}_{n,m}$ is free and we define ${C}_{n,m}$ as the quotient space
$${C}_{n,m}=\mathrm{Conf}_{n,m}\slash\mathbb R^+\ltimes\mathbb R$$

Respectively we define
$$\mathrm{Conf}_{n}\hspace*{-0.1cm}=\{(z_1,\cdots,z_n){\mid}z_i\in\mathbb{C},,z_i\neq{z_j}\;\text{if}\;i\neq{j}\}$$
and use the notation
$$\mathbb R^+\ltimes\mathbb C={\{z{\mapsto}p\cdot{z}+q\mid{p\in\mathbb{R}^+, q\in\mathbb C}}\}$$
to define ${C}_{n}$ as the quotient space
$${C}_{n}=\mathrm{Conf}_{n}\slash\mathbb R^+\ltimes\mathbb C$$

The two spaces $C_{n,m}$ and $C_n$ admit both compactifications $C_{n,m}^+$ and $ C_n^+$ {\em \`a} la Fulton-MacPherson: These are smooth, oriented manifolds with corners. $C_{n,m}^+$, resp.\ $C_n^+$, of dimension $2n+m-2$, resp.\ $2n-3$ (whence, we assume $2n+m-2\geq 0$, resp.\ $2n-3\geq 0$). Observe that $ C_2\cong S^1$, $ C_{0,2}^+\cong\{0,1\}$, $ C_{1,0}^+\cong \{i\}$ and $\mathcal C_{1,1}^+\cong [0,1]$.
We may as well consider $C_{A,B}^+$ and $C_A^+$ for $A$ a finite set and $B$ a finite, ordered set.

We are interested in the combinatorics of the boundary strata of codimension $1$ of the boundary stratification of $ C_{n,m}^+$, which are of two types:
\begin{itemize}\label{Boundary}
\item[S1)] There is a subset $A$ of $[n]=\{1,\dots,n\}$, $2\leq |A|\leq n$, such that the corresponding boundary stratum identifies with
$$ C_A\times C_{ \{\bullet\}\sqcup[n]\setminus{A},m}$$
\item[S2)] There are a subset $A$ of $[n]$ and an ordered subset $B$ of consecutive elements of $[m]$, $0\leq |A|\leq n$, $0\leq |B|\leq m$ and $2\leq\vert{A}+\vert{B}\leq{n}+m-1$ such that the corresponding boundary stratum identifies with
$$ C_{A,B}\times C_{[n]\setminus A,\{\bullet\}\sqcup [m]\setminus B}$$
\end{itemize}
We refer to~\cite[Section 5]{K} and to~\cite[Section I]{AMM} for details on the previous compactified configuration spaces; in particular, the second reference is quite helpful for what concerns orientations for the boundary strata and signs.

In the construction of the Kontsevich formality map we will give an explicit one, but it is convenient to first give an intuitive definition of the weights by saying that for example the weight of the graph \ref{Graph2} is defined by
$$w_\Gamma=\frac{1}{2!}\int_{\overline{C}_{2,2}}\hspace{-0.1cm}{{\d}\phi(z_1,z_2){\d}\phi(z_1,r_1){\d}\phi(z_2,z_1){\d}\phi(z_2,r_2)}$$
Here the Kontsevich propagator $\d\phi(s,t)$ is defined as the exterior derivative of a so-called angle map $\phi(\cdot,\cdot)$: The angle map $\phi(s,t)$ can be defined as the angle at the source $s$ formed by the vertical geodesic passing through $s$ and the geodesic passing through $s$ and the target $t$ where we consider the upper half plane endowed with the Lobachevsky metric and explicit we have
$$\phi(s,t):=\frac{1}{2\pi}\mathrm{\arg}\left(\frac{t-s}{t-\overline{s}}\right)$$
The angle map $\phi(\cdot,\cdot)$ is clearly invariant under the action of $\mathbb R^+\ltimes\mathbb R$, hence the Kontsevich propagator extends to $C_{n,m}^+$ and the Kontsevich integral weights are well-defined, for a detailed convergence discussion we refer to section \ref{Convergence}.
\section{The Kontsevich $L_\infty$-quasi-morphism for $\mathbb{R}^d$}
\subsection{$L_\infty$-algebras and -morphisms}\label{Linfty}
We need a brief memento of $L_\infty$-algebras.
For a field $\mathbb{K}$ of characteristic zero such that $\mathbb{C}\subset\mathbb{K}$ we denote by $A:=\mathbb{K}[\![x_1,\cdots,x_d]\!]$ the ring of formal power series in $d$ variables. By $T_\mathrm{poly}(A)$ we denote the algebra of totally skew-symmetric multi-derivations and by $D_\mathrm{poly}(A)$ we denote the algebra of multi-differential operators on $A$.

Let $V$ be a $\mathbb{Z}$-graded vector space over $\mathbb{K}$, $[\bullet]$ be the degree-shifting functor, $S(V)$ be the (graded) symmetric algebra and
$$S^+(V)=\oplus_{1\leq{n}}S^n(V)$$
be the cofree cocommutative, coassociative coalgebra without counit.

An $L_\infty$-algebra structure on an object $\mathfrak{g}$ of grVect consists of a coderivation $Q$ of degree $1$, which additionally
squares to $0$, on the cocommutative, cofree coassociative coalgebra without counit $S+(\mathfrak{g}[1])$.
The fact that $S^+(\mathfrak{g}[1])$ is a cofree, coassociative, cocommutative coalgebra implies that $Q$ is uniquely specified by
its Taylor components
$$Q_n:S^n(\mathfrak{g}[1])\rightarrow\mathfrak{g}[1]$$
$n\geq1$, via the assignment
\begin{equation}\label{LinftyRel}
Q(x_I)=\sum_{J\subseteq{I},\;\vert{J}\vert \geq1}\epsilon(J,I)Q_{\vert{J}\vert}(x_J)x_{I\setminus{J}}
\end{equation}
where $x_I$ is any monomial of degree $Ã¢\vert{I}\vert\geq1$ in $S^+(g[1])$, for some set of indices $I$, and for a subset $J$ of $I$ of cardinality bigger or equal than $1$ and the sign $\epsilon(J,I)$ in \ref{LinftyRel} is specified by the rule
$x_I =\epsilon(J,I)x_{J}x_{I\setminus{J}}$.

There is an isomorphism of vector spaces over $\mathbb{K}$ between $S^n(\mathfrak{g}[1])$ and $\wedge^n(\mathfrak{g})[n]$ (the \textit{d\'ecalage} isomorphism), through which the Taylor component $Q_n$ of $Q$ may be also regarded as a morphism of degree $2Ã¢ÂÂn$ from $\wedge^n(\mathfrak{g})$ to $\mathfrak{g}$.
The condition that $Q$ squares to $0$ is equivalent to an infinite family of quadratic identities between the Taylor
components of $Q$, {\em i.e.}
$$\sum_{J\subseteq{I},\;\vert{J}\vert\geq1}\epsilon(J,I)Q_{\vert{I}\vert-\vert{J}\vert+1}(Q_{\vert{J}\vert}(x_J)x_{I\setminus{J}})=0$$ for any choice $I$ of cardinality at least $1$.
The condition that $Q$ squares to zero in particular implies the first three equations
\begin{itemize}
\item ${Q}^2_1=0$, {\em i.e.} $Q_1$ is a differential on $\mathfrak{g}$.
\item $Q_2$ is a skew-symmetric bilinear operation on $\mathfrak{g}$, for which $Q_1$ satisfies the graded Leibniz rule.
\item $Q_2$ satisfies the Jacobi identity up to homotopy explicitly described by $Q_3$ with respect to $Q_1$. In other words, the cohomology of an $L_\infty$-algebra $\mathfrak{g}$ with respect to $Q_1$ has a structure of graded Lie algebra. 
\end{itemize}
This shows that the notion of $L_\infty$-algebras generalises the definition of a DGLA where we just have non-trivial Taylor coefficients $Q_n$ for $n=1,2$ but $Q_n=0$ for $n\geq3$.

Finally, given two $L_\infty$-algebras $(\mathfrak{g}_i,Q_i)$, a $L_\infty$-morphism $\mathcal{U}$ from $\mathfrak{g}_1$ to $\mathfrak{g}_2$ is defined as a morphism of degree $0$ of cocommutative, cofree, coassociative coalgebras without counits from $S^+(\mathfrak{g}_1[1])$ to $S^+(\mathfrak{g}_2[1])$, which additionally intertwines the corresponding codifferentials.
The fact that $\mathcal{U}$ is a morphisms of coalgebras from $S^+(\mathfrak{g}_1[1])$ to $S^+(\mathfrak{g}_1[1])$ and the cofreeness of such coalgebras implies that $\mathcal{U}$ is uniquely determined by its Taylor components
\begin{align*}&\mathcal{U}_1:\mathfrak{g}_1\rightarrow\mathfrak{g}_2\\&\mathcal{U}_2:\wedge^2\mathfrak{g}_1\rightarrow\mathfrak{g}_2[-1]\\&\mathcal{U}_3:\wedge^3\mathfrak{g}_1\rightarrow\mathfrak{g}_2[-2]\\&\cdots\end{align*}
Namely we have the identity
$$\mathcal{U}(x_I)=\sum_{ p\geq1,\;J_1\cup\cdots\cup{J_p}=I
,\vert{J_i}\vert\geq1\forall{i}=1,\cdots,p}\epsilon(J_1 , \cdots , J_p , I )\mathcal{U}_{\vert{J_1}\vert} (x_{J_1} )\cdots{\mathcal{U}}_{\vert{J_p}\vert} (x_{J_p} )$$
The condition that $\mathcal{U}$ intertwines the codifferentials is equivalent to an infinite family of polynomial identities involving the Taylor components of $Q_i$, $i=1,2$ and by the \textit{d\'ecalage} isomorphism we can identify an $L_\infty$-morphism also with a collection of maps
$$\mathcal{U}_n:S^n(\mathfrak{g}_1[1])\rightarrow\mathfrak{g}_2[1]$$
for $n\geq1$
that satisfies the identities
\begin{align}\label{LinftyMor}
&\sum\pm\mathcal{U}_{q+1}\left(Q_p(x_{i_1}\wedge\cdots\wedge{x}_{i_p})\wedge{x_{j_1}}\wedge\cdots\wedge{x_{j_q}})\right)\\&=\sum_{n_1+\cdots{n}_k=p+q}\pm{Q}_k\left(\mathcal{U}_{n_1}(x_{i^1_1}\wedge\cdots\wedge{x}_{i^1_{n_1}})\wedge\cdots\wedge\mathcal{U}_{n_1}(x_{i^k_1}\wedge\cdots\wedge{x}_{i^k_{n_k}})\right)\nonumber
\end{align}
We are mainly interested in the case when $\mathfrak{g}_1$ and $\mathfrak{g}_2$ are dg Lie algebras. Here the equations reduce to
\begin{align}&\d\mathcal{U}_n(\gamma_1\wedge\cdots\wedge\gamma_n)-\sum_{i=1}^{n}\pm\mathcal{U}_n(\gamma_1\wedge\cdots\wedge\d\gamma_i\wedge\cdots\wedge\gamma_n)\nonumber\\&=\frac{1}{2}\sum_{\substack{{k,l\geq1}\\{k+l=n}}}\frac{\pm1}{k!l!}\sum_{\sigma\in\sum_n}[\mathcal{U}_k(\gamma_{\sigma_1}\wedge\cdots\wedge\gamma_{\sigma_k}),\mathcal{U}_l(\gamma_{\sigma_{k+1}}\wedge\cdots\wedge\gamma_{\sigma_n})]\nonumber\\&+\sum_{i<j}\pm\mathcal{U}_{n-1}([\gamma_i,\gamma_j]\wedge\cdots\wedge\gamma_n)\nonumber\end{align} 
In particular the first two equations are
\begin{itemize}
\item $\d\mathcal{U}_1=\mathcal{U}_1\d$, {\em i.e.} $\mathcal{U}_1$ is a map of complexes. 
\item $\mathcal{U}_1$ is a map of dg Lie algebras modulo $\mathcal{U}_2$, hence $\mathcal{U}_1$ defines a map of graded Lie algebras on the level of cohomology.
\end{itemize}
The first equation motivates the following definition: An $L_\infty$-quasi-isomorphism $\mathcal{U}$ is a $L_\infty$-morphism, whose first Taylor coefficient $\mathcal{U}_1$ is an isomorphism on the corresponding cohomologies. One of the main properties of $L_\infty$-quasi-isomorphisms is that they define an equivalence relation on the set of $L_\infty$-algebras.
\subsection{The $\mathrm{HKR}$ map}
Let $M$ be a $d$-dimensional manifold. We denote by
$$T_{poly}(M)=\oplus_{k=-1}^{\infty}{\Gamma}^{\infty}({\wedge}^{k+1}TM)$$
the vector space of polyvector fields and by
$$D_{poly}(M)=\oplus_{k=-1}^{\infty}\lbrace{\text{Polydifferential operators}:{C}^{\infty}(M)^{\otimes{k+1}}\rightarrow{C}^{\infty}(M)\rbrace}$$
the vector space of polydifferential operators on $M$. This two bundles are naturally endowed with DGLA structures, induced by the Schouten-Nijenhuis bracket $[\;.\;,\;.\;]_{\mathrm{S-N}}$ and zero differential in $T_{poly}(M)$ and the Gerstenhaber bracket $[\cdot,\cdot]_{\mathrm{G}}$ and the Hochschild differential $\partial_H$ in $D_{poly}(M)$.\\

Explicitly the Schouten-Nijenhuis bracket of factoring multi vector fields can be expressed with the Lie bracket of vector fields
$$[X_1\wedge\cdots\wedge{X}_a,{Y}_1\wedge\cdots\wedge{Y}_b]_{S-N}$$
$$=\sum_{i=1}^{a}\sum_{j=1}^{b}(-1)^{i+j}[X_i,Y_j]\wedge{X}_1\cdots{X}_{i-1}\wedge{X}_{i+1}\cdots{X}_a\wedge{Y}_1\cdots{Y}_{j-1}\wedge{Y}_{j+1}\cdots\wedge{Y}_b$$
$$[f,X]_{S-N}=-i(\d{f})X$$
for a function $f\in{C}^\infty(M)$ and the Leibniz rule.

Maurer-Cartan elements in $T_{poly}(M)$, {\em i.e.} elements of degree $1$ that are zeros of the Maurer-Cartan map $\d +[\cdot,\cdot]/2$ correspond to Poisson $2$-vector fields. Notice that here we consider $T_{poly}(M)$ endowed with trivial differential and the Maurer-Cartan equation reduces to the Poisson equation \ref{Poisson} in the shape
$[\Pi,\Pi]=0$,
equivalent we have for a Poisson structure a flat super derivation $\d_\Pi:=[\Pi,\cdot]_{\mathrm{S-N}}:T_{poly}^\bullet(M)\rightarrow{T}_{poly}^{\bullet+1}(M)$.\\

The Gerstenhaber bracket on $D_{poly}(M)$ can be described as follows: For $\alpha\in{D}^{n}_{poly}(M)$ and $\beta\in{D}^{m}_{poly}(M)$ we define for $0\leq{i}\leq{n}=\deg\alpha$ the insertions
$$(\alpha\circ_i\beta)(a_0,\cdots,a_{n+m})=\alpha(a_0,\cdots,a_{i-1},\beta(a_i,\cdots,a_{i+m}),a_{1+i+m},\cdots,a_{n+m})$$
$\in{D}^{n+m}_{poly}(M)$. With
$\alpha\circ\beta:=\sum_{i=0}^{n}\alpha\circ_i\beta$
we can define the Gerstenhaber bracket by
$$[\alpha,\beta]_{\mathrm{G}}=\alpha\circ\beta-(-1)^{nm}\beta\circ\alpha$$
The Hochschild differential is now just given by
$$\partial_\mathrm{H}=-[\cdot,m]_{\mathrm{G}}$$
where we denote by $m$ the natural multiplication of the algebra. 
Maurer-Cartan elements in $D_{poly}(M)$ correspond to associative deformations of the natural product $m$.

For $0\leq{n}$ in the literature the following canonical map is called the $\mathrm{HKR}$ map
$$\mathrm{HKR}:(\gamma_0\wedge\cdots\wedge\gamma_n)\rightarrow\left(f_0\otimes\cdots\otimes{f}_n\rightarrow\frac{1}{(n+1)!}\sum_{\sigma\in\Sigma_{n+1}}sgn(\sigma)\prod_{i=0}^{n}\gamma_{\sigma_i}(f_i)\right)$$
It is a famous theorem called the $\mathrm{HKR}$ theorem that this map is a quasi isomorphism of complexes $\left(T_\mathrm{poly}(A),0\right)\rightarrow\left(D_\mathrm{poly}(A),\mathrm{d}_\mathrm{H}\right)$. Hochschild Kostant and Rosenberg first proved the difficult, local part of this theorem for $Pol(\mathbb{R}^d)$ and the generalisation to any manifold can be quite easy achieved from the local theorem by a partition of unity, because
$$\mathrm{d}_\mathrm{H}f\phi=f\mathrm{d}_\mathrm{H}\phi$$
for any function $f$. In his famous paper \cite{K} Kontsevich also gave a proof of the $\mathrm{HKR}$ theorem and we mention \cite{CaGu} for more details. Moreover one can consider Kontsevich's formality theorem as a $L_\infty$ continuation of the $\mathrm{HKR}$ map as we will see in the following:
\subsection{Kontsevich's formality map}\label{KoOp}
\subsubsection*{Differential operators corresponding to graphs}
To every admissible graph $\Gamma$ with $n$ vertices of type I and $m$ of the type II and polyvector fields $\gamma_i$ of degree $\#\text{Star}(i)$ we associate a map
$$\mathcal{U}_\Gamma:\gamma_1\otimes\cdots\otimes\gamma_n\rightarrow\{A^{\otimes{m}}\rightarrow{A}\}$$
where we remind the reader that $A:=C^\infty(\mathbb{R}^d)$. The function
$$\Phi=\mathcal{U}_\Gamma(\gamma_1\otimes\cdots\otimes\gamma_n)(f_1\otimes\cdots\otimes{f}_m)$$
is defined to be the sum over all configurations of indices running from $1$ to $d$, labeled by $E_\Gamma$:
$$\Phi=\sum_{I:E_\Gamma\rightarrow\{1,\cdots,d\}}\Phi_I$$
where $\Phi_I$ is the product over all $n+m$ vertices of $\Gamma$ of certain partial derivatives. To be more precise, with each vertex $1\leq{i}\leq{n}$ of type I we associate the function $\Psi_i$ on $\mathbb{R}^d$ which is defined by
$$\Psi_i=<\gamma_i,dx^{I(e_i^1)}\otimes\cdots\otimes{d}x^{I\left(e_i^{\#\text{Star}(k)}\right)}>$$
and with each vertex $1\leq\overline{j}\leq{m}$ we associate $f_j$.

Then we put into each vertex $v$ instead of $\Psi_v$ the partial derivative
$$\tilde{\Psi}_v=\left(\prod_{e\in{E}_\Gamma,\;e=(\cdot,v)}\partial_{I(e)}\right)\Psi_v$$
and define with
$$\Phi_I=\prod_{v\in{V}_\Gamma}\tilde{\Psi}_v$$
finally the function
$$\Phi=\sum_I\Phi_I$$

Now we can describe the local Kontsevich $L_\infty$-quasi-morphism
$$\mathcal{U}:\left(T_\mathrm{poly}(\mathbb{R}^d),0,[\cdot,\cdot]_\mathrm{S-N}\right)\leadsto\left(D_\mathrm{poly}(\mathbb{R}^d),\mathrm{d}_\mathrm{H},[\cdot,\cdot]_\mathrm{G}\right)$$
by the formula of its n-th derivative $\mathcal{U}_n$ considered as a skew-symmetric polylinear map
$$\mathcal{U}_n:\otimes^{n}T_\mathrm{poly}(\mathbb{R}^d)\rightarrow{D}_\mathrm{poly}(\mathbb{R}^d)[1-n]$$
We set
$$\mathcal{U}_n=\sum_{m\geq0}\sum_{\Gamma\in\mathcal{G}(n,m)}\prod_{i=1}^{n}\frac{1}{\vert\text{Star}(i)\vert!}W_\Gamma{U}_\Gamma$$
where the integral weight of a graph $\Gamma\in\mathcal{G}(n,m)$ is defined by
$$W_\Gamma=\int_{C^+_{n,m}}\wedge_{e\in{E}_\Gamma}\d\phi(e)$$
with the Kontsevich propagator
$$\phi(s,t)=\frac{1}{2\pi}\mathrm{\arg}\left(\frac{t-s}{t-\overline{s}}\right)$$
 
It is maybe surprising that the map of Kontsevich is universal in the sense that the weights are defined independent of the dimension $d$ of $\mathbb{R}^d$.

The $L_\infty$-morphism $\mathcal{U}$ also enjoys several additional properties that we will again state in the following section in theorem \ref{InterpolationFormality} for a more general family of $L_\infty$-morphisms
$$\left(T_\mathrm{poly}(A),0,[\cdot,\cdot]_\mathrm{S-N}\right)\leadsto\left(D_\mathrm{poly}(A),\mathrm{d}_\mathrm{H},[\cdot,\cdot]_\mathrm{G}\right)$$
For instance we comment on the additional properties: $\mathcal{U}$ is $GL(d,\mathbb{K})$-equivariant, one can replace $\mathbb{K}^d$ in the construction by its formal completion $\mathbb{K}^d_{formal}$ at the origin. This first two properties rely on Kontsevich's universal graph construction, but there are two other properties that are relevant if we replace the integral weights: For vector fields $\gamma_i\in{T}^0_\mathrm{poly}(A)$ and $n\geq2$ we have 
$\mathcal{U}^{\lambda}(\gamma_1,\cdots,\gamma_n)=0$
and if $\gamma_i\in{T}_\mathrm{poly}(A)\;\forall{i}=2,\cdots,n$ are polyvector fields and $\gamma_1\in{T}^0_\mathrm{poly}(A)$ is a linear vector field we have the vanishing
$\mathcal{U}^{\lambda}(\gamma_1,\cdots,\gamma_n)=0$. The previous two properties are equivalent to the vanishing of certain weights, we will discuss this vanishing lemmas in detail in section \ref{Global}. The vanishing Lemmas \ref{Global} are relevant for the globalisation of the local result in the sense of \cite{D}, Dolgushev's globalisation approach of Kontsevich's local formality \cite{K} theorem {\em via} Fedosov resolutions \cite{D}. The globalisation of Shoikhet-Tsygan formality has been constructed by Dolgushev in \cite{DT1}, \cite{DT2}, also here Fedosov resolutions are an essential ingredient.
\chapter{\textit{Kontsevich formality for the interpolation}}\label{Interpol}
Kontsevich stated (without proof!) $[2]$ that the formality theorem (see later on) holds actually true if one replaces in the definition of the angle function $\phi$ the $\arg$-function by
$$\phi(h_s,h_t):=\frac{1}{2\pi i}\ln\left(\frac{h_t-h_s}{h_t-\overline{h}_s}\right)$$
where $\ln$ denotes any complex logarithm. The actual proof of this claim is quite technical and people working in deformation quantization and related fields have been interested in a detailed proof.
Together with Anton Alekseev, Charles Torossian and Thomas Willwacher we have managed to prove the logarithmic formality theorem, {\em i.e.} we realised that the $L_\infty$-relations are true if we replace in the definition of the Kontsevich integral weights the usual $\arg(\cdot)$ propagator by the $\ln(\cdot)$ propagator. The proof is to calculate with Stokes theorem in the presence of certain mild singularities and the computations are inspired by \cite{FB1} and \cite{FB2}. Here I will mainly adapt the arguments developed in discussions with Rossi and Willwacher and the calculations are done in the original coordinates. Compared to \cite{AWRT} here we consider the interpolation propagator
\begin{equation}\label{InterpoPro}
\phi^\lambda(h_s,h_t):=\frac{1}{2\pi i}\left[\lambda\ln\left(\frac{h_t-h_s}{h_t-\overline{h}_s}\right)-(1-\lambda)\ln\left(\frac{\overline{h}_t-\overline{h}_s}{\overline{h}_t-{h_s}}\right)\right]
\end{equation}
with $\lambda$ any complex number. This interpolation family of propagators was introduced by Rossi and Willwacher in \cite{WR} where they elaborate on an involved conjecture of Etingof concerning Drinfeld associators. In their construction the real part $\phi^{\mathbb{R}}$ is an essential ingredient used to interpolate between the Drinfeld associators $\Phi_{\mathrm{AT}}$, $\Phi_{\mathrm{KZ}}$ and $\Phi_{\overline{\mathrm{KZ}}}$ (The Alekseev-Torossian associator, the Knizhnik-Zamolodchikov associator and the anti-$\mathrm{KZ}$ associator respectively). 

The fact that the interpolation integral weights
$$W^\lambda_\Gamma=\int_{C^+_{n,m}}\wedge_{e\in{E}_\Gamma}\d\phi^\lambda(e)$$
converge is {\em a priori} not obvious: It requires a careful analysis of products of exterior derivatives of the logarithmic angle function on all boundary strata. 
Our proof of the formality for the interpolation is to show with help of analytic methods that this weights are well-defined and satisfy the important quadratic relations of the original Kontsevich weights. The main a bit surprising observation is that although a single interpolation propagator does not extend to the boundary strata the relevant forms of top degree that we integrate actually do not admit singularities on the boundary, the singularities cancel.

For a field $\mathbb{K}\supseteq \mathbb C$, we denote by $A$ the ring $\mathbb{K}[x_1,...x_d]$ of formal power series in $d$ variables. By $T_\mathrm{poly}(A)$ we denote the graded vector space of totally skew-symmetric multi-derivations of $A$ and by $D_\mathrm{poly}(A)$ the graded vector space of multi-differential operators on $A$.

The following theorem is the analog of Kontsevich's formality theorem defined with the interpolation propagator, the proof of the analytic part of this theorem is content of this chapter:
\begin{theorem}\label{InterpolationFormality}
For all $\lambda\in\mathbb{C}$ there is an $L_{\infty}$-quasi-isomorphism
$$\mathcal{U}^{\lambda}:\left(T_\mathrm{poly}(A),0,[\cdot,\cdot]_\mathrm{S-N}\right)\leadsto\left(D_\mathrm{poly}(A),\mathrm{d}_\mathrm{H},[\cdot,\cdot]_\mathrm{G}\right)$$
defined analogous to Kontsevich's $L_{\infty}$-quasi-isomorphism
$\mathcal{U}^{1/2}$ , but with the usual $\arg\bigr(\frac{t-s}{t-\overline{s}}\bigr)$-propagator replaced by the interpolation propagator
$$\phi^\lambda(s,t):=\frac{\lambda}{2\pi\I}\ln\Bigr(\frac{t-s}{t-\overline{s}}\Bigr)-\frac{1-\lambda}{2\pi\I}\ln\Bigr(\frac{\overline{t}-\overline{s}}{\overline{t}-s}\Bigr)$$
This family also enjoys the same additional properties as Kontsevich's formality map listed below:
\begin{enumerate}
\item  $\mathcal{U}^{\lambda}$ is $GL(d,\mathbb{K})$-equivariant.
\item  One can replace $\mathbb{K}^d$ in the formality construction by its formal completion $\mathbb{K}^d_{formal}$ at the origin.
\item The first Taylor coefficient of $\mathcal{U}^{\lambda}$ coincides with the Hochschild-Kostant-Rosenberg-quasi-isomorphism of complexes $T_\mathrm{poly}(A)\rightarrow{D}_\mathrm{poly}(A)$ given by
$$\mathrm{HKR}(\partial_{i_1}\wedge\cdots\wedge\partial_{i_p})=\frac{1}{p!}\sum_{\sigma\in\mathfrak S_p}(-1)^\sigma \partial_{i_{\sigma(1)}}\otimes\cdots\otimes\partial_{i_{\sigma(p)}}$$
\item For vector fields $\gamma_i\in{T}^0_\mathrm{poly}(A)$ and $n\geq2$ we have
$$\mathcal{U}^{\lambda}(\gamma_1,\cdots,\gamma_n)=0$$
\item If $\gamma_i\in{T}_\mathrm{poly}(A)\;\forall{i}=2,\cdots,n$ are general polyvector fields and $\gamma_1\in{T}^0_\mathrm{poly}(A)$ is a linear vector field we have
$$\mathcal{U}^{\lambda}(\gamma_1,\cdots,\gamma_n)=0$$
\end{enumerate}
\end{theorem}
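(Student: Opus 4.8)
The plan is to follow the architecture of Kontsevich's original proof \cite{K}, isolating the two genuinely new inputs that the singular propagator requires --- convergence of the interpolation weights $W^\lambda_\Gamma$, and regularity of the integrands on the codimension-$\leq 1$ boundary strata of $\overline{C}^+_{n,m}$ --- after which the $L_\infty$-relations and the properties (1)--(5) will come out essentially as in the $\arg$-case. Concretely I would proceed in four steps: (i) show that $W^\lambda_\Gamma=\int_{C^+_{n,m}}\bigwedge_{e\in E_\Gamma}\d\phi^\lambda(e)$ converges (and is a polynomial in $\lambda$); (ii) show that these top-degree forms, as well as the forms $\Omega^\lambda_\Gamma=\bigwedge_{e\in E_\Gamma}\d\phi^\lambda(e)$ with $|E_\Gamma|=2n+m-3$, extend across the codimension-one boundary so that Stokes' theorem applies; (iii) derive the quadratic $L_\infty$-relations by the usual boundary-stratum bookkeeping; (iv) verify (1)--(5).

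Steps (i) and (ii) will be the heart. The point is a local analysis near the Fulton--MacPherson strata in polar-type coordinates. If a subcluster of interior points collapses at scale $\rho\to 0$, the \emph{only} source of a non-integrable singularity in a single propagator differential is the term $\tfrac{2\lambda-1}{2\pi\I}\tfrac{\d\rho}{\rho}$ that $\d\phi^\lambda(e)$ acquires when both endpoints of $e$ lie in the collapsing cluster (for $\lambda=\tfrac12$ this coefficient vanishes, which is exactly why Kontsevich's propagator extends to the boundary); edges touching at most one collapsing point, or with an endpoint on $\mathbb R$, stay regular. Three facts then conspire in the top-degree product: (a) the singular one-form $\tfrac{\d\rho}{\rho}$ is the \emph{same} for every internal edge, so $\tfrac{\d\rho}{\rho}\wedge\tfrac{\d\rho}{\rho}=0$ leaves at most one such factor; (b) once one internal edge carries that factor, the remaining propagator forms, restricted to and at leading order along the stratum, are pulled back from the lower-dimensional effective configuration space of the blown-up picture, so their top-degree part vanishes there; (c) propagator forms on parallel edges degenerate to one another on the stratum. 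Hence the potentially divergent $\tfrac{\d\rho}{\rho}$-contribution is multiplied by a form vanishing on $\{\rho=0\}$: $W^\lambda_\Gamma$ converges, and $\Omega^\lambda_\Gamma$ restricted to each codimension-one stratum is regular. This is the ``surprising'' cancellation; and since $\tfrac{\d\rho}{\rho}$ is a real one-form with $2\lambda-1$ entering only as a constant coefficient, nothing in the argument is sensitive to $\lambda$ being complex. The analysis follows \cite{AWRT}, in the spirit of the singular-propagator computations of \cite{FB1,FB2}; polynomiality of $W^\lambda_\Gamma$ in $\lambda$ is then immediate from the affine dependence of each $\d\phi^\lambda(e)$.

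Granting (i)--(ii), step (iii) will be the standard computation. Each $\d\phi^\lambda(e)$ is closed, so for a graph with $|E_\Gamma|=2n+m-3$ one has $\d\Omega^\lambda_\Gamma=0$, and Stokes gives $\sum_Z\int_Z\Omega^\lambda_\Gamma=0$ over the codimension-one strata $Z$; the $\tfrac{\d\rho}{\rho}$-terms carry a $\d\rho$-factor and so drop out of every boundary integral, leaving only the regular parts. The strata where two type I points collide in $\mathbb H$ yield the Schouten-bracket terms $\mathcal{U}^\lambda_{n-1}([\gamma_i,\gamma_j]_{\mathrm{S-N}}\wedge\cdots)$, with normalisation $\int_{C_2}\d\phi^\lambda=\int_{S^1}\d\phi^\lambda=1$, which holds for \emph{every} $\lambda$ because the angular part of such a collapse is $\lambda$-independent; the strata where a block of type II points (possibly together with a cluster of type I points) degenerates on $\mathbb R$ yield the $D_\mathrm{poly}$-side terms, i.e. the $\d_{\mathrm{H}}\mathcal{U}^\lambda_n$ and $\tfrac12[\mathcal{U}^\lambda_k,\mathcal{U}^\lambda_l]_{\mathrm{G}}$ pieces; and all remaining strata --- collisions of $\geq 3$ type I points, and strata where the relevant form fails to be of top degree --- contribute nothing, by the vanishing arguments of \cite[Section 6]{K}. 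The one thing to check is that these vanishings survive the replacement of $\arg$ by the interpolation propagator: they rest either on combinatorial relabelling symmetries and dimension counts, which are propagator-blind, or on the closedness of $\d\phi^\lambda(e)$ and the now-established regularity of the restricted forms, together with the fact that the effective propagator on a collapsed interior cluster, $\tfrac{\lambda}{2\pi\I}\ln(z_b-z_a)-\tfrac{1-\lambda}{2\pi\I}\ln\overline{z_b-z_a}$, still shifts by an additive constant under a global rotation, so that the relevant $U(1)$-invariance argument on $\overline{C}_{|A|}$ goes through verbatim. Summing these identities over all $\Gamma$, tensored with the operators $U_\Gamma$, produces the $L_\infty$-morphism relations.

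Finally, for the properties: (1) and (2) are immediate, $W^\lambda_\Gamma$ being defined by the same purely combinatorial recipe as Kontsevich's weight, independently of $d$ and of whether one works over $\mathbb K^d$ or its formal completion, while $U_\Gamma$ is $GL(d,\mathbb K)$-equivariant. For (3), only the single graph with one type I vertex and $p$ type II vertices (one edge to each) contributes to $\mathcal{U}^\lambda_1$ on a $p$-vector field; all its edges end on $\mathbb R$, where $\bigl|\tfrac{t-s}{t-\overline{s}}\bigr|=1$, so the $\lambda$-dependent logarithmic-modulus part of $\phi^\lambda$ drops out, $\phi^\lambda=\phi^{1/2}$ there, and hence $\mathcal{U}^\lambda_1=\mathcal{U}^{1/2}_1=\mathrm{HKR}$; since $\mathrm{HKR}$ is a quasi-isomorphism of complexes, $\mathcal{U}^\lambda$ is then automatically an $L_\infty$-\emph{quasi}-isomorphism. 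For (4), a dimension count --- the $|E_\Gamma|=n$ vector-field edges must equal $\dim C^+_{n,m}=2n+m-2$, forcing $m=2-n$ and hence $n=2$, $m=0$ --- leaves only the graph with two type I vertices and edges $1\to2$, $2\to1$ over $C^+_{2,0}$, whose weight vanishes because the relabelling $z_1\leftrightarrow z_2$ is orientation-preserving on $C^+_{2,0}$ while sending the integrand to its negative. Property (5) is again equivalent to the vanishing of a family of weights, the ``linear vertex'' lemma, proved as in \cite{K},\cite{D} using only the closedness of the $\d\phi^\lambda(e)$ and the boundary regularity from step (ii); this is taken up with the other vanishing lemmas in the globalisation section. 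The main obstacle throughout will be step (ii): the single interpolation propagator really does fail to extend to the boundary, and one must show that the products actually appearing in the weights and in the Stokes identities nonetheless do, the $\tfrac{\d\rho}{\rho}$-singularities cancelling for the structural reasons above; once this analytic foundation is secured, the rest is a routine adaptation of the $\arg$-case.
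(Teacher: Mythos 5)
Your proposal follows essentially the same route as the paper's proof: the same local polar-coordinate analysis showing that the only singularity of an internal edge is the term $(2\lambda-1)\,\d\epsilon/\epsilon$, which always travels together with $\I\,\d\varphi$ inside one and the same $1$-form so that products of internal singular parts vanish by antisymmetry while external edges supply $\d\varphi$ only with an $\epsilon$ prefactor; then regularised Stokes on the codimension-one strata, reduction of the $N\geq3$ collapse to Kontsevich's vanishing lemma via the observation that the restricted form is $(\lambda(1-\lambda))^{N-2}$ times the classical $\arg$-form, and the same graph-theoretic reductions for properties (1)--(5). The one place where your plan as written would stumble is property (5): the vanishing of the weight of a $2$-valent type I vertex with one incoming and one outgoing edge is \emph{not} ``proved as in \cite{K}'' for the interpolation propagator, since Kontsevich's argument exploits a symmetry of the $\arg$-propagator that $\phi^\lambda$ lacks, and this is exactly where the paper invests a genuinely new computation (explicit geometric-series integration on the disk in \ref{Global}, or alternatively the Stokes argument on the three-edge chain graph due to Rossi--Willwacher--Torossian); your hedge towards the latter mechanism is the correct fix, but it must be carried out as a new lemma rather than imported.
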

\begin{proof}[Proof:]
We want to replace in Kontsevich's formality construction sketched in chapter \ref{GenQua} the usual Kontsevich propagator $\phi(s,t):=\frac{1}{2\pi}\mathrm{\arg}\bigr(\frac{t-s}{t-\overline{s}}\bigr)$ by the more general interpolation propagators
$\phi^\lambda(s,t):=\frac{\lambda}{2\pi\I}\ln\bigr(\frac{t-s}{t-\overline{s}}\bigr)-\frac{1-\lambda}{2\pi\I}\ln\bigr(\frac{\overline{t}-\overline{s}}{\overline{t}-s}\bigr)$, here the source $s$ and target $t$ of an edge are in the upper half-plane $\mathbb{H}$ and $s\neq{t}$.

We first show that the integrals converge and are well-defined:
\section{Convergence of integrals for the interpolation}\label{Convergence}
The following arguments do not apply, whenever $\Gamma$ contains a vertex of type I with valency $1$ and at least one more vertex of type I or more then one vertex of type II: In this case the action of $\mathbb R^+\ltimes\mathbb R$ allows to fix another vertex of type I or two vertices of type II. Now degree reasons imply that the associated weight vanishes, because we have to integrate the $1$-valency vertex of type I over a $2$ dimensional space and there is only one $1$-form depending on the corresponding two coordinates. This vanishing is not a new statement and traces back to the original paper of Kontsevich. The case of a $1$-valent vertex of type I and exactly one vertex of type II will be considered in detail in \ref{HKRnormalization}. We now consider graphs where all vertices of type I are at least of valency $2$:

Because we integrate top degree forms over compact manifolds it is enough to show that we can integrate the forms locally near the boundary, {\em i.e.} that $\wedge_{e\in{E}_\Gamma}\d\phi_e$ extends to a form of maximal degree near a codimension $1$ boundary strata of $\partial{C}^+_{n,m}$:

First consider a boundary where some of the vertices $h_1,\cdots,h_N$ collapse in $\mathbb{H}\setminus\mathbb{R}$. As usual we specify coordinates near this boundary strata by $h_1=h=x+\I{y}$ with $y\neq0$, $h_2=h+\epsilon\exp(\I\varphi)$ with $\epsilon\geq0$, $\varphi\in[0,2\pi)$ and $h_i=h+\epsilon\exp(\I\varphi){z}_i$ with ${z}_i\in\mathbb{C}$ for $i=3,\cdots,N$ and set $z_1=0$, $z_2=1$. A propagator of an edge joining two of the collapsing vertices (in the following \textbf{internal edge})
\begin{align*}\phi^\lambda(h_i,h_j)=\d\Biggr[&\frac{\lambda}{2\pi\I}\ln\left(\frac{\epsilon\exp(\I\varphi)(z_i-z_j)}{2\I{y}+\epsilon\exp(-\I\varphi)\overline{z_i}-\epsilon\exp(\I\varphi)z_j}\right)\\&-\frac{1-\lambda}{2\pi\I}\ln\left(\frac{\epsilon\exp(-\I\varphi)(\overline{z_i}-\overline{z_j})}{-2\I{y}+\epsilon\exp(\I\varphi){z_i}-\epsilon\exp(-\I\varphi)\overline{z_j}}\right)\Biggr]\end{align*}
contains only one term with a singular behaviour at $\epsilon=0$. Moreover we have explicit the combination
$(2\lambda-1){\d\epsilon}/{\epsilon}+\I\d\varphi$, hence our form of top degree will not contain a $2$-form of the shape $\d\epsilon\d\varphi/\epsilon$ if we multiply two internal $1$-forms of the previous shape: Because of skew-symmetry of $1$-forms every such term will appear twice with opposite sign. The previous skew-symmetry argument was communicated to us by Willwacher and in the following we will use it sometimes.

Moreover in a $1$-form corresponding to an edge that connects a collapsing vertex with a vertex that does not collapse (in the following \textbf{external edge}) the term $\d\varphi$ is coupled to $\epsilon$: Because of the coupling in the polar coordinates $h_i=h+\epsilon\exp(\I\varphi)z_i$ partial derivatives of an external propagator function with respect to $\varphi$ are in $\epsilon$ regular functions multiplied by $\epsilon$. Hence multiplication with a term of this external propagators containing $\d\varphi$ cancels an internal $\d\epsilon/\epsilon$ singularity and we can integrate this forms near the boundary strata where several vertices collapse in $\mathbb{H}\setminus\mathbb{R}$.

Now consider the case where $N$ type I and $M$ type II vertices with $2\leq{N}+M\leq{n}+m-1$ collapse to a point $r\in\mathbb{R}$ and in some sense dually we can consider the case when some type I vertices move to $\infty$: We choose as usual polar coordinates $h_1=R\exp(\I\alpha)$ with $0<\alpha<\pi$ and $h_i=R\exp(\I\alpha){z}_i$ with ${z}_i\in\mathbb{C}$ for the other $i=2,\cdots,N$ diverging vertices. Propagators connecting vertices who both go to $\infty$ do not contain $R$. In propagators corresponding to an edge connecting a vertex that goes to $\infty$ with a vertex $h_k$ that does not diverge $\d{R}$ will only appear in the combinations
$$\d{R}\frac{{e}^{\I\alpha}{z}_i(h_k-\overline{h}_k)}{(h_k-R{e}^{\I\alpha}{z}_i)(\overline{h}_k-R{e}^{\I\alpha}{z}_i)}\quad\text{and}\quad\d{R}\frac{h_k({e}^{-\I\alpha}\overline{{z}}_i-{e}^{\I\alpha}{z}_i)}{(R{e}^{\I\alpha}{z}_i-h_k)(R{e}^{-\I\alpha}\overline{{z}}_i-h_k)}$$
where in the first case $z_k$ was the source and in the second case the target of the edge. In both situations for $R\rightarrow\infty$ this $1$-form goes like $\d{R}/R^2$ and by this is integrable.
\section{$L_\infty$ property for the interpolation }\label{REG}
The main observation in Kontsevich formality is to use Stokes theorem
$\int_{M}\d\alpha=\int_{\partial{M}}\alpha$
for $\alpha\in\Omega^{\dim(M)-1}(M)$ and consider the trivial operator
$$0=\sum_{\substack{{\Gamma\in\mathcal{G}(n,m)}\\{\vert{E}_\Gamma\vert=2n+m-3 }}}\left(\int_{C^+_{n,m}}{\d}\prod_{i=1}^{2n+m-3}{\d}\phi^\lambda({e_i})\right)U_\Gamma$$
$$=\sum_{\substack{{\Gamma\in\mathcal{G}(n,m)}\\{\vert{E}_\Gamma\vert=2n+m-3 }}}\left(\int_{\partial{C}^+_{n,m}}\prod_{i=1}^{2n+m-3}{\d}\phi^\lambda({e_i})\right)U_\Gamma$$
In the following we want to justify this central vanishing for the interpolation family in the sense of Kontsevich and show that the same codimension $1$ boundary combinatorics follow for every $\lambda\in\mathbb{C}$: We will specify charts $U_{1,2n+m-3}$ near the different codimension $1$ boundary strata and apply the local Stokes formula
\begin{align*}
\int_{U_{1,2n+m-3}}{\d}\prod_{i=1}^{2n+m-3}{\d}\phi^\lambda({e_i})&=\lim_{\epsilon\rightarrow0}\int_{U_{1,2n+m-3}(\epsilon)}{\d}\prod_{i=1}^{2n+m-3}{\d}\phi^\lambda({e_i})\\&=\lim_{\epsilon\rightarrow0}\int_{\partial{U}_{1,2n+m-3}(\epsilon)}\prod_{i=1}^{2n+m-3}{\d}\phi^\lambda({e_i})
\end{align*}
Here the charts $U_{1,2n+m-3}(\epsilon)$ restrict the boundary coordinate to values $\geq\epsilon\in\mathbb{R}^+$. For the singularities along the border we use a standard regularisation procedure similar to the methods used in \cite{FB1}, \cite{FB2}. As we will see in the following calculations the pullbacks of the form $\prod_{i=1}^{2n+m-3}{\d}\phi^\lambda({e_i})$ to regularised codimension $1$ boundaries ${\partial{U}_{1,2n+m-3}(\epsilon)}$ extend to a form of maximal degree along the $\epsilon=0$ border. For a detailed description how to glue the local results by a partition of unity we refer the reader to the article \cite{AWRT}.

\subsection*{The boundary strata S1)}
\subsection{The Schouten-Nijenhuis Lie bracket}\label{SN}

We discuss the case where two vertices $h_1$ and $h_2$ collapse in the upper half plane to a single vertex $h=x+\I{y}\in\mathbb{H}$ with $y>0$: We use the action of $\mathbb R^+\ltimes\mathbb R$ to fix the point $h$ where the two vertices collapse somewhere in $\mathbb{H}$ and by this eliminate $x$ and $y$ as coordinates: Set $h_1=h$ and use polar coordinates $h_2=h+\epsilon\exp(\I\varphi)$, the limit $\epsilon\rightarrow0$ describes the codimension $1$ boundary
$C_2\times{C}_{n-1,m}$
corresponding to the collapse, where we identify as usual $C_2$ with $S^1$. For example the picture looks as follows, where we draw a circle ( Kontsevich's  ``magnifying glass") around the two vertices $h_1=h$ and $h_2=h+\epsilon\exp(\I\varphi)$ to symbolise their collapse
\begin{equation}\label{SNP}
\xymatrix{&*+<11pt,5.0em>[Fo]{\quad}\save[]+<-1.4em,0em>*{\;h_1}\ar@{.>}[d]\ar[r]+<-2.9em,0.2em>\restore&\save[]+<-2.4em,0.2em>*{\;h_2}\ar@{.>}@/_0.9pc/[r]\restore&h_3\ar@{.>}[dl]\ar@{.>}@/^-0.9pc/[l]+<-1.65em,0.5em>\\-\infty\;&{r_1}\ar@{-}[r]\ar@{->}[l]&{r_2}\ar@{->}[r]&\;+\infty}
\end{equation}
Notice that the appearing graph has $5$ edges, the manifold $C^+_{3,2}$ we integrate over is of dimension $3\cdot2+2-2=6$ and the missing degree of our form is compensated by the first $\d$ in the vanishing integral $0=\int_{C^+_{3,2}}{\d}\prod_{i=1}^{5}{\d}\phi^\lambda({e_i})$.

By elementary calculation the $1$-form corresponding to the internal edge
$$\I2\pi{\d}\phi^\lambda(h_1,h_2)=\lambda{\d}\ln\left(\frac{h_1-h_2}{\overline{h}_1-h_2}\right)-(1-\lambda){\d}\ln\left(\frac{\overline{h}_1-\overline{h}_2}{{h}_1-\overline{h}_2}\right)$$
transforms to
$$(2\lambda-1)\frac{{\d}\epsilon}{\epsilon}+\d\epsilon\left(\lambda\frac{e^{\I\varphi}}{-\epsilon{e}^{\I\varphi}-2\I{y}}-(1-\lambda)\frac{e^{-\I\varphi}}{-\epsilon{e}^{-\I\varphi}+2\I{y}}\right)$$
\begin{equation}
\label{ExPr}+\I{\d}\varphi+\epsilon\I\d\varphi\left(\lambda\frac{e^{\I\varphi}}{-\epsilon{e}^{\I\varphi}-2\I{y}}+(1-\lambda)\frac{e^{-\I\varphi}}{-\epsilon{e}^{-\I\varphi}-2\I{y}}\right)
\end{equation}
where we recall that $y$ got eliminated as a coordinate. Hence in this coordinates the propagator has the shape
$(2\lambda-1){{\d}\epsilon}/{\epsilon}$ plus terms regular in $\epsilon$.
The singular term $(2\lambda-1){{\d}\epsilon}/{\epsilon}$ and the other term proportional to $\d\epsilon$ are not intrinsic terms of a $\epsilon$ boundary {\em i.e.} orthogonal to the boundary, hence this terms do not contribute to the pullback to a regularised $\epsilon$ boundary given by
$$\I{\d}\varphi+\epsilon\I\d\varphi\left(\lambda\frac{e^{\I\varphi}}{-\epsilon{e}^{\I\varphi}-2\I{y}}+(1-\lambda)\frac{e^{-\I\varphi}}{-\epsilon{e}^{-\I\varphi}-2\I{y}}\right)$$
In the limit $\epsilon\rightarrow0$ this pullback form converges to the form $\I{\d}\varphi$.

It is clear that the pullback of a propagator corresponding to an external edge (an edge connecting a vertex that collapses with a vertex that does not collapse, we have drawn dotted edges for external vertices in \ref{SNP}) does not depend on $\varphi$ in the limit $\epsilon\rightarrow0$. In other words because of the coupling of $\epsilon$ and $\varphi$ in $h_2=x+\I{y}+\epsilon{e}^{\I\varphi}$ in every other $1$-form $\d\varphi$ will only appear as an in $\epsilon$ regular term multiplied by $\epsilon$, hence terms of external $1$-forms containing $\d\varphi$ vanish on the boundary $\epsilon=0$.

The integral over $C_2$, independent of $\lambda$, equals the integral of the normalized volume form on $S^1$ and the integral over $C_{n-1,m}$ is exactly the Kontsevich interpolation integral weight corresponding to the graph where the two first vertices $h_1$ and $h_2$ collapsed to a single type I vertex with propagator $\phi^\lambda$. This implies the usual contribution with the Schouten-Nijenhuis bracket in the following $L_\infty$-relations \ref{Kontsevich}.
\subsection{Kontsevich vanishing Lemma for the interpolation}\label{KV}
Now we discuss what happens when more than two vertices $h_1,h_2\dots{h_N}$ with $N\geq3$ collapse to a single vertex $h=x+\I{y}\in\mathbb{H}$

Let us W.L.O.G. rename this vertices so that we can assume that there is an edge joining $h_1$ and $h_2$. We again calculate with the section that fixes the point $h$ where the vertices collapse and now specify coordinates near the boundary
$ C_N\times C_{n+1-N,m}$
again as in \cite{K}: We set $h_1=h$ and write $h_2=h+\epsilon\exp(\I\varphi)$ for the first two collapsing vertices and for the remaining collapsing vertices we write $h_l=h+\epsilon\exp(\I\varphi){z}_l,\;N\geq{l}\geq3$ with ${z}_l\in{\mathbb{C}}\setminus\{0,1\}$ and ${z}_l\neq{z}_k$ if $l\neq{k}$ and consistent we set $z_1=0$ and $z_2=1$. In other words we choose for $C_{n+1-N,m}$ the section that fixes the point of collapse to $h$ and for $C_N$ the section that fixes the first vertex $z_1$ to zero and the second vertex $z_2$ on the unit circle $S^1$. Of course the integral is independent of the two fixed parameters $x,y$ and $x,y$ are not coordinates of $C_N\times{C}_{n+1-N,m}$.

Analog to the previous consideration the pullback to the boundary $\epsilon=0$ of forms corresponding to internal edges (edges between vertices that do both collapse) is
$$\I{\d}\varphi+\lambda{\d}\ln({z}_k-{z}_l)-(1-\lambda){\d}\ln(\overline{z}_k-\overline{z}_l)$$
As usual by dimensional reasons we are left with the situation where the internal sub graph of the $N$ collapsing vertices has $2N-3$ edges: Because we integrate a form of maximal degree we have to pick from the $1$-forms corresponding to internal edges partial derivatives with respect to the internal coordinates of $C_N$, in the limit $\epsilon\rightarrow0$ propagators corresponding to external edges do not depend on the internal $C_N$ coordinates $z_l$.

Recall that we assumed an edge joining $z_1$ and $z_2$ and we have to take $\d\varphi$ from this edge, because in other propagators the $\varphi$ dependence cancels. Because $\ln$ is holomorphic we see that the form we integrate over $C_N$ equals
\begin{equation}\label{RWRE}
\left(\lambda(1-\lambda)\right)^{N-2}\I{\d}\varphi\prod_{i=1}^{2N-4}{\d}\arg({z}_{s_i}-{z}_{t_i})
\end{equation}
where the two indices $s_i\neq{t_i}$ are determined by respectively the sources and targets of the $2N-3$ edges connecting collapsing vertices. Here we used the explicit shape
$\lambda{\d}\ln({z}_k-{z}_l)-(1-\lambda){\d}\ln(\overline{{z}}_k-\overline{z}_l)$
of the transformed forms: In our form of maximal degree every time $\lambda{\d}\ln({z}_k-{z}_l)$ gets multiplied with $(1-\lambda){\d}\ln(\overline{z}_i-\overline{z}_j)$ we also have the combination $\lambda{\d}\ln(\overline{{z}_k}-\overline{z}_l)$ multiplied with $(1-\lambda){\d}\ln({z}_i-{z}_j)$. This previous rewriting to the usual vanishing Lemma has been used by Rossi and Willwacher in their proof of a conjecture of Etingof on Drinfeld associators.

The previous formula \ref{RWRE} shows that the direction of internal edges (edges connecting two collapsing vertices) gets neglect-able on the boundary.

Again the argument that $\ln$ is holomorphic combined with the formulas $2\arg(\cdot)=\ln(\cdot)-\ln(\overline{\cdot})$ and $2\ln\vert\cdot\vert=\ln(\cdot)+\ln(\overline{\cdot})$
allows to rewrite the previous expression \ref{RWRE}
$$(-1)^{N-2}\left(\lambda(1-\lambda)\right)^{N-2}\I{\d}\varphi\prod_{i=1}^{2N-4}{\d}\ln\vert{{z}_{s_i}-{z}_{t_i}}\vert$$
The previous rewriting is Kontsevich's trick using logarithms and because the functions $\ln\vert{{z}_{s_i}-{z}_{t_i}}\vert$ do not depend on $\varphi$ and we can easily integrate the ${\d}\varphi$ part over $S^1$. 

From this point we could just argue that we have Kontsevich's famous vanishing Lemma. Because this vanishing observation is considered to be one of the most non-trivial points in Kontsevich formality and has also been discussed by other authors \cite{Kho}, let us explain a little bit for the reader how it works:

We integrate a form of maximal degree on $C_N\times{C}_{n+1-N,m}$ and it is clear that a non-trivial contribution can only appear if the valency of every type I vertex is $\geq2$.

\subsubsection*{Sub graphs containing a vertex of valency 2}
We could assume every collapsing vertex to be at least of valency $3$, the first sub graph of collapsing vertices satisfying this $3$-valency assumption is pictured below where we do not specify the neglect-able direction of edges on the $\epsilon=0$ boundary:\\

$${\xymatrix{&*+<11pt,13.5em>[]{}\save[]+<-3.6em,3.2em>*{h_1}\ar@{.>}[l]+<-5.6em,6.5em>\ar@{-}[r]+<0.3em,3.2em>\ar@{-}[dd]+<-3.6em,8.7em>\restore&\save[]+<0.7em,3.2em>*{\;h_2}\ar@{-}[dd]+<0.7em,8.7em>\restore&&\save[]+<0.7em,5.2em>*{}\ar@{.>}[dl]+<0.2em,9.6em>\restore&\\
\save[]+<-3em,9.2em>*{\;\;h_6}\ar@{-}[ur]+<-3.9em,2.8em>\ar@{-}[dr]+<-3.9em,8.7em>\restore&&&\save[]+<0.2em,9.2em>*{\;h_3\;}\ar@{-}[ul]+<1.0em,2.6em>\ar@{-}[dl]+<1.0em,8.4em>\ar@{-}[lll]+<-2.0em,9.2em>\restore&\\
&\save[]+<-3.6em,8.2em>*{h_5}\ar@{-}[r]+<0.1em,8.2em>\restore&\save[]+<0.7em,8.2em>*{\;h_4}\restore&{}&
}}$$\vspace{-2.2cm}\\
The proof for sub graphs containing a vertex of valency $2$ is a consequence of
$$0=\int_{z\in\mathbb{C}\backslash\{z_i,z_j\}\hspace{-1.5cm}}{\d}\left[\lambda\ln(z-z_j)-(1-\lambda)\ln(\overline{z}-\overline{z}_j)\right]{\d}\left[\lambda\ln(z-z_i)-(1-\lambda)\ln(\overline{z}-\overline{z}_i)\right]$$
and the proof of this vanishing Lemma goes quite along the same lines as the proofs for $2$-valency vertices contained in the appendix: We switch to polar coordinates, assume $\vert{z_i}\vert\leq\vert{z_j}\vert$ and split the integral to compute it with the geometric series and $\int_{0}^{2\pi}{\d}{\varphi}e^{i{n}\varphi}=2\pi\delta_0^n$. The difference compared to the computations \ref{2valent} for vertices of valency $2$ is that here we have instead of $\int^1_{\vert{z_2}\vert}\d{r}$ an integral $\int^\infty_{\vert{z_2}\vert}\d{r}$ and the boundary at $\infty$ does not contribute when integrating $1/r^{2+n}$ with ${n}\in\mathbb{N}$. There is also another well-known quite easy argument of Kontsevich to see the triviality of the previous integral by considering the map
$z\rightarrow{z_i+z_j-z}$
which is an orientation preserving involution. 

\subsubsection*{General sub graphs}
Let us W. L. O. G. assume an edge joining $z_1$ and $z_2$. To finally verify that for $N\geq3$ this integrals vanish without any valency restrictions we procced as in \cite{K}: We apply Stokes theorem and rewrite
$$\int_{\{{z}_1=0,{z}_2=1\;,\;{z}_3\dots{{z}}_N\in\mathbb{C}, {z}_k\neq{{z}}_l\}}\prod_{i=1}^{2N-4}\d\ln\vert{{z}_{s_i}-{z}_{t_i}}\vert$$
$$=\int_{\{{z}_1=0,{z}_2=1\;,\;{z}_3\dots{{z}}_N\in\mathbb{C}, {z}_k\neq{{z}}_l\}}\d\left(\ln\vert{{z}_{s_1}-{z}_{t_1}}\vert\prod_{i=2}^{2N-4}\d\ln\vert{{z}_{s_i}-{z}_{t_i}}\vert\right)$$
$$=\int_{\partial\{{{z}_1=0,{z}_2=1\;,\;{z}_3\dots{{z}}_N\in\mathbb{C}, {z}_k\neq{{z}}_l}\}}\ln\vert{{z}_{s_1}-{z}_{t_1}}\vert\prod_{i=2}^{2N-4}\d\ln\vert{{z}_{s_i}-{z}_{t_i}}\vert$$
The codimension $1$ boundary ${\partial\{{z}_1=0,{z}_2=1\;,\;{z}_3\dots{{z}}_N\in\mathbb{C}\}}$ consists of configurations where at least one vertex ${z}_j$ with $j\geq3$ goes to $\infty$ (this boundary corresponds to the collapse of the fixed vertices $z_1=0,z_2=1$) or configurations where more than two vertices collapse to a single vertex or one of the vertices $\{0,1\}$. We briefly discuss this boundary contributions:

We first discuss what happens when some vertices collapse to a single vertex ${z}$: Consider for example the case that the two vertices ${z}_{s_1}$ and ${z}_{t_1}$ collapse: We use again polar coordinates ${z}_{s_1}={z}_{t_1}+\epsilon\exp(\I\varphi)$ and the singularity $\ln(\epsilon)$ appears as a function but a form of top degree must contain ${\d}\varphi$ and $\varphi$ is coupled to $\epsilon$: Internal one forms do not depend on $\varphi$ because of the absolute norm $\vert\cdot\vert$ appearing in Kontsevich's trick using logarithms, hence in the pullback to an $\epsilon$ regularised boundary the form $\d\varphi$ will appear only in the pullback of an external edge and hence in the combination
$$\epsilon{\d}\varphi\left(\frac{e^{\I\varphi}}{{z}_k-{z}_{t_1}-\epsilon{e}^{\I\varphi}}-\frac{e^{-\I\varphi}}{\overline{z}_k-\overline{z}_{t_1}-\epsilon{e}^{-\I\varphi}}\right)$$
where ${z}_k\neq{z}_{t_1}$ is a not collapsing  vertex.
The reason for this is that we compute partial derivatives of the function $\ln\vert{z_k-z_{t_1}}\vert$, and this function is obviously invariant under the action of $S^1$ {\em i.e.} multiplication by $e^{\I\varphi}$ with $\varphi\in\mathbb{R}$. By L'H\^{o}spital's rule we have the well known limit
$lim_{\epsilon\rightarrow0}\ln(\epsilon)\epsilon=0$ and we see that the integrand vanishes on the boundary $\epsilon=0$.

For the other cases where more vertices collapse to a single vertex or where vertices collapse to ${z}_1=0$ and ${z}_1=1$ we can analogous to \label{KV} specify coordinates and the boundary contributions vanish similar by the same analytic inspection, dimensional arguments.

Dually consider the case where the vertex ${z}_{s_1}$ goes to $\infty$, we use polar coordinates ${z}_{s_1}=R\exp(\I\varphi)$ where $R\rightarrow\infty$. In the pullback to the $R$ boundary the $\d\varphi$ term will appear only in the combination
$$R{\d}\varphi\left(\frac{e^{\I\varphi}}{{z}_j-R{e}^{\I\varphi}}-\frac{e^{-\I\varphi}}{\overline{{z}}_j-R{e}^{-\I\varphi}}\right)=R{\d}\varphi\frac{e^{\I\varphi}\overline{{z}}_j-e^{-\I\varphi}{z}_j}{({z}_j-R{e}^{\I\varphi})(\overline{{z}}_j-R{e}^{-\I\varphi})}$$
Again by L'H\^{o}spital's rule we have
$lim_{R\rightarrow\infty}\ln(R)/R=0$,
hence the integrand vanishes on the boundary ${R}=\infty$. The same arguments also work if more than one vertex goes to $\infty$ because we can choose for the other vertices coordinates ${z}_i=R\exp(\I\varphi){z'}_i$ with ${z'}_i\neq1$ and ${z'}_i\neq{z'}_j$ if $i\neq{j}$ to de couple the $\varphi$ dependence in $\ln\vert{z_i}-z_j\vert$. The cases where other vertices diverge to $\infty$ are quite analogous.
\subsection*{The boundary strata S2)}\label{S2}
$N\in\mathbb{N}$ vertices of type I and $M\in\mathbb{N}$ vertices of type II with $N+M\geq2$ and $N+M\leq{n+m}-1$ collapse together to a single vertex $r$ on the real line, the codimension $1$ boundary strata is isomorphic to
$C_{N,M}\times{C}_{n-N,m+1-M}$.

We calculate with the section where the real part of one of the vertices that do not collapse is fixed to zero and the coordinate on $\mathbb{R}$ where the vertices collapse is fixed to some $r\in\mathbb{R}$. The coordinate identification we specify as follows: For the collapsing type I vertices we use the coordinates $h_1=r+\epsilon\I$ and $h_i=r+\epsilon{h'_i}$ for $i=2,\cdots,N$ and for the collapsing type II vertices the coordinates $r_i=r+\epsilon{r'_i}$ with ${r'_i}\in\mathbb{R}$ for $i=1,\cdots,M$ and in the limit the coordinates ${h'_i}$ get restricted to $\mathbb{H}$. In other words for the boundary we choose the section of $C_{N,M}$ where we fix $h'_1$ to $\I$ and the section of $C_{n-N,m-M+1}$ where we fix the new vertex, corresponding to the collapsed vertices, to $r$ and still the real part of one of the vertices that do not collapse is fixed to $0$.

Internal propagators, {\em i.e.} edges between collapsing vertices, do not depend on the point in $\mathbb{R}$ where the vertices collapse. Internal propagators do not depend on $\epsilon$ and also the other propagators are not singular in the limit $\epsilon\rightarrow0$. Hence as usual the integral factorizes in a product of two integrals for every interpolation propagator $\phi^\lambda$: Because we integrate forms of maximal degree on $C_{N,M}$ and $C_{n-N,m-M+1}$ this integrals can only be non-zero if the number of internal edges equals $2N+M-2$.
\subsubsection*{S2.1) A bad edge}
We assume that there is an external edge with source one of the collapsing vertices and target one of the vertices that do not collapse. If $s$ converges to $\mathbb{R}$ the $1$-form corresponding to this edge
${\d}\phi^\lambda(s,t)=\lambda{\d}\ln\left((s-t)/(\overline{s}-t)\right)-(1-\lambda){\d}\ln\left((\overline{s}-\overline{t})/(s-\overline{t})\right)$
vanishes. Roughly the argument for the vanishing is that the two quotients $(s-t)/(\overline{s}-t)$ and $(\overline{s}-\overline{t})/(s-\overline{t})$ both converge to $1$ if $s\rightarrow\mathbb{R}$, by uniform convergence we can commute partial differentials and limit, hence
$$\lim_{s\rightarrow\mathbb{R}}{\d}\phi^\lambda(s,t)=\lambda{\d}\ln\left(1\right)-(1-\lambda){\d}\ln\left(1\right)=0$$
\subsubsection*{S2.2) No bad edge}
Here clearly the integral decomposes as usual by Fubini's theorem into the product of the integral over $C_{N,M}$ of the interpolation form corresponding to the internal sub graph $\Gamma_{N,M}$ and the integral over $C_{n-N,m-M+1}$ corresponding to the external graph
$\Gamma_{n,m}/\Gamma_{N,M}$
obtained from $\Gamma_{n,m}$ by contraction of the collapsing vertices $\Gamma_{N,M}$ to a single type II vertex. As usual this boundary contribution corresponds in the following $L_\infty$-relations \ref{Kontsevich} to the terms where the Hochschild differential $\mathrm{d}_\mathrm{H}$ and Gerstenhaber bracket $[\cdot,\cdot]_G$ appear.
\section{$\mathrm{HKR}$ normalisation for the interpolation}\label{HKRnormalization}
One of the major properties of Kontsevich's original formality map is that it begins with the famous $\mathrm{HKR}$ quasi-isomorphism and this is also true for the interpolation propagator: We consider graphs with exactly one vertex of the type I and $m$ type II vertices
$$\vspace{-0.3cm}\xymatrix{*\txt{}&*\txt{}&z\ar@//[ddl]\ar@//[dd]\ar@//[ddr]\ar@//[ddrr]_{\cdots}\\&&&&&\\
*\txt{$-\infty\;$}&r_1\ar[l]\ar@{-}[r]&r_2\ar@{-}[r]&r_3\ar@{-}[r]&r_m\ar[r]&*\txt{$\;+\infty$}\\}$$
To compute this integral weights first notice that the interpolation propagator connecting type I with type II vertices exactly equals the original Kontsevich propagator because for $r\in\mathbb{R}$ we have
$$\ln\bigr((s-r)/(\overline{s}-r)\bigr)=\ln\bigr((s-r)/(\overline{s-r})\bigr)=2\I{\arg}\bigr((s-r)/(\overline{s}-r)\bigr)$$
We transform from the upper half plane to $\mathbb{D}_1(0)$ and fix the single vertex of type I to $0\in\mathbb{D}_1(0)$. Because the type II vertices are on $S^1$ and $\ln(1)=0$ by the previous argument for any interpolation propagator $\phi^\lambda$ the computation reduces to
$$\frac{1}{(2\pi\I)^m}\int_0^{2\pi}\I{\d}\varphi_m\underbrace{\int_0^{\varphi_m}\I{\d}\varphi_{m-1}\dots\underbrace{\int_0^{\varphi_3}\I{\d}\varphi_2\underbrace{\int_0^{\varphi_2}\I{\d}\varphi_1}_{\varphi_2}}_{{\varphi^2_3}/{2}}}_{{\varphi^{m-1}_m}/{(m-1)!}}=\frac{1}{m!}$$
where we used the usual $S^1$ coordinates for the ordered $m$ vertices of type II. This implies that if we integrate out the type II vertices in the above picture we yield a factor $1/m!$ for every interpolation propagator.
\section{Type I vertices of valency $2$ in the interpolation}\label{2valent}
By dimensional reasons a non-vanishing weight clearly forces the valency of a vertex to be $\geq2$ for type I and $\geq1$ for the type II. As we will see type I vertices of valency $2$ are quite well understood for the interpolation but there are only a few things for vertices with a valency $\geq3$ and there are cases where the valency of all vertices of type I can be at least $3$: On the one hand we integrate a form of top degree over $C^+_{n,m}$ and the dimension of $C^+_{n,m}$ is $2n+m-2$ but on the other hand every edge contributes with two to the overall valency and therefore the average valency of a type I vertex is smaller than
$4+(m-4)/n$.
This implies that there is a vertex with valency $\geq4+(m-4)/n$ and a vertex with valency $\leq4+(m-4)/n$. For example if $m\leq3$ there is a type I vertex corresponding to either a $3$-vector field, a $2$-vector field, a $1$-vector field or a function.

The following three sub-sections combined enable us to determine what happens if we integrate out a vertex of type I  with valency $2$ of any Kontsevich, Shoikhet integral weight performing a two dimensional integral. This is done by considering three cases, the outcome of the first sub-section \ref{Global} is well-known by different methods and is important in the globalisation of the interpolation formality \cite{WR}, hence necessary for the reader interested in the global formality theorems. The second sub-section \ref{Qua} just states another vanishing Lemma and \ref{Coupl} was added for completeness. For the proof we use a standard method, this method is quite similar has also been used by Merkulov \cite{M}. The strategy in principle allows to evaluate an arbitrary Kontsevich integral weight and the recipe briefly can be described as follows:
\begin{itemize}
\item First we transform the Kontsevich integrals from the upper-half-plane $\mathbb{H}$ to the unit disc $\mathbb{D}_1(0)$ with help of the M\"obius transform
$z\rightarrow\frac{z-\I}{z+\I}$.
Therefore we defined the logarithmic Kontsevich propagator $\phi^{\ln}$ on $\mathbb{D}_1(0)$ by 
$$\phi^{\ln}(w_s,w_t)=\frac{1}{2\pi\I}\ln\left(\frac{(1-\overline{w_s})(w_s-w_t)}{(1-w_s)(1-\overline{w_s}w_t)}\right)$$
instead of the usual Kontsevich propagator \cite{K} defined on $\mathbb{H}$.
\item For the computation on $\mathbb{D}_1(0)$ we switch to polar coordinates, multiply out the $1$-forms and split the integration domain so that we can expand geometric series
\begin{equation}\label{GS}\frac{1}{1-x}=\begin{cases}{\sum_{l=0}^\infty{x}^l\quad\hspace{1.72cm}\text{if}\;\vert{x}\vert<1}\\{-\sum_{l=0}^\infty1/x^{l+1}\;\;\;\quad\hspace{0.3cm}\text{if}\;\vert{x}\vert>1}\end{cases}\end{equation}
\item Finally to calculate the integrals we use the standard Stokes formulas
\begin{equation}\label{TI}
\int_{0}^{2\pi}{\d}{\varphi}e^{i{k}\varphi}=2\pi\delta_0^k\quad\forall\;{k}\in\mathbb{Z}\end{equation}
\begin{equation}\label{PI}\int_{a}^b{\hspace{-0.3cm}{\d}r}\hspace{0.1cm}r^{n}\ln^m(r)=\begin{cases}\sum_{j=0}^{m}\frac{(-1)^jj!}{(n+1)^{j+1}}\binom{m}{j}\left(a^{n+1}\ln^{m-j}(a)-b^{n+1}\ln^{m-j}(b)\right)\quad\text{for}\;n\neq-1\\{\ln^{m+1}(a/b)/(m+1)\hspace{2.13cm}\text{for}\;n=-1}\end{cases}\end{equation}
where $m\in\mathbb{N}$ and we obviously assume in some cases $a,b\neq0$.
\end{itemize}
In practice the oscillating integral formula \ref{TI} will couple certain indices of geometric series, for example for wheel-like graphs \ref{Wheel} this leads to $\int_0^1\d{x}_1\cdots\int_0^1\d{x}_n1/(1-x_1\cdots{x}_n)$. Application of with \ref{PI}  identifies this integrals with the $\zeta$ function evaluated at positive integers $n\geq2$ as noticed by Merkulov, we will give a detailed demonstration of the method in the appendix \ref{MerkWe}.

Of course for high dimensional Kontsevich integrals this method can still be a lot of work, but for example the method establishes the following helpful tool:
\begin{proposition}\label{MerkulovVanishing}
For an on $\mathbb{D}_1(0)$ convergent holomorphic power series
$$f(x)=\sum_{n=0}^{\infty}a_nx^n$$
we have the vanishing
\begin{equation}
\int_{w\in\mathbb{D}_1(0)\backslash\{p\}}\hspace{-1cm}{\d}\overline{w}{\d}w\;f(\overline{w}){\left(\frac{1}{w-p}+\frac{\overline{p}}{1-w\overline{p}}\right)}=0\end{equation}
\end{proposition}
\begin{proof} For $f$ as above we calculate
$$\int_{w\in\mathbb{D}_1(0)\backslash\{p\}}\hspace{-1cm}{\d}\overline{w}{\d}w{\frac{\overline{p}f(\overline{w})}{w\overline{p}-1}}=\frac{2\pi}{\I}\sum_{n=0}^{\infty}\frac{a_n\overline{p}^{n+1}}{n+1}=\frac{2\pi}{\I}\int_0^{\overline{p}}\hspace{-0.2cm}{\d}zf(z)$$
where we compute with the previous recipe and just used \ref{GS}, \ref{TI} and \ref{PI}. Maybe its interesting to notice that we can forget about the measure zero set $\{w=p\}$, switching to polar coordinates centred at the singularity the pole of order one cancels with the volume element. For the other integral we first split the integration domain
$$\int_{w\in\mathbb{D}_1(0)\backslash\{p\}}\hspace{-1cm}{\d}\overline{w}{\d}w{\frac{f(\overline{w})}{w-p}}=\int_{0\leq\vert{w}\vert\leq\vert{p}\vert}\hspace{-1cm}{\d}\overline{w}{\d}w{\frac{f(\overline{w})}{\frac{w}{p}-1}}\frac{1}{p}+\int_{{p}\vert<\vert{w}\vert\leq1}\hspace{-1cm}{\d}\overline{w}{\d}w{\frac{f(\overline{w})}{1-\frac{p}{w}}}\frac{1}{w}$$
and again calculate with \ref{GS},\ref{TI} and \ref{PI}.\end{proof}\subsection{Globalisation vanishing Lemma for the interpolation}\label{Global}
As mentioned we need for Dolgushev's version of the Fedosov globalisation of the local logarithmic Kontsevich-, Tsygan formality theorems the following vanishing Lemmas:
\begin{itemize}
\item For vector fields $\gamma_i\in{T}^0_\mathrm{poly}(A)$ and $n\geq2$ we have
$\mathcal{U}^{\lambda}(\gamma_1,...,\gamma_n)=0$.
\item If $\gamma_i\in{T}_\mathrm{poly}(A)\;\forall{i}=2,...,n$ are general polyvector fields and $y_1\in{T}^0_\mathrm{poly}(A)$ is a linear vector field we have
$\mathcal{U}^{\lambda}(\gamma_1,...,\gamma_n)=0$.
\end{itemize}
We sketch the argumentation of \cite{K} that reduces the situation to special graphs: From Kontsevich's construction \ref{KoOp} of differential operators corresponding to graphs it is clear that the second statement concerns vertices of valency $2$: If we differentiate an in the coordinates linear vector field more than once the result vanishes. Also the first statement is clearly a statement about vertices of valency $2$: To define a non-vanishing weight the number of edges should equal $2n+m-2$ and every vector field contributes with one edge, this implies $2n+m-2=n$ and hence $n\leq2$, the only graph is\\
\begin{equation}\label{2wheel}
\xymatrix{ 
w_1\ar@/^1.3pc/[rr]&&w_2\ar@/^1.3pc/[ll]}
\end{equation}\\
Hence this two vanishing Lemmas only concern the case where a type I vertex of valency $2$ has exactly one departing and one incoming edge. The action of $\mathbb R^+\ltimes\mathbb R$ allows to fix a vertex different from $w$ and by Fubini's theorem we can isolate a two dimensional integral. In the logarithmic case the statements now follow from the vanishing Lemma 
\begin{equation}\label{GlobVan}
\int_{w\in\mathbb{D}_1(0)\backslash\{w_1,w_2\}}{\d}\phi^1(w,w_1){\wedge}{{\d}\phi^1(w_2,w)}=0
\end{equation}
where $w_1,w_2$ are inside of the unit disk $\mathbb{D}_1(0)$, {\em i.e.} we used the M\"obius transform
$z\rightarrow{w}=(z-\I)/(z+\I)$
from the upper half plane onto $\mathbb{D}_1(0)$ and especially maps $\mathbb{R}$ to the unit circle $S^1$ and $\infty$ to $1$.

The previous integral corresponds to a single vertex of type I with exactly one incoming and one departing arrow pictured below
\begin{equation}\label{GV}
\xymatrix{ 
w_1\ar@/^1pc/[r]&w\ar@/_1pc/[r]&w_2}\end{equation}
This implies that the Kontsevich integral weight of a graph containing such a vertex of valency $2$ vanishes.  This vanishing Lemma is well-known, but we will present here a quite easy proof. To give the reader some alternatives for the globalisation we also sk{\em etc}h at the end of this section another proof of the globalisation vanishing Lemma that to the best of our knowledge originally traces back to C. Torossian, we thank T. Willwacher for pointing out this alternative proof of the globalisation. Our calculation goes as follows:

First notice that in the logarithmic case only the departing edge contains $\d\overline{w}$ and hence we consider
$$\int_{w\in\mathbb{D}_1(0)\backslash\{w_1,w_2\}}\hspace{-1cm}{\d}\overline{w}{\d}{w}\left(\frac{-1}{1-\overline{w}}+\frac{w_1}{1-\overline{w}w_1}\right)\left(\frac{-1}{w_2-w}+\frac{\overline{w_2}}{1-\overline{w_2}w}\right)$$
and now \ref{MerkulovVanishing} could be used to eliminate all the terms appearing in this integral but let us discuss the situation in more detail:
$$=2\I\int_{0}^1\hspace{-0.3cm}{r}{{\d}r}\int_{0}^{2\pi}\hspace{-0.5cm}{{\d}\varphi}\left(\frac{-1}{1-re^{-i\varphi}}+\frac{w_1}{1-w_1{r}e^{-i\varphi}}\right)\left(\frac{-1}{w_2-re^{i\varphi}}+\frac{\overline{w_2}}{1-\overline{w_2}re^{i\varphi}}\right)$$
where we used to polar coordinates ${\d}{w}{\d}\overline{w}=(2/\I)r{\d}r\d\varphi$. 
Now notice we can also forget about the measure zero set $\{w=w_1\}{\cup}\{w=w_2\}$, switching to polar coordinates centred at the singularities the pole of order one cancels with the volume element, hence the integrand is non-singular. 

The factors ${w_i}re^{i\varphi}$ are by assumption inside of the unit disk and for example we can evaluate the integral
$$\int_{0}^1{r}{{\d}r}\int_{0}^{2\pi}{{\d}\varphi}\frac{w_1}{1-w_1re^{-i\varphi}}\frac{\overline{w_2}}{1-\overline{w_2}re^{i\varphi}}$$
with help of the geometric series as follows:
$$w_1\overline{w_2}\int_{0}^1\hspace{-0.3cm}{r}{{\d}r}\int_{0}^{2\pi}\hspace{-0.5cm}{{\d}\varphi}\left(\sum_{n=0}^\infty{w_1}^nr^ne^{-in\varphi}\right)\left(\sum_{m=0}^\infty{\overline{w_2}}^mr^me^{im\varphi}\right)$$
Cauchy product multiplication of the geometric series, $\int_{0}^{2\pi}{\d}{\varphi}e^{i{n}\varphi}=2\pi\delta_0^n$ and $\int_{a}^b{{\d}r}r^{n}=\frac{1}{n+1}(a^{n+1}-b^{n+1})$ for $n\neq-1$ and $a,b\neq0$ if $n<-1$ yield
$$2\pi{w_1}\overline{w_2}\sum_{n=0}^\infty{w_1}^n{\overline{w_2}}^n\int_{0}^1{{\d}r}r^{2n+1}=2\pi{w_1}\overline{w_2}\sum_{n=0}^\infty\frac{(w_1\overline{w_2})^n}{2n+2}=-\pi\ln(1-w_1\overline{w_2})$$

The computation of for example the integral
$$\int_{0}^1{r}{{\d}r}\int_{0}^{2\pi}{{\d}\varphi}\frac{w_1}{1-re^{-i\varphi}w_1}\frac{-1}{w_2-re^{i\varphi}}$$
is quite analogous, but we have to be more careful with the geometric series and consider the cases $0<r<\vert{w_2}\vert$ and $\vert{w_2}\vert<r<1$ separately:
$$-\frac{w_1}{w_2}\int_{0}^{\vert{w_2}\vert}{r}{{\d}r}\int_{0}^{2\pi}{{\d}\varphi}\left(\sum_{n=0}^\infty{w_1}^nr^ne^{-in\varphi}\right)\left(\sum_{m=0}^\infty{{w_2}}^{-m}r^me^{im\varphi}\right)$$
$$+w_1\int^1_{\vert{w_2}\vert}{{\d}r}\int_{0}^{2\pi}{{\d}\varphi}\left(\sum_{n=0}^\infty{w_1}^nr^ne^{-in\varphi}\right)\left(\sum_{m=0}^\infty{{w_2}}^{m}r^{-m}e^{-im\varphi}\right)e^{-i\varphi}$$
Again the Cauchy product and $\int_{0}^{2\pi}{\d}{\varphi}e^{i{n}\varphi}=2\pi\delta_0^n$ yield that the second of this integrals vanishes and the first integral equals 
$$-2\pi\frac{w_1}{w_2}\sum_{n=0}^\infty{w_1}^n{w_2}^{-n}\int_{0}^{\vert{w_2}\vert}{{\d}r}r^{2n+1}=\pi\ln(1-w_1\overline{w_2})$$

The two remaining integrals are easy evaluated by just setting $w_1=1$ in the previous computation.

Essentially the same arguments of the previous and following computations yield
\begin{equation}\label{GobalizationVanishing}0=\int_{w\in\mathbb{D}_1(0)\backslash\{w_1,w_2\}}\hspace{-1.2cm}{\d}\phi^\lambda(w,w_1){\wedge}{{\d}\phi^\lambda(w_2,w)}
\end{equation}
for all interpolation propagators $\phi^\lambda$: With help of the vanishing lemma \ref{GlobVan} and its complex conjugate this integral can be rewritten as follows
\begin{align*}=\lambda(1-\lambda)\int_{w\in\mathbb{D}_1(0)\backslash\{w_1,w_2\}}\hspace{-1cm}{\d}w{\d}&\overline{w}\biggr[\left(\frac{1}{w-w_1}+\frac{1}{1-w}\right)\left(\frac{1}{\overline{w}-\overline{w_2}}+\frac{w_2}{1-w_2\overline{w}}\right)\\&-\left(\frac{1}{w-w_2}+\frac{\overline{w_2}}{1-\overline{w_2}w}\right)\left(\frac{1}{\overline{w}-\overline{w_1}}+\frac{1}{1-\overline{w}}\right)\biggr]\end{align*}
The vanishing Lemma \ref{MerkulovVanishing} and its complex conjugate reduce the computation to
\begin{align}\label{ReCo}=\lambda(1-\lambda)\int_{w\in\mathbb{D}_1(0)\backslash\{w_1,w_2\}}\hspace{-1cm}{\d}w{\d}&\overline{w}\biggr[\frac{1}{w-w_1}\left(\frac{1}{\overline{w}-\overline{w_2}}+\frac{w_2}{1-w_2\overline{w}}\right)\\&-\left(\frac{1}{w-w_2}+\frac{\overline{w_2}}{1-\overline{w_2}w}\right)\frac{1}{\overline{w}-\overline{w_1}}\biggr]\nonumber\end{align}
The previous computation
$$\int_{w\in\mathbb{D}_1(0)\backslash\{w_1,w_2\}}{\d}{w}{\d}\overline{w}\frac{w_1}{1-\overline{w}w_1}\frac{1}{w_2-w}=\frac{2}{\I}\pi\ln(1-w_1\overline{w_2})$$
helps to evaluate two of the terms that appear in the previous integral \ref{ReCo}. The other two terms, that appear and cancel this evaluated terms are
$$\int_{w\in\mathbb{D}_1(0)\backslash\{w_1,w_2\}}\hspace{-0.9cm}{\d}{w}{\d}\overline{w}\left(\frac{-1}{w_1-{w}}\frac{1}{\overline{w}-\overline{w_2}}+\frac{1}{w_2-w}\frac{1}{\overline{w}-\overline{w_1}}\right)=\frac{2}{\I}\pi\ln\left(\frac{1-\overline{w_1}{w_2}}{1-w_1\overline{w_2}}\right)$$
For this integral we first assume $\vert{w_1}\vert<\vert{w_2}\vert$ and switch to polar coordinates. The integral from $0$ to $\vert{w_1}\vert$ contributes with
$$\pi(\ln(1-\overline{w_1}/\overline{w_2})-\ln(1-w_1/w_2))$$
The integral from $\vert{w_1}\vert$ to $\vert{w_2}\vert$ vanishes and the integral from $\vert{w_2}\vert$ to $1$ contributes with
$$\pi(\ln(1-\overline{w_1}{w_2})-\ln(1-w_1\overline{w_2})-(\ln(1-\overline{w_1}/\overline{w_2})+\ln(1-w_1/w_2))$$
Here one has to be careful with signs,  for example that two times also a term $\ln(\vert{w_2}\vert)$ appears but this two terms cancel each other. The two $\ln(\vert{w_2}\vert)$ terms ``appear" in the computation of
$$\int^1_{\vert{w_2}\vert}{\hspace{-0.4cm}{\d}r}\int_{0}^{2\pi}{\hspace{-0.4cm}{\d}\varphi}\frac{1}{r}\Biggr[\frac{1}{1-w_1/(re^{i\varphi})}\frac{1}{1-\overline{w_2}/(re^{-i\varphi})}-\frac{1}{1-w_2/(re^{i\varphi})}\frac{1}{\overline{1-w_1/(re^{-i\varphi})}}\Biggr]$$
when integrating the first coefficients in the geometric series where we use $\ln(r)$ as a primitive of $1/r$ as usual. This two singular seeming terms do not really contribute because they cancel before the integration. The assumption $\vert{w_1}\vert<\vert{w_2}\vert$ makes the Taylor expansion for the logarithm convergent and finally we can drop the assumption because the integrand depends continuous on its arguments and symmetry.

To give the reader an alternative, more direct proof let us mention the following: This globalisation vanishing Lemma has also been proven by Rossi, Willwacher and Torossian \cite{WR} applying Stokes theorem on a fibre integral of the exterior derivative of the differential form corresponding to the graph\\
$$
\xymatrix{ 
w_1\ar@/^1.4pc/[r]&w\ar@/_1.4pc/[r]&w'\ar@/^1.4pc/[r]&w_2}$$\\
where the only codimension $1$ boundary contribution that appears is three times the graph \ref{GV}, hence this can independent of signs only sum up to $0$ if \ref{GV} vanishes, \ref{KV} contains quite similar computations with Stokes theorem.

\subsection{Interpolation vanishing of target vertices of type I and valency $2$}\label{Qua}
Shoikhet conjectured a certain compatibility of $\star$ products with Casimir functions \cite{S} and rephrased the conjecture states that the weight of any graph containing a vertex of type I corresponding to a function (equivalent with no outgoing edges) with arbitrary many edges pointing on it vanishes for the original Kontsevich propagator $\phi^{1/2}$. In the logarithmic $\phi^1$ respectively anti-logarithmic $\phi^0$ case this vanishing is quite clear because the form of maximal degree we integrate only contains $\d{w}$ respectively $\d\overline{w}$.

Here we do not prove the full conjecture of Shoikhet, but we claim in the $2$-valency case the vanishing lemma
\begin{equation}\label{TargetVanishing}\int_{w\in\mathbb{D}_1(0)\backslash\{w_1,w_2\}}{\d}\phi^\lambda(w_1,w){\wedge}{{\d}\phi^\lambda(w_2,w)}=0\end{equation}
 for the whole $\lambda$-interpolation family.

This integral can be rewritten as follows
\begin{align*}=\lambda(1-\lambda)\int_{w\in\mathbb{D}_1(0)\backslash\{w_1,w_2\}}\hspace{-1cm}{\d}w{\d}&\overline{w}\biggr[\left(\frac{1}{w-w_1}+\frac{\overline{w_1}}{1-\overline{w_1}w}\right)\left(\frac{1}{\overline{w}-\overline{w_2}}+\frac{w_2}{1-w_2\overline{w}}\right)\\&-\left(\frac{1}{w-w_2}+\frac{\overline{w_2}}{1-\overline{w_2}w}\right)\left(\frac{1}{\overline{w}-\overline{w_1}}+\frac{w_1}{1-w_1\overline{w}}\right)\biggr]\end{align*}
Formula \ref{MerkulovVanishing} reduces the computation analogous to \ref{ReCo} and we can argue with our calculations for the globalisation vanishing Lemma in the interpolation case.

The corresponding graphical picture is\\
\begin{equation}\label{Morsewheel}
\xymatrix{ 
w_1\ar@/_0.3pc/[dr]&&w_2\ar@/^0.3pc/[dl]\\&{w}&}
\end{equation}\\
Hence the local Kontsevich formality morphism where we plug in a function $f$ as a I type argument only contains terms with cubic derivatives of this function, in other words if the function $f$ is quadratic then the operators $\mathcal{U}^\lambda(f,\gamma_2\cdots\gamma_n)$ vanish identically.

\subsection{Coupling by source vertices of type I and valency $2$}\label{Coupl}
When applying the previous computation method to evaluate the to \ref{Morsewheel} dual picture, with flipped direction of the edges, we yield
\begin{equation}\label{2C}\int_{w\in\mathbb{D}_1(0)\backslash\{w_1,w_2\}}\hspace{-1.5cm}{\d}\phi^\lambda(w,w_1){\wedge}{{\d}\phi^\lambda(w,w_2)}=\frac{1}{\pi}\arg\left((1-w_1\overline{w_2})\cdot\frac{1-w_2}{1-w_1}\right)\end{equation}
where we used $\ln(z)-\ln(\overline{z})=2\I\arg(z)$. More detailed the computation starts with
\begin{align*}=\int_{w\in\mathbb{D}_1(0)\backslash\{w_1,w_2\}}&\biggr(\frac{-{\d}\overline{w}}{1-\overline{w}}+\frac{{\d}w}{1-w}+\lambda\frac{-{\d}w}{w_1-w}+\lambda\frac{\overline{{\d}w}}{1-w_1\overline{w}}\\&+(1-\lambda)\frac{{\d}\overline{w}}{\overline{w_1}-\overline{w}}-(1-\lambda)\frac{\overline{w_1}{\d}w}{1-\overline{w_1}w}\biggr)\\&\biggr(\frac{-{\d}\overline{w}}{1-\overline{w}}+\frac{{\d}w}{1-w}+\lambda\frac{-{\d}w}{w_2-w}+\lambda\frac{\overline{{\d}w}}{1-w_2\overline{w}}\\&+(1-\lambda)\frac{{\d}\overline{w}}{\overline{w_2}-\overline{w}}-(1-\lambda)\frac{\overline{w_2}{\d}w}{1-\overline{w_2}w}\biggr)\end{align*}
This time we can just use some combinations of the formulas established in the other proofs for vertices of valency $2$, we do not need to calculate other integrals: At the end we yield the $\lambda$ independent r.h.s. of equation \ref{2C} because of
$\lambda-(1-\lambda)=-1$
for the uncoupled factors $1-w_i$ and because of
$\lambda^2+2\lambda(1-\lambda)+(1-\lambda)^2=1$
for the coupled factor $1-w_1\overline{w_2}$.

Let us discuss the compatibility of the previous result with the normalization: The compatibility of the result \ref{2C} with the $HKR$-normalization \ref{HKRnormalization} implies that
$$\frac{1}{(2\pi)^2}\int_{w\in\mathbb{D}_1(0)\backslash\{u_1,u_2\}}\hspace{-0.6cm}{\d}\phi^\lambda(w,u_1){\wedge}{{\d}\phi^\lambda(w,u_2)}=-\frac{\I}{\pi}\arg\left((1-u_1\overline{u_2})\cdot\frac{1-u_2}{1-u_1}\right)$$
should naturally equal $\I/2$, where we can choose any distinct unit vectors $u_1,u_2\neq1$ via the usual $\mathbb R^+\ltimes\mathbb R$ action that obviously acts in Kontsevich's original formulation. By assumption $u_1\neq{u_2}$ and $u_i\neq1=w(\infty)$, hence we do not have to consider the value zero where the argument function $\arg$ is not defined. Notice that
$\arg\left((1-u_1\overline{u_2})\cdot\frac{1-u_2}{1-u_1}\right)$
is by construction $\mathbb R^+\ltimes\mathbb R$ invariant. To see this we just rewrite this expression again with the M\"obius transform
$w(z)=(z-\I)/(z+\I)$ as the difference of two propagators $\ln((\overline{z_2}-z_1)/(z_2-\overline{z}_1)$. If we choose $u_1=-1$ in the previous formula we get
\begin{align}
\I/2=&-\frac{\I}{\pi}\arg\left((1+\overline{u_2})\cdot\frac{1-u_2}{1+1}\right)=-\frac{\I}{\pi}\arg\left(-\Im(u_2)\right)
\end{align}
This result just depends on the natural order of the two points on the real line.
\section{$L_\infty$-relations for $\mathcal{U}^\lambda$ and the interpolation $\star$ products}
In summary the previous computations showed that $\forall\lambda\in\mathbb{C}$ the interpolation integral weights satisfy the same intricate quadratic relations concerning their codimension $1$ boundary strata as the original weights where $\lambda=1/2$. Therefore for the complex interpolation $\phi^\lambda$ the formula
\begin{align*}\;0=&\hspace{-0.3cm}\sum_{\substack{{\Gamma\in\mathcal{G}(n,m)}\\{\vert{E}_\Gamma\vert=2n+m-3}}}\left(\int_{{C}^+_{n,m}}\d\prod_{i=1}^{2n+m-3}{\d}\phi^\lambda({e_i})\right)U_\Gamma\\&=\hspace{-0.3cm}\sum_{\substack{{\Gamma\in\mathcal{G}(n,m)}\\{\vert{E}_\Gamma\vert=2n+m-3}}}\left(\int_{\partial{C}^+_{n,m}}\prod_{i=1}^{2n+m-3}{\d}\phi^\lambda({e_i})\right)U_\Gamma\\&=\sum_{\substack{{\Gamma\in\mathcal{G}(n,m)}\\{\vert{E}_\Gamma\vert=2n+m-3}}}\left[\hspace{-1.5cm}\sum_{\substack{{\Gamma'\in\mathcal{G}(n-1,m)}\\{\hspace{1.5cm}\exists(v_i,v_j)\in{E}_\Gamma:\Gamma/(v_i,v_j)=\Gamma'}}}\hspace{-1.1cm}W^\lambda_{\Gamma'}+\hspace{-1.5cm}\sum_{\hspace{2cm}\substack{{A\sqcup{B}\subsetneq[n]\sqcup[m]\;,\Gamma_{A,B}\in\mathcal{G}_{A,B}}\\{\hspace{0.2cm}\vert{E}_{\Gamma_{A,B}}\vert=2\vert{A}\vert+\vert{B}\vert-2}}}\hspace{-1.7cm}W^\lambda_{\Gamma_{A,B}}W^\lambda_{\Gamma/\Gamma_{A,B}}\right]U_\Gamma\end{align*}
\begin{align*}=&\sum_{\substack{{\Gamma'\in\mathcal{G}(n-1,m)}\\{\vert{E}_\Gamma'\vert=2n+m-2}}}\hspace{-0.4cm}W^\lambda_{\Gamma'}\hspace{-1.7cm}\sum_{\substack{{\Gamma\in\mathcal{G}(n,m)}\\{\hspace{1.5cm}\exists(v_i,v_j)\in{E}_\Gamma:\Gamma/(v_i,v_j)=\Gamma'}}}\hspace{-1.9cm}U_\Gamma\\&+\hspace{-1.7cm}\sum_{\hspace{2cm}\substack{{A\sqcup{B}\subsetneq[n]\sqcup[m]\;,\Gamma_{A,B}\in\mathcal{G}_{A,B}}\\{\hspace{0.2cm}\vert{E}_{\Gamma_{A,B}}\vert=2\vert{A}\vert+\vert{B}\vert-2}}}\hspace{-1.7cm}W^\lambda_{\Gamma_{A,B}}W^\lambda_{\Gamma'_{[n]/A,\{\bullet\}\cup[m]/B}}\hspace{-1.7cm}\sum_{\substack{{\Gamma\in\mathcal{G}(n,m)}\\{\hspace{1.5cm}\Gamma/\Gamma_{A,B}=\Gamma'_{[n]/A,\{\bullet\}\cup[m]/B}}}}\hspace{-1.9cm}U_\Gamma\end{align*}
is also a legitimate rewriting as usual, where we rearranged the differential operators $U_\Gamma$ appearing in this impressive equations with help of the Leibniz rule in the sense of Kontsevich \cite{K}: We can finally identify this equations with the $L_\infty$-relations
\begin{align}\label{Kontsevich}0=&\sum_{i<j}\pm\mathcal{U}^\lambda_{n-1}([\gamma_i,\gamma_j]_\mathrm{S-N}\wedge\cdots\wedge\gamma_n)\\&-\mathrm{d}_\mathrm{H}\mathcal{U}^\lambda_n(\gamma_1\wedge\cdots\wedge\gamma_n)\\&+\frac{1}{2}\hspace{-0.5cm}\sum_{\substack{{k,l\geq1}\\{k+l=n}\;,\;{\sigma\in\sum_n}}}\hspace{-0.7cm}\frac{\pm1}{k!l!}[\mathcal{U}^\lambda_k(\gamma_{\sigma_1}\wedge\cdots\wedge\gamma_{\sigma_k}),\mathcal{U}^\lambda_l(\gamma_{\sigma_{k+1}}\wedge\cdots\wedge\gamma_{\sigma_n})]_\mathrm{G}\nonumber\end{align}
that characterise a $L_\infty$-morphism
$$\mathcal{U}^{\lambda}:\left(T_\mathrm{poly}(A),0,[\cdot,\cdot]_\mathrm{S-N}\right)\leadsto\left(D_\mathrm{poly}(A),\mathrm{d}_\mathrm{H},[\cdot,\cdot]_\mathrm{G}\right)$$
between the two Lie algebras we are mainly interested in deformation quantization.

We refer the reader to \cite{K} for a discussion of signs and the factorials
$\prod\vert{\text{Star}{(v_i)}}\vert!$
that appear in the definition of the formality map
$$\mathcal{U}^\lambda_n=\sum_{m\geq0}\sum_{\Gamma\in\mathcal{G}(n,m)}\prod_{i=1}^{n}\frac{1}{\vert\text{Star}(i)\vert!}W^\lambda_\Gamma{U}_\Gamma$$
Especially that Kontsevich put the right signs is delicate, the article \cite{AMM} contains a detailed discussion of the signs in Kontsevich's formality map.\end{proof}
As in the usual case $\mathcal{U}^{1/2}$ the $L_\infty$ property ensures that solutions of the Maurer-Cartan equation are mapped to solutions and as a consequence we have on $\mathbb{R}^d$ for every $\lambda\in\mathbb{C}$ and for any Poisson structure $\Pi$ the Kontsevich interpolation $\star^\lambda$ star products
\begin{equation}\label{KSTAR}
f\star^\lambda{g}=fg+\sum_{l=0}^{\infty}\frac{\hbar^l}{l!}\mathcal{U}^\lambda_n(\underbrace{\Pi\wedge\cdots\wedge\Pi}_{l\;\text{times}})(f\otimes{g})
\end{equation}
We refer to the original article \cite{K} for details and the precise equivalence statement.
\section{Some relations of the interpolation polynomials}
Clearly the interpolation weights are holomorphic polynomials in $\lambda$ and the maximal polynomial degree is restricted by the number of edges $\#{E}_\Gamma$. It seems there are not many relations among the weights of a given graph, we just could figure out the following identities:
\subsection{Derivation of the $L_\infty$-relations}
We have by iterated derivation of the original $L_\infty$-relations the derived equations
\begin{align}0=&\hspace{-1.0cm}\sum_{\substack{{\Gamma'\in\mathcal{G}(n-1,m)}\\{\hspace{1.5cm}\exists(v_i,v_j)\in{E}_\Gamma:\Gamma/(v_i,v_j)=\Gamma'}}}\hspace{-1.1cm}\partial^{k}_\lambda{W}^\lambda_{\Gamma'}\\&+\hspace{-1.9cm}\sum_{\hspace{2cm}\substack{{A\sqcup{B}\subsetneq[n]\sqcup[m]\;,\Gamma_{A,B}\in\mathcal{G}_{A,B}}\\{\hspace{0.2cm}\vert{E}_{\Gamma_{A,B}}\vert=2\vert{A}\vert+\vert{B}\vert-2}}}\sum_{l=0}^{k}\binom{k}{l}\partial^{l}_\lambda{W}^\lambda_{\Gamma_{A,B}}\partial^{k-l}_\lambda{W}^\lambda_{\Gamma/\Gamma_{A,B}}\end{align}
The main part of the previous formulas was proved in the previous section \ref{REG} and the iterated derivation of this polynomial equations is standard. To interpret this derived equations on the level of differential operators seems to be a non-trivial question.

However there are also some more bizarre, in some sense derived relations that are satisfied by the interpolation integral weights as proved by Rossi and Willwacher, for details and a interpretation of their relations we refer the reader to the articles \cite{WR} and \cite{WiGRT}.

\subsection{The conjugation symmetry of the interpolation polynomials}
By conjugation of the interpolation propagator we have for the interpolation weights the symmetry
\begin{equation}\label{Fun}
\overline{W^\lambda_\Gamma}={W}^{1-\overline{\lambda}}_\Gamma
\end{equation}
For the Hadamard factorisation
$W^\lambda_\Gamma=C^\Gamma\prod_{i=0}^{\#{E}_\Gamma}\left(z^\Gamma_i-\lambda\right)$
this just implies the symmetry
$z^\Gamma_i\leftrightarrow1-\overline{z}^\Gamma_i$
but we do not have formulas to determine the zeros and from a computational point of view the following is more practical:

Consider $W^\lambda_\Gamma$ as a power series {\em i.e.} 
\begin{equation}\label{PolExp}W^\lambda_\Gamma=\sum_{i=0}^{\#{E}_\Gamma}a^\Gamma_i\lambda^i\end{equation}
It is clear that ${a}^\Gamma_0$ is the weight of the anti-logarithmic propagator and \ref{Fun} implies
\begin{align}\label{Tay1}
\sum_{n=0}^{\#{E}_\Gamma}\overline{a}^\Gamma_n\overline{\lambda}^n\stackrel{!}{=}&\sum_{n=0}^{\#{E}_\Gamma}{a}^\Gamma_n(1-\overline{\lambda})^n=\sum_{n=0}^{\#{E}_\Gamma}\overline{\lambda}^n(-1)^n\left(\sum_{l=n}^{\#{E}_\Gamma}\binom{l}{n}{a}^\Gamma_l\right)
\end{align}
for the coefficients in $a^\Gamma_i$, hence we yield that \ref{Fun} is equivalent to
\begin{equation}\label{FunImp}
\overline{a}^\Gamma_n=(-1)^n\sum_{l=n}^{\#{E}_\Gamma}\binom{l}{n}{a}^\Gamma_l
\end{equation}
Clearly the previous equations do not determine the coefficients $a^\Gamma_n$ completely, but the formula in general reduces the computation a bit:

If ${\#{E}_\Gamma}$ is even the real parts of the coefficients ${a}^\Gamma_{\#{E}_\Gamma-2n-1}$ is determined by \ref{FunImp} if we know ${a}^\Gamma_{\#{E}_\Gamma-2n}\;\forall{n}\in\mathbb{N}$ and if ${\#{E}_\Gamma}$ is odd the real parts of the coefficients ${a}^\Gamma_{\#{E}_\Gamma-2n-2}$ is determined by \ref{FunImp} if we know ${a}^\Gamma_{\#{E}_\Gamma-2n-1}\;\forall{n}\in\mathbb{N}$.  

Clearly by the previous symmetry argument \ref{Fun} the original Kontsevich integral weights, {\em i.e.} we set in the interpolation $\lambda=1/2$, are real, this fact is well-known.

\subsection{Reality of the polynomials of type $(n,0)$ and $(n,1)$}
We claim that some more weights are actually purely real or purely imaginary
\begin{equation}\label{Symmetry}\overline{W^\lambda_\Gamma}=(-1)^{n-1}{W^{\overline\lambda}_\Gamma}={W^{1-\overline\lambda}_\Gamma}\quad\forall\;\Gamma\;\in\Gamma_{n,0}\cup\Gamma_{n,1}\end{equation}
This symmetry implies for the original Kontsevich integral weights the vanishing
$${W^{1/2}_\Gamma}=0\;\forall\;\Gamma\;\in\Gamma_{2n,0}\cup\Gamma_{2n,1}$$

Let us for simplicity assume $\Gamma\;\in\Gamma_{n,0}$. To show \ref{Symmetry} we 
use the universal method for computing Kontsevich integral weights: First we transform the integrals from $\mathbb{H}$ to $\mathbb{D}_1(0)$ with help of the mentioned M\"obius transform
$z\rightarrow\frac{z-\I}{z+\I}$.

We use the action of $\mathbb R^+\ltimes\mathbb R$ to fix a vertex of type $I$ at zero, the centre of $\mathbb{D}_1(0)$, this choice is convenient. Some of the $1/2\pi\I$ factors in the definition of the interpolation propagator
\begin{equation*}\label{Interpol}\phi^{\lambda}(w_s,w_t)=\frac{1}{2\pi\I}\left(\lambda\ln\left(\frac{(1-\overline{w_s})(w_s-w_t)}{(1-w_s)(1-\overline{w_s}w_t)}\right)-(1-\lambda)\overline{\ln\left(\frac{(1-\overline{w_s})(w_s-w_t)}{(1-w_s)(1-\overline{w_s}w_t)}\right)}\right)\end{equation*}
compensate the $2\pi\I$ coming from the natural polar coordinates on $\mathbb{D}_1(0)$, the remaining integral is real if $\lambda$ is real:

Every vertex of type $I$ has a rotational degree of freedom, only the vertex of type $I$ that we did fix not. As we will see every rotational degree $e^{\I\varphi}$ brings a factor $\I\d\varphi$ and integration
cancels $n-1$ of the $2n-2$ factors $\frac{1}{2\pi\I}$ in the definition of interpolation propagators. We inductively expand the integral kernel with power series. For this expansion first observe that the partial derivatives of $\phi^{\lambda}(w_s,w_t)$ with respect to $w_s$ or $w_t$ and their conjugates are essentially sums of geometric series and we have to split the integration domain appropriate inductively because of
$$\frac{1}{1-x}=\sum_{l=0}^\infty{x}^l\quad\text{if}\quad\vert{x}\vert<1\quad\text{and}\quad\frac{1}{1-x}=-\sum_{l=0}^\infty1/x^{l+1}\quad\text{if}\quad\vert{x}\vert>1$$
Notice that this division in sub cases only depends on the norm of $x$. The oscillating integral formula
\ref{TI}, for instance
$$\int_{0}^{2\pi}{\d}{\varphi}e^{i{k}\varphi}=2\pi\delta_0^k\quad\;\forall\quad{k}\in\mathbb{Z}$$
couples Taylor expansions of different series, we see that the real or imaginary part respectively cancel. For the inductive integration of the radial coordinate we just need the obvious reality of the integrals
\begin{equation*}\int_{a}^b{\hspace{-0.3cm}{\d}r}\hspace{0.1cm}r^{n}\ln^m(r)=\begin{cases}\sum_{j=0}^{m}\frac{(-1)^jj!}{(n+1)^{j+1}}\binom{m}{j}\left(a^{n+1}\ln^{m-j}(a)-b^{n+1}\ln^{m-j}(b)\right)\quad\text{for}\;n\neq-1\\{\ln^{m+1}(a/b)/(m+1)\hspace{2.13cm}\text{for}\;n=-1}\end{cases}\end{equation*}
For this step it is helpful to have a look at the calculation of the Merkulov $n$-wheels.

The reason why the previous integration argumentation does not work if $\Gamma\;\in\Gamma_{n,m}$ with $m>1$ is the following the vertices of type $II$ are ordered: The ordering of type $II$ vertices {\em a priori} makes it for example necessary to consider certain linear combinations of integrals
$\int_{0}^{2\pi}{\d}{\varphi}e^{i{k}\varphi}\varphi^l=\sum_{a=0}^{l-1}(-1)^aa!\binom{l}{a}(2\pi)^{l-a}/(\I{k})^{a+1}\quad\text{if}\quad{k}\neq0$.
Because of the mixed imaginary and real part it is not clear that in this linear combinations with non-trivial coefficients either real or imaginary part will cancel in general if $\lambda\neq1/2$.

\chapter{\textit{Shoikhet-Tsygan formality for the interpolation propagator}}
In the following we briefly introduce Tsygan's formality conjecture \cite{TsyganChains}, whose original proof for the ordinary Kontsevich propagator $ \phi^{1/2}$ is due to Shoikhet \cite{ShTsy}.

By $\Omega^{-\bullet}_A$ we denote the exterior algebra of the $\mathbb K$-module of K\"ahler differentials over $A$ with reversed grading, in other words $\Omega^{-\bullet}_A$ denotes the graded space of smooth exterior differential forms on $\mathbb{K}^d$ and the degree of a $k$-differential form is $-k$. Here we again denote by $\mathbb{K}\supseteq \mathbb C$ a field of characteristic zero and by $A$ the ring $\mathbb{K}[x_1,\cdots,x_d]$ of formal power series in $d$ variables. By $T_\mathrm{poly}(A)$ we denote the graded vector space of totally skew-symmetric multi-derivations of $A$ and by $D_\mathrm{poly}(A)$ the graded vector space of multi-differential operators on $A$.

If $\gamma$ is a $k$-polyvector field we denote by $i_\gamma:\Omega^\bullet_A\rightarrow\Omega^{\bullet-k}_A$ the natural contraction of polyvector fields with forms and define the Lie derivative by the Cartan formula
$$L_{\gamma}=\mathrm{d}_{DR}\circ{i}_{\gamma}+{i}_{\gamma}\circ\mathrm{d}_{DR}$$
This Lie derivative and the trivial differential endow $\Omega^{-\bullet}_A$ with a structure of dg Lie module over the dg Lie algebra $T_\mathrm{poly}(A)$ and we have the equation
$$[L_{\gamma_1},L_{\gamma_2}]=L_{[\gamma_1,\gamma_2]_{S-N}}$$

We further denote by $C_{-\bullet}(A,A)$ the Hochschild chain complex of $A$ with reversed grading, endowed with the Hochschild differential and the Lie derivative at the level of Hochschild cochains acting on Hochschild chains. For instance $C_{-k}(A,A)=A\otimes{A}^{\otimes{k}}$ with completed tensor products and the differential $\mathrm{b}$ is defined by
\begin{equation}\label{ChainsB}\mathrm{b}(a_0\otimes..\otimes{a}_k)\hspace{-0.05cm}=\hspace{-0.05cm}a_0{a_1}\otimes{a_2}\otimes..\otimes{a_k}-a_0\otimes{a_{1}a_2}\otimes..\otimes{a_k}+...\pm{a_k{a_0}}\otimes{a_1}\otimes..\otimes{a_{k-1}}\end{equation}
There is a homological Hochschild-Kostant-Rosenberg quasi-isomorphism of dg vector spaces from $C_{-\bullet}(A)$ to $\Omega^{-\bullet}_A$, defined by the map
$$\mu(a_0,a_1,\cdots,a_k)=\frac{1}{k!}a_0\mathrm{d}{a_1}\wedge\cdots\wedge\mathrm{d}{a_k}$$
This natural map is called Connes map and it is a quasi-isomorphism of complexes $H_i(C_\bullet(A,A))=\Omega^{i}_A$, where $\Omega^{-\bullet}_A$ is equipped with zero differential, this version of the $\mathrm{HKR}$ theorem is due to Teleman.

The Lie derivative operators on the level of complexes, i.e for a $\Psi\in{C}^{\bullet}(A)[1]$ an operator $L_\Psi$ acting on $C_{\bullet}(A,A)$ can be defined by the formula
\begin{align*}L_{\Psi}(a_0\otimes\cdots\otimes{a}_n)&=\hspace{-0.25cm}\sum_{j=n-k}^{n}\hspace{-0.15cm}(-1)^{n(j+1)}\Psi(a_{j+1}\otimes\cdots\otimes{a}_0\cdots)\otimes{a}_{j+k-n}\otimes\cdots\otimes{a}_j\\&\hspace{-0.4cm}+\sum_{i=0}^{n-k}(-1)^{(k-1)(i+1)}a_0\otimes\cdots\otimes\Psi(a_{i+1}\otimes\cdots\otimes{a}_{i+k})\otimes\cdots\otimes{a}_n\end{align*}
for $\Psi\in{Hom}(A^{\otimes{k}},A)$. If we denote by $m:A^{\otimes}\rightarrow{A}$ the multiplication we recover the chain Hochschild differential by
$L_m=b$ and moreover we have
$$[L_{\Psi_1},L_{\Psi_2}]=L_{[\Psi_1,\Psi_2]_{G}}$$

Kontsevich's formality quasi-isomorphism $\mathcal{U}$ allows to define an $\L_\infty$-module structure on $C_{-\bullet}(A)$ {\em via} the explicit formula for the corresponding Taylor components
$$\phi_0=\mathrm{b}$$
$$\phi_k((\gamma_1\wedge\cdots\wedge\gamma_k)\otimes\omega)=L_{\mathcal{U}(\gamma_1,\cdots,\gamma_k)}\omega$$

Tsygan formality conjecture states the existence of an $L_\infty$-quasi-isomorphism
$$\mathcal{S}:C_{-\bullet}\left(A\right)\leadsto\Omega^{-\bullet}_A$$
of $L_\infty$-modules over $T_\mathrm{poly}(A)$ compatible with $\mathcal{U}$. Tsygan formality \cite{TsyganChains} is equivalent to the existence of Taylor components
$$\mathcal{S}_k:\wedge^k{T_\mathrm{poly}}(A)\otimes{C}_{-\bullet}(A)\rightarrow\Omega^{-\bullet}_A[-k]$$
that satisfy for $k\geq0$ the equations
$$\mathcal{S}_0(\omega)=\mu(\omega)$$
$$L_{\mathcal{U}_0}=\mathrm{b}$$
\begin{align}\label{Tsygan}&\sum\pm\mathcal{S}_{k+1}([\gamma_{i_1},\gamma_{i_2}]_{S-N}\wedge\gamma_{j_1}\wedge\cdots\wedge\gamma_{j_k}\otimes\omega)\\&+\sum_{p+q=k+2}\pm\mathcal{S}_{p+1}(\gamma_{i_1}\wedge\cdots\wedge\gamma_{i_p}\otimes{L}_{\mathcal{U}_q(\gamma_{j_1}\wedge\cdots\wedge\gamma_{j_q})}\omega)\nonumber\\&+\sum\pm{L}_{\gamma_i}\mathcal{S}_{k+1}(\gamma_{j_1}\wedge\cdots\wedge\gamma_{j_{k+1}}\otimes\omega)=0\nonumber\end{align}

\section{Shoikhet's construction of Tsygan formality}
\subsection{Shoikhet's configuration spaces and their boundary strata}
We again denote by $\mathbb{D}_1(0)=\{z\in\mathbb{C}:\vert{z}\vert<1\}$ the complex unit disk. We will borrow the notation of Shoikhet's original article \cite{ShTsy}, with the only difference that we do not change the name of the centre $\textbf{0}$ of the unit Disk to $\textbf{1}$ and hope the reader will not get confused by this more classical choice of notation instead of the notation in \cite{ShTsy} that Shoikhet describes as bad from any point of view.

The M\"obius transform
$$z\rightarrow{w}(z)=\frac{z-\I}{z+\I}$$
restricts to a conformal isomorphism from $\mathbb{H}^+\setminus\I$ to $\mathbb{D}_1(0)\setminus\textbf{0}$, the real axis gets mapped to the unit circle $S^1$ and $\infty$ to $1$.\\

The space $Conf_{\textbf{0},n,m}$ is defined by
$$Conf_{\textbf{0},n,m}=\{w_1,\cdots,w_n\in\mathbb{D}_1(0)\setminus\textbf{0},u_1,\cdots,u_m\in{S}^1,w_i\neq{w}_j,u_i\neq{u}_j\;\text{if}\;i\neq{j}\}$$
For $2n+m\geq1$ we have a free action of the group of rotations
$$S^1=\{z\rightarrow{e^{\I\phi}}, \phi\in\mathbb{R}/2\pi\mathbb{Z}\}$$
on $Conf_{\textbf{0},n,m}$ and can again define a Fulton-MacPherson like compactifications $D^+_{\textbf{0},n,m}$ of the quotient
$$D_{\textbf{0},n,m}=Conf_{\textbf{0},n,m}/S^1$$
For $k\geq1$ we denote by $D_k$ the $k-1$ dimensional manifold
$$D_k=\{z_1;\cdots,z_k\in\mathbb{C},\;z_i\neq{z}_j\;\text{if}\;i\neq{j}\}/\{z\rightarrow{p}z,\;p\in\mathbb{R}^+\}$$
We describe this compactified configuration spaces by specifying the codimension $1$ boundary strata of ${D}^+_{\textbf{0},n,m}$:
There are three different type of boundary strata, let us mention the helpful reference \cite{CaRo}:
\begin{itemize}
\item[S1)] The case where $N\geq2$ type I vertices collapse together to a single type I vertex in $\mathbb{D}_1(0)\setminus\textbf{0}$ with $2N-3\geq0$ and $2(n-N+1)+m-1\geq0$. This boundary strata is isomorphic to
$$C_N\times{D}_{\textbf{0},n+1-N,m}$$
\item[S2)] The case where $N\geq2$ type I vertices collapse together to $\textbf{0}$ with $2N-1\geq0$ and $2(n-N)+m-1\geq0$. This boundary strata is isomorphic to
$$D_N\times{D}_{\textbf{0},n-N,m}$$
\item[S3)] The case where $N\in\mathbb{N}$ type I vertices and $M\in\mathbb{N}$ type II vertices
with $N+M-2\geq0$ and $2(n-N)+m-M\geq0$ collapse together on $S^1$. This boundary strata is isomorphic to
$$C_{N,M}\times{D}_{\textbf{0},n-N,m+1-M}$$
\end{itemize}

\subsection{Shoikhet graphs and operators}
The set $\mathcal{G}(\textbf{0},n,m)$ of admissible graphs in Shoikhet's construction is analogous to the graphs $\mathcal{G}(n,m)$ in Kontsevich's construction with the difference that here we have a marked vertex $\textbf{0}$, the centre of the disk that is not a target of any edge, hence $\textbf{0}$ can be considered as the input of a differential form as we will see in the following construction of Shoikhet's operators corresponding to graphs:

We will define for every admissible graph $\Gamma$ with $n$ type I vertices and $m$ of type II (hence we can plug in $n$ polyvector fields $\gamma_i$ of degree $\#\text{Star}(i)$ and $m$ functions) an out-coming $\#\text{Star}(\textbf{0})$-differential form $\Omega^\Gamma_{\#\text{Star}(\textbf{0})}$. Here Shoikhet uses an analog of the construction in the Kontsevich formality map:

As usual we have to specify such a $C^\infty(\mathbb{R}^d)$ module linear map on the generators
$$\Omega^\Gamma_{\#\text{Star}(\textbf{0})}(\partial_{k_1}\wedge\cdots\wedge\partial_{k_{\#\text{Star}(\textbf{0})}})=\sum_{I:E_\Gamma\setminus\text{Star}(\textbf{0})\rightarrow\{1,\cdots,d\}}\Omega^{k_1,\cdots,k_{\text{Star}(\textbf{0})}}_I$$
We can extend the map $I:E_\Gamma\setminus\text{Star}(\textbf{0})\rightarrow\{1,\cdots,d\}$ to a map $\tilde{I}:E_\Gamma\rightarrow\{1,\cdots,d\}$ by setting $\tilde{I}(\textbf{0},\cdot)=k_s$ if the edge $(\textbf{0},\cdot)$ has the label $e_\textbf{0}^s$ in the graph $\Gamma$. For a vertex $v\neq\textbf{0}$ again as in the Kontsevich case we define
$$\tilde{\Psi}_v=\left(\prod_{e\in{E}_\Gamma,\;e=(\cdot,v)}\partial_{I(e)}\right)\Psi_v$$
With
$$\Omega^{k_1,\cdots,k_{\text{Star}(\textbf{0})}}_I=\prod_{v\in{V}_\Gamma\setminus\textbf{0}}\tilde{\Psi}_v$$
we define finally the function
$$\Omega^\Gamma_{\#\text{Star}(\textbf{0})}(\partial_{k_1}\wedge\cdots\wedge\partial_{k_{\#\text{Star}(\textbf{0})}})=\sum_I\Omega^{k_1,\cdots,k_{\text{Star}(\textbf{0})}}_I$$
\subsection{Shoikhet's propagator}
Shoikhet's propagator distinguishes between the marked vertex $\textbf{0}$ and other vertices as source. In the proof for the interpolation we will give alternative and more explicit formulas for Shoikhet's propagator that are well suited to introduce the interpolation propagator, but let us mention that in the original construction Shoikhet's propagator is defined more geometrical as follows:

\begin{itemize}
\item If $w_s\neq\textbf{0}$ we define 
$\phi_S^\lambda(w_s,w_t)$
as the angle between the two geodesics $(w_s,w_t)$ and $(w_s,\textbf{0})$ with respect to the Poincar\'{e} metric on $\mathbb{D}_1(0)$ and where the angle is counted from $(w_s,\textbf{0})$ to $(w_s,w_t)$ counterclockwise.
\item If the source of an edge is the central vertex $\textbf{0}$ we define the propagator $\phi_S^\lambda(\textbf{0},w_t)$ by the angle between $(\textbf{0},w_t)$ 
and $(\textbf{0},u_1)$ where $u_1\in{S}^1$ is the first vertex of type II. For pictures illustrating the situation we refer to \cite{ShTsy}.
\end{itemize}

\subsection{Shoikhet's $L_\infty$-quasi-isomorphism of $L_\infty$-modules}

Finally the maps $\mathcal{S}^\lambda_k:\wedge^k{T_\mathrm{poly}}(A)\otimes{C}_{-\bullet}(A)\rightarrow\Omega^{-\bullet}_A[-k]$ that determine the Shoikhet-Tsygan $L_\infty$-quasi-isomorphism
of $L_\infty$-modules over $T_\mathrm{poly}(A)$ can be defined by
$$\mathcal{S}_n=\sum_{m\geq0}\sum_{\Gamma\in\mathcal{G}(\textbf{0},n,m)}\frac{1}{\vert\text{Star}(\textbf{0})\vert!}\prod_{i=1}^{n}\frac{1}{\vert\text{Star}(k)\vert!}W_\Gamma\Omega^\Gamma_{\#\text{Star}(\textbf{0})}$$
where the integral weight of a graph $\Gamma\in\mathcal{G}(\textbf{0},n,m)$ is given by
$$W_\Gamma=\int_{D^+_{\textbf{0},n,m}}\wedge_{e\in{E}_\Gamma}\d\phi^\lambda_S(e)$$

\section{A proof of the Shoikhet-Tsygan formality theorem for the interpolation propagator}
In the following section we prove that Tsygan formality holds true for the aforementioned family $\mathcal U^\lambda$ of $L_{\infty}$-quasi-isomorphisms from $T_\mathrm{poly}(A)$ to $D_\mathrm{poly}(A)$: In other words, we modify Shoikhet's construction to yield a family $\mathcal S^\lambda$ of $L_\infty$-quasi-isomorphisms from $C_{-\bullet}(A)$ to $\Omega_A^{-\bullet}$ compatible with $\mathcal U^\lambda$ in the sense \ref{Tsygan}. More precise we want to do this by just replacing the integral weights in Shoikhet's construction by weights defined analogous to the weights in the case of the interpolation \ref{Interpol}. For this family $\mathcal S^\lambda$ one has to prove convergence of the corresponding integral weights and apply Stokes Theorem to prove that this modified weights still satisfy the Tsygan $L_\infty$-relations.
\begin{theorem}\label{InterpolationTsyganFormality}
For all $\lambda\in\mathbb{C}$ there is a Shoikhet-Tsygan $L_{\infty}$-quasi-isomorphism
$$\mathcal{S}^{\lambda}:C_{-\bullet}(A)\leadsto\Omega^\bullet_A$$
of $L_\infty$-modules over ${T}_\mathrm{poly}(A)$ compatible with $\mathcal{U}^\lambda$ and enjoying the properties:
\begin{enumerate}
\item  $\mathcal{S}^{\lambda}$ is $GL(d,\mathbb{K})$-equivariant.
\item  One can replace $\mathbb{K}^d$ in the construction by its formal completion $\mathbb{K}^d_{formal}$ at the origin.
\item The zeroth structure map of $\mathcal{S}^{\lambda}$ coincides with the homological Hochschild-Kostant-Rosenberg-quasi-isomorphism
$$\mu(a_0,a_1,\cdots,a_k)=\frac{1}{k!}a_0\mathrm{d}{a_1}\wedge\cdots\wedge\mathrm{d}{a_k}$$
\item If  $\gamma_0\in{T}^0_\mathrm{poly}(A)$ is linear in the coordinates of $\mathbb{R}^d$ and any set of polyvector fields $\gamma_i\in{T}_\mathrm{poly}(A)\;\forall{i}=2,\cdots,n$ and any Hochschild chain $a\in{C}_\bullet(A)$ we have
$$\mathcal{S}^{\lambda}(\gamma_0,\gamma_1,\cdots,\gamma_n;a)=0$$
\end{enumerate}
\end{theorem}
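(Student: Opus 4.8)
The plan is to mirror, almost verbatim, the proof of Theorem \ref{InterpolationFormality}, transported from the Kontsevich configuration spaces $C^+_{n,m}$ to Shoikhet's disk configuration spaces $D^+_{\textbf{0},n,m}$. First I would rewrite Shoikhet's geometric propagator in the explicit coordinate form analogous to \ref{Interpol}, i.e.\ express $\phi^\lambda_S$ on $\mathbb{D}_1(0)$ through $\ln$ and $\overline{\ln}$, both for edges with source $w_s\neq\textbf{0}$ and for edges emanating from the marked vertex $\textbf{0}$ (normalized against the first type II vertex $u_1\in S^1$). With these formulas, convergence of $W^\lambda_\Gamma=\int_{D^+_{\textbf{0},n,m}}\wedge_{e\in E_\Gamma}\d\phi^\lambda_S(e)$ is checked stratum by stratum exactly as in Section \ref{Convergence}: near a collapse of several type I vertices in $\mathbb{D}_1(0)\setminus\textbf{0}$ the only singular internal term is $(2\lambda-1)\d\epsilon/\epsilon+\I\d\varphi$, removed by the skew-symmetry argument of Willwacher; external edges couple $\d\varphi$ to $\epsilon$; and the behaviour at $S^1$ and at $\textbf{0}$ is of the same type as at the corners of the upper half plane, so the integrand extends to a form of maximal degree along each boundary.

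Then I would establish the Tsygan $L_\infty$-module relations \ref{Tsygan} by applying Stokes' theorem to the trivial identity $0=\int_{D^+_{\textbf{0},n,m}}\d\wedge_{e}\d\phi^\lambda_S(e)$ and reading off the three codimension $1$ boundary contributions S1), S2), S3). The collapse of $N\geq2$ type I vertices in $\mathbb{D}_1(0)\setminus\textbf{0}$ (stratum $C_N\times D_{\textbf{0},n+1-N,m}$) produces the Schouten--Nijenhuis term $\mathcal{S}^\lambda_{k+1}([\gamma_{i_1},\gamma_{i_2}]_{S-N}\wedge\cdots)$, where for $N\geq3$ Kontsevich's vanishing Lemma of Subsection \ref{KV} applies verbatim. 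The collapse of $N\geq2$ type I vertices to $\textbf{0}$ (stratum $D_N\times D_{\textbf{0},n-N,m}$) produces the $L_{\mathcal{U}^\lambda_q(\cdots)}$ terms, and here one must recompute the limiting propagators in a rescaling chart centred at $\textbf{0}$ since the central vertex is distinguished. The collapse of type I and type II vertices on $S^1$ (stratum $C_{N,M}\times D_{\textbf{0},n-N,m+1-M}$) produces the Hochschild differential $\b$ and the Lie-derivative-on-chains terms, the ``bad edge'' cancellations being identical to those of Subsection S2.1). Since the weights are built from the very same $\lambda$-family of propagators that defines $\mathcal U^\lambda$, compatibility of $\mathcal S^\lambda$ with $\mathcal U^\lambda$ in the sense \ref{Tsygan} is automatic once the boundary combinatorics are shown to coincide with the $\lambda=1/2$ case treated by Shoikhet.

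For the additional properties I would argue as follows. Properties (1) and (2) are immediate from the universal, dimension-independent graph construction, exactly as for $\mathcal U^\lambda$. For (3) the zeroth map has $n=0$, so only the marked vertex and $m$ ordered vertices on $S^1$ occur; every propagator then connects to or emanates from a point of $S^1$, and by the identity $\ln\bigl((s-r)/(\overline s-r)\bigr)=2\I\arg\bigl((s-r)/(\overline s-r)\bigr)$ used in Section \ref{HKRnormalization} the interpolation weight equals the Kontsevich one, and the iterated angular integral evaluates to $1/k!$, reproducing Connes' map $\mu$. For (4), the hypothesis that $\gamma_0$ is linear forces us, by the same valency count as in Section \ref{Global}, into the $2$-valent situation; the weight of any graph with a type I vertex of valency $2$ carrying $\gamma_0$ is then isolated as a $2$-dimensional integral of the shape treated by Proposition \ref{MerkulovVanishing} together with the computations of Subsections \ref{Global}--\ref{Coupl}, all of which vanish for every $\lambda\in\mathbb{C}$.

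The hardest part will be the boundary stratum S2), the collapse of type I vertices onto the marked vertex $\textbf{0}$: because Shoikhet's propagator singles out $\textbf{0}$ and the first type II vertex $u_1$, the rescaling chart near $\textbf{0}$ genuinely differs from the ``magnifying glass'' chart at an interior point, and one must verify carefully that the singular $(2\lambda-1)\d\epsilon/\epsilon$ pieces again cancel by skew-symmetry and that what survives is exactly the configuration integral over $D_N$ times the weight of the contracted graph, so that the module action $L_{\mathcal U^\lambda_q(\cdots)}$ enters with the correct sign and multiplicity. A secondary technical point is the partition-of-unity gluing of the local Stokes computations over $\partial D^+_{\textbf{0},n,m}$, for which I would refer to the corresponding treatments in \cite{AWRT} and \cite{ShTsy}.
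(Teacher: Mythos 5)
Your overall strategy coincides with the one used in the paper: express the Shoikhet propagator explicitly on $\mathbb{D}_1(0)$, verify convergence stratum by stratum as in Section \ref{Convergence}, run Stokes' theorem on $0=\int_{D^+_{\textbf{0},n,m}}\d\wedge_{e}\d\phi^\lambda_S(e)$, and settle the normalisation and globalisation properties by the valency-$2$ computations of Sections \ref{Global}--\ref{Coupl}. One mechanism you omit, and which the paper leans on repeatedly, is the identity $\phi^\lambda_S(w_s,w_t)=\phi^\lambda(w_s,w_t)-\phi^\lambda(w_s,\textbf{0})$ for $w_s\neq\textbf{0}$: writing the Shoikhet propagator as a difference of two Kontsevich interpolation propagators is what lets the vanishing Lemma \ref{KV} be imported wholesale on the strata S1) and S2) (killing all collapses of three or more type I vertices, and reducing S2) to a single vertex approaching $\textbf{0}$), and it is also how property (4) is reduced to the Kontsevich globalisation Lemma \ref{GobalizationVanishing}.

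There is, however, one concrete error in your boundary bookkeeping. You assign the module-action term $\mathcal{S}_{p+1}(\cdots\otimes L_{\mathcal{U}^\lambda_q(\cdots)}\omega)$ to the stratum S2) (collapse onto the marked centre $\textbf{0}$) and the Hochschild differential plus ``Lie derivative on chains'' to S3). In fact the collapse onto $\textbf{0}$ produces the \emph{Cartan} Lie derivative $L_{\gamma_i}=\d\circ i_{\gamma_i}+i_{\gamma_i}\circ\d$ acting on the output differential form: the two summands correspond precisely to the subcases S2.1) (there is an edge from $\textbf{0}$ to the collapsing vertex, giving $\d\,i_{\gamma_i}\mathcal{S}^\lambda_{k+1}$) and S2.2) (no such edge, giving $i_{\gamma_i}\d\,\mathcal{S}^\lambda_{k+1}$). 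The term $L_{\mathcal{U}^\lambda_q(\cdots)}\omega$ --- including the Hochschild differential $\mathrm{b}=L_m$ --- arises instead from S3), the collapse of type I and type II vertices onto $S^1$, where the internal integral over $C_{N,M}$ reproduces the Kontsevich interpolation weight of the contracted subgraph. Since you single out S2) as ``the hardest part'' and describe verifying the wrong target term there, this attribution would have to be corrected before the Stokes argument closes up into the relations \ref{Tsygan}.
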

\

\begin{proof}[Proof:] First it is well-known that the two properties $i)$ and $ii)$ rely on Shoikhet's universal graph construction and not on the integral weights used in the construction of the Shoikhet-Tsygan formality map.
\subsection{The Shoikhet interpolation propagator}
Here we give the details how to adjust Shoikhet's construction in the interpolation to get compatibility, this question was one of the remaining tasks opened by the interpolation of Kontsevich formality.

We again distinguish between the marked vertex $\textbf{0}$ and other vertices as source vertices of an edge and describe the Shoikhet interpolation propagator as follows:
\begin{itemize}
\item If $w_s\neq\textbf{0}$ we define the propagator as the difference
$$\phi_S^\lambda(w_s,w_t)=\phi^\lambda(w_s,w_t)-\phi^\lambda(w_s,\textbf{0})$$
of two Kontsevich propagators, explicitly we have
$$\phi_S^\lambda(w_s,w_t)=\frac{1}{2\pi\I}\left[\lambda\ln\left(\frac{w_s-w_t}{w_s(1-\overline{w}_sw_t)}\right)-(1-\lambda)\ln\left(\frac{\overline{w}_s-\overline{w}_t}{\overline{w}_s(1-w_s\overline{w}_t)}\right)\right]$$
\item If the source of an edge is the central vertex $\textbf{0}$ we define the propagator by
$$\phi_S^\lambda(\textbf{0},w_t)=\frac{1}{2\pi\I}\left[\lambda\ln\left(\frac{w_t}{u_1}\right)-(1-\lambda)\ln\left(\frac{\overline{w}_t}{\overline{u}_1}\right)\right]$$
where $u_1=e^{\I\varphi_1}\in{S}^1$ is the first vertex of type II.
\end{itemize}

Notice that the interpolation propagator $\phi_S^\lambda$ is obviously invariant under the $S^1$ action on $Conf_{\textbf{0},n,m}$ and descends to a form on $D^+_{\textbf{0},n,m}$.
\subsection{Convergence of integrals for the interpolation}
First the convergence of the Shoikhet integral weights defined with the interpolation propagator can be justified quite analogous to the previous argumentation \ref{Convergence} in the case of the Kontsevich formality map with the interpolation propagator, essentially no new arguments are needed. For the convenience of the reader let us consider the maybe different seeming case of the marked vertex \textbf{0}:

We again describe by the polar coordinates ($w_1=\epsilon{e}^{\I\varphi}$ for the first vertex and $w_i=\epsilon{e}^{\I\varphi}z_i$ with $z_i\in\mathbb{C}$ for the other collapsing vertices) the boundary strata corresponding to the collapse $\epsilon\rightarrow0$ of the collapsing vertices to the marked vertex \textbf{0}. By computation
$$\d\phi_S^\lambda(\textbf{0},w_1)=\frac{1}{2\pi\I}\left[\I\d\varphi+(2\lambda-1)\frac{\d\epsilon}{\epsilon}-\lambda\frac{\d{u_1}}{u_1}+(1-\lambda)\frac{\d\overline{u}_1}{\overline{u}_1}\right]$$
where $u_1\in{S}^1$ is the first vertex of type II.

We multiply out the form of maximal degree: The terms $\d{u_1}/u_1$ and its conjugate are non-singular terms, because $u_1\in{S}^1$. The singular term $\d\epsilon/\epsilon$ again only appears always in the same linear combination with $\d\varphi$. Hence the appearing singular terms cancel each other in pairs by skew-symmetry of $1$-forms or the singularity $\d\epsilon/\epsilon$ gets compensated because we multiply it by a non-singular $1$-form proportional to $\epsilon\d\varphi$, where this non-singular $1$-form can come from edges connecting collapsing with not collapsing vertices. As in the previous computations the fact that the $1$-forms connecting collapsing with not collapsing edges are proportional to $\epsilon$ reflects the chain rule and the coupling in the coordinates proportional to $\epsilon{e}^{\I\varphi}$.

\subsection{Tsygan $L_\infty$-relations for the interpolation}
Now we come to the involved algebraic relations \ref{Tsygan} that the interpolation weights should satisfy. As in Shoikhet's proof we want to justify and rewrite the quite trivial vanishing
$$0=\int_{{D}^+_{\textbf{0},n,m}}{\d}\prod_{i=1}^{2n+m-2}{\d}\phi^\lambda({e_i})=\int_{\partial{D}^+_{\textbf{0},n,m}}\prod_{i=1}^{2n+m-2}{\d}\phi^\lambda({e_i})$$
where ${\d}\prod_{i=1}^{2n+m-2}{\d}\phi^\lambda({e_i})$ is the exterior derivative of a form corresponding to some Shoikhet graph $\Gamma\in\mathcal{G}(\textbf{0},n,m)$ with $2n+m-2$ edges. We now discuss quite analogous to \ref{REG} the regularised contribution of the different boundary strata.
\subsubsection{The boundary strata S1)}
We consider the case where $N\geq2$ type I vertices with $2N-3\geq0$ and $2(n-N+1)+m-1\geq0$ collapse together to a single vertex in $\mathbb{D}_1(0)\setminus\textbf{0}$, the codimension $1$ boundary strata isomorphic to
$$C_N\times{D}_{\textbf{0},n+1-N,m}$$

Let us W.L.O.G. enumerate the collapsing vertices and assume that there is an edge between the two collapsing vertices $h_1$ and $h_2$.

We use the section that fixes the angle of the collapsing vertex $w_1$ to some $\alpha\in[0,2\pi)$. We choose the coordinates $h_1=r{e}^{\I\alpha}$ with $r\in(0,1)$, $\phi$ fixed and $h_2=r{e}^{\I\alpha}+\epsilon{e}^{\I\varphi}$ for the first two vertices and $h_i=r{e}^{\I\alpha}+{e}^{\I\varphi}\epsilon{z_i}$ for the remaining collapsing vertices, quite analogous to the calculations in \ref{KV}. In other words we choose the section of $C_n$ where the first collapsing vertex is fixed to $0$ and the absolute square of the second collapsing vertex is set to $1$ and the section of ${D}_{\textbf{0},n+1-N,m}$ where the angle of the new vertex (corresponding to the collapsed vertices) is fixed to $\alpha$. Notice that the coordinate $r$ corresponds to an external coordinate of ${D}_{\textbf{0},n+1-N,m}$.

Because in the limit $\epsilon\rightarrow0$ external edges do not depend on the internal coordinates of $C_N$ it is clear that the integral factorizes and the argumentation goes along the same lines as in the proof \cite{ShTsy}: Because the Shoikhet interpolation propagator can be written as the difference
$\phi_S^\lambda(w_s,w_t)=\phi^\lambda(w_s,w_t)-\phi^\lambda(w_s,\textbf{0})$
of two Kontsevich propagators and the Kontsevich vanishing lemma \ref{KV} for the interpolation we only get a non-trivial contribution in the case $N=2$.

By the same arguments as in section \ref{SN} and in the sense of \cite{ShTsy} this contribution corresponds to the term in \ref{Tsygan} where the Schouten-Nijenhuis Lie bracket appears.
\subsubsection{The boundary strata S2)}
Now consider the case where $N\geq2$ vertices with $2(n-N)+m-1\geq0$ , lets say the vertices $w_{1},\cdots,w_{N}$, of type I collapse together to $\textbf{0}$ with codimension $1$ boundary strata isomorphic to
$$D_N\times{D}_{\textbf{0},n-N,m}$$
and the integral again factorizes in two integrals.

First notice that the condition $2(n-N)+m-1\geq0$ ensures that there is at least one vertex left that does not collapse to $\textbf{0}$ and to make computations easy we explicitly use the section that fixes the angle of the first type II vertex $u_1\in{S}^1$. For the first collapsing vertices we write $w_1=\epsilon{e}^{\I\varphi}$ and $w_i=\epsilon{e}^{\I\varphi}{z_i}$ for the remaining collapsing vertices, in other words we specified for $D_{\textbf{0},n-N,m}$ the section that fixes the argument of an external vertex to some $\alpha$ and for $D_N$ the section that fixes the absolute square of the first coordinate to the point $1$. 

We now can adopt Shoikhet's proof: Because we have the Kontsevich vanishing Lemma for the whole interpolation \ref{KV} it is clear that we only have to consider the case where one type I vertex $w_{1}$ converges to $\textbf{0}$.

Now consider what happens if the vertex $w_1=\epsilon\exp(\I\varphi)$ collapsing to $\textbf{0}$ is the target of an edge coming from a external vertex $w_k\neq\textbf{0}$.
$$\phi_S^\lambda(w_k,\epsilon{e}^{\I\varphi})=\lambda\ln\left(\frac{w_k-\epsilon{e}^{\I\varphi}}{w_k(1-\overline{w}_k\epsilon{e}^{\I\varphi})}\right)-(1-\lambda)\ln\left(\frac{\overline{w}_k-\epsilon{e}^{-\I\varphi}}{\overline{w}_k(1-w_k\epsilon{e}^{-\I\varphi})}\right)$$
It is quite obvious that the restriction of this function vanishes in the limit $\epsilon\rightarrow0$ because
$$\lim_{\epsilon\rightarrow0}\frac{w_k-\epsilon{e}^{\I\varphi}}{w_k(1-\overline{w}_k\epsilon{e}^{\I\varphi})}=\frac{\overline{w}_k-\epsilon{e}^{-\I\varphi}}{\overline{w}_k(1-w_k\epsilon{e}^{-\I\varphi})}=1$$
and $\ln(1)=0$. Commuting the limit $\epsilon\rightarrow$ and the partial differentials this shows that we only have to consider graphs where no edge ends at a vertex collapsing to $\textbf{0}$.

\subsubsection*{S2.1) There is an edge from $\textbf{0}$ to $w_{1}$}
On the one hand the edge from $\textbf{0}$ to $w_{1}=\epsilon\exp(\I\varphi)$ corresponds to the $1$-form
$${\d}\phi_S^\lambda(\textbf{0},\epsilon{e}^{\I\varphi})=(2\lambda-1)\frac{{\d}\epsilon}{\epsilon}+\I{\d}\varphi+\lambda{\d}\ln\left(\frac{1}{u_1}\right)-(1-\lambda){\d}\ln\left(\frac{1}{\overline{u}_1}\right)$$
and the pullback to the boundary $\epsilon={0}$ is
$\I{\d}\varphi$ where we recall that we fixed $u_1\in{S}^1$.

On the other hand consider $1$-forms corresponding to edges with source $w_{1}$ and target an external vertex $w_k\neq\textbf{0}$, for instance
$$\phi_S^\lambda(\epsilon{e}^{\I\varphi},w_k)=\lambda\ln\left(\frac{\epsilon{e}^{\I\varphi}-w_k}{\epsilon{e}^{\I\varphi}(1-\epsilon{e}^{-\I\varphi}w_k)}\right)-(1-\lambda)\ln\left(\frac{\epsilon{e}^{-\I\varphi}-\overline{w}_k}{\epsilon{e}^{-\I\varphi}(1-\epsilon{e}^{\I\varphi}\overline{w}_k)}\right)$$
By computation the $1$-form ${\d}\phi_S^\lambda(\epsilon{e}^{\I\varphi},w_k)$ pulled back to the boundary $\epsilon=0$ equals
$\I{\d}\varphi+\lambda{{\d}w_k}/{w_k}-(1-\lambda){{\d}\overline{w}_k}/{w_k}$.

We integrate a form of maximal degree over $D_1\times{D}_{n-1,m}$ and clearly propagators corresponding to external edges do not depend on $\varphi$ in the limit $\epsilon\rightarrow0$. Although the $1$-form $\d\phi_S^\lambda(\epsilon{e}^{\I\varphi},w_k)$ contains the term $\d\varphi$ dimensional reasoning implies that we have to choose the $\d\varphi$ term from the previous internal $1$-form ${\d}\phi_S^\lambda(\textbf{0},w_1)$ to get a non-trivial form of maximal degree on $D_1$ by skew-symmetry of $1$-forms. For example if there is only one edge departing from the collapsing vertex we have to consider on the $\epsilon=0$ boundary the $2$-form
$$\I\d\varphi\wedge\left(\lambda\frac{{\d}w_k}{w_k}-(1-\lambda)\frac{{\d}\overline{w}_k}{w_k}\right)$$
This pulled back $2$-form is in fact non-singular, the integration of the canonical volume form $\d\varphi$ over $D_1\cong{S}^1$ is trivial and the factor $\left(\lambda\frac{{\d}w_k}{w_k}-(1-\lambda)\frac{{\d}\overline{w}_k}{w_k}\right)$ contributes to the integral over ${D}_{n-1,m}$, recall we fixed $u_1\in{S}^1$.

In the sense of Shoikhet \cite{ShTsy} this shows that in the case S2.1) we end up with the usual contribution to \ref{Tsygan} corresponding to a summand of $\d{i}_{\gamma_i}\mathcal{S}^\lambda_{k+1}(\gamma_{j_1}\wedge\cdots\wedge\gamma_{j_{k+1}}\otimes\omega)$ if we replace the original propagator $\phi^{1/2}$ by $\phi^\lambda$.

\subsubsection*{S2.2) There is no edge from $\textbf{0}$ to $w_{1}$}
Because we assume that there is no edge connecting $\textbf{0}$ and $w_{1}=\epsilon\exp(\I\varphi)$ we can only build up a non-vanishing form of maximal degree on the factor $D_1$ if we choose the $S^1$ volume element $\d\varphi$ from pulled back forms $\I{\d}\varphi+\lambda{{\d}w_k}/{w_k}-(1-\lambda){{\d}\overline{w}_k}/{w_k}$ corresponding to edges departing from the collapsing vertex.

We again choose the section that fixes the argument of the first type II vertex $u_1\in{S}^1$. Multiplying out the pulled back $1$-forms we end up with the usual combinatorics if we replace the original propagator $\phi^{1/2}$ by $\phi^\lambda$.

In the sense of \cite{ShTsy} this shows that the case S2.2) contributes to \ref{Tsygan} with a summand of ${i}_{\gamma_i}\d\mathcal{S}^\lambda_{k+1}(\gamma_{j_1}\wedge\cdots\wedge\gamma_{j_{k+1}}\otimes\omega)$ and for every $\phi^\lambda$ in the case S2) we have the same combinatorics and end up in the usual situation:

In summary in the sense of Shoikhet \cite{ShTsy} the total contribution of the cases S2.1) and S2.2) to \ref{Tsygan} can be identified with the term where the Lie derivative appears. We refer to \cite{ShTsy} for a detailed discussion with a nice example how the calculus works.
\subsubsection{The boundary strata S3)}
Finally consider the case where $N\in\mathbb{N}$ type I vertices and $M\in\mathbb{N}$ type II vertices with $N+M\geq2$ collapse together to a single vertex $u$ on $S^1$, the boundary strata corresponding to this collapse is isomorphic to
$$C_{N,M}\times{D}_{\textbf{0},n-N,m+1-M}$$

Analogous to the proof of Kontsevich formality \ref{S2} this boundary contribution factorizes in two integrals because there are no singularities along this border.

For the convenience of the reader we now show that the internal integral over $C_{N,M}$ equals the Kontsevich integral with the interpolation propagator.

We calculate with the section that fixes the point where the vertices collapse to some $u\in{S}^1$. For the collapsing type I vertices we use the coordinates $w_1=u(1+\epsilon\I)$ and $w_i=u(1+\epsilon\I{h_i})$ for $i=2,\cdots,N$ with $h_i=x_i+\I{y}_i$ and $x_i,y_i\in\mathbb{R}$, for type II vertices we specify the coordinates by $u_i=u{e}^{\epsilon\I{r_i}}$ for $i=1,\cdots,M$. In other words we use the section of $C_{N,M}$ where we fix $h_1$ to $\I$ and the section of ${D}_{\textbf{0},n-N,m+1-M}$ where we fix the new vertex, corresponding to the collapsed vertices, to $u\in{S^1}$. 

Notice the condition $\vert{w}_i\vert<1$ is equivalent to
$\epsilon{(x_i^2+y^2_i)}/2<{y}_i$
and in the limit ${{\epsilon}\rightarrow0}$ the coordinates $h_i$ get restricted to $\mathbb{H}$ and quite similar coordinates $r_i$ of type II vertices get restricted to $\mathbb{R}$ with respect to their order, hence we indeed have the usual identification with $C_{N,M}$.

For example for internal edges between type I vertices by elementary computation
$$\lim_{{\epsilon}\rightarrow0}\phi_S^\lambda\bigr(u(1+\epsilon\I{h_1}),u(1+\epsilon\I{h_2})\bigr)=\lambda\ln\left(\frac{h_1-h_2}{\overline{h}_1-h_2}\right)-(1-\lambda)\ln\left(\frac{\overline{h}_1-\overline{h}_2}{h_1-\overline{h}_2}\right)$$
holds and by expanding $\exp(\epsilon\I{r_i})$ as a Taylor series analogous formulas hold if type II vertices are involved, in other words in the limit ${{\epsilon}\rightarrow0}$ we recover on $C_{N,M}$ the family of interpolation propagators $\phi^\lambda$.

This implies that if we replace $\mathcal{S}$ by $\mathcal{S}^\lambda$ and $\mathcal{U}$ by $\mathcal{U}^\lambda$ as well we also yield in the case S3) the same combinatorics. Hence in the sense of \cite{ShTsy} the boundary S3) contributes to \ref{Tsygan} with the term where the Lie derivative on the level of complexes appears.

\subsection{Homological $HKR$-normalization for the interpolation}
Also for the interpolation it is clear that we have the usual homological Teleman $HKR$ quasi-isomorphism normalization, for instance
$$\mathcal{S}_0^\lambda(a_1\otimes\cdots\otimes{a}_{m})=\frac{\mu(a_1\otimes\cdots\otimes{a_m})}{(m-1)!}=\frac{1}{(m-1)!}a_1\d{a_2}\wedge\cdots\wedge\d{a_m}$$
This is because we have for $u_t\in{S}^1$ the original Shoikhet propagators
\begin{align*}
\phi_S^\lambda(\textbf{0},u_t)&=\frac{1}{2\pi\I}\left[\lambda\ln\left(\frac{u_t}{u_1}\right)-(1-\lambda)\ln\left(\frac{\overline{u}_t}{\overline{u}_1}\right)\right]\\&=\frac{1}{2\pi}\begin{cases}\arg\left(\frac{u_t}{u_1}\right)\;\text{if}\;t>1\\0\;\text{if}\;t=1\end{cases}
\end{align*}
We discussed the integration of the $m-1$ type II vertices in detail in  \ref{HKRnormalization}.
\subsection{Tsygan interpolation formality globalisation}
For the Dolgushev globalisation in the sense of Fedosov we need the vanishing Lemma
$$0=\int_{w\in\mathbb{D}_1(0)\backslash\{w_1,w_2\}}\hspace{-1.2cm}{\d}\phi_S^\lambda(w,w_1){\wedge}{{\d}\phi_S^\lambda(w_2,w)}$$
With the interpolation globalisation vanishing Lemma \ref{GobalizationVanishing} for $2$-valent vertices proved in section \label{2valent} we can argue exactly as in \cite{DT1}:

Again by writing the Shoikhet propagator as the difference
$\phi_S^\lambda(w_s,w_t)=\phi^\lambda(w_s,w_t)-\phi^\lambda(w_s,\textbf{0})$
we can reduce the situation to the calculation of the integral
$$=\int_{w\in\mathbb{D}_1(0)\backslash\{w_1,w_2\}}\hspace{-0.2cm}\left({\d}\phi^\lambda(w,w_1)-\d\phi^\lambda(w,\textbf{0})\right){\wedge}\left(\d\phi^\lambda(w_2,w)-\d\phi^\lambda(w_2,\textbf{0})\right)$$
Multiplying out the $2$-form and integrating the two terms containing ${\d}w{\d}\overline{w}$ it is not difficult to realise that the globalisation vanishing Lemma for Tsygan formality is a consequence of the globalisation Lemma for the interpolation Kontsevich formality.\end{proof}

\subsection{Formality of cyclic chains for the interpolation}
Willwacher in \cite{WillChains} proved another formality conjecture raised by Tsygan \cite{TsyganChains}. His proof relies on a nice compatibility of Shoikhet's formality with the deRham differential. To give the precise compatibility statement we need to introduce the deformed differential $b+\hbar{B}$ where $b$ was defined in \ref{ChainsB} and $B$ is defined by
$$B(a_0\otimes\cdots\otimes{a}_n):=\sum_{j=1}^{n}(-1)^{nj}1\otimes{a_j}\otimes\cdots\otimes{a_n}\otimes{a_0}\otimes\cdots\otimes{a_{j-1}}$$
with $a_{-1}:=a_n$ for simplification of notation. Willwacher proved the for a long time not noticed, remarkable compatibility $\mathcal{S}\circ{B}=\d\circ\mathcal{S}$. Willwacher pointed out that his proof of formality of chains should carry over to the interpolation: The proof of Willwacher is very technical, we refrain from going into the details here and refer the reader to the original article \cite{WillChains}. However, most of the calculations in \cite{WillChains} do not depend on the propagator, the two properties of the propagator used there namely are
\begin{itemize}
\item The $HKR$-normalization that $\phi_S^\lambda(\textbf{0},u)=\phi_S^{1/2}(\textbf{0},u)$ for any $u\in{S}^1$ holds for any interpolation propagator $\phi_S^\lambda$.
\item The transitivity equation
$$\phi_S^\lambda(x,y)+\phi_S^\lambda(y,z)=\phi_S^\lambda(x,z)$$
of the central interpolation propagator defined for $w_s,w_t\in\overline{\mathbb{D}_1(0)}$ by
$$\phi_S^\lambda(w_s,w_t):=\frac{1}{2\pi\I}\left[\lambda\ln\left(\frac{w_s}{w_t}\right)-(1-\lambda)\ln\left(\frac{\overline{w}_s}{\overline{w_t}}\right)\right]$$
\end{itemize}
Because of this two compatibilities we can {\em verbatim} copy the proof of Willwacher in \cite{WillChains} for the interpolation propagator without changes and claim
\begin{proposition}$$\mathcal{S}^{\lambda}\circ{B}=\d\circ\mathcal{S}^{\lambda}$$
\end{proposition}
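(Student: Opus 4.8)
The strategy is to reduce the statement $\mathcal S^\lambda\circ B = \mathrm{d}\circ\mathcal S^\lambda$ to Willwacher's already-established proof in \cite{WillChains} by checking, term by term, that the only two properties of the propagator his argument actually uses carry over verbatim to the interpolation propagator $\phi_S^\lambda$. Concretely, I would first recall Willwacher's method: he considers for each Shoikhet graph $\Gamma$ the exterior derivative of the associated form on the compactified configuration space $D^+_{\mathbf 0,n,m}$, applies Stokes' theorem, and organises the codimension-$1$ boundary contributions. The operator $B$ arises from a boundary stratum where the marked point $\mathbf 0$ interacts with the first type II vertex $u_1$ on $S^1$ via the \emph{central} propagator, while the deRham differential $\mathrm d$ on $\Omega^{-\bullet}_A$ arises from the stratum where all type I vertices and the marked point coalesce. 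The heart of the matter is that his cancellations do not see the analytic shape of a generic internal or external propagator — they only need (i) the $\mathrm{HKR}$-normalisation $\phi_S^\lambda(\mathbf 0,u)=\phi_S^{1/2}(\mathbf 0,u)$ for $u\in S^1$, and (ii) the transitivity (cocycle) identity $\phi_S^\lambda(x,y)+\phi_S^\lambda(y,z)=\phi_S^\lambda(x,z)$ for the central propagator on $\overline{\mathbb D_1(0)}$.

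The first step is therefore to verify (i). This is immediate: for $u_t\in S^1$ one has $\overline{u_t}=1/u_t$, hence $\ln(u_t/u_1)=\ln\!\big((u_t/u_1)/\overline{(u_t/u_1)}\big)\cdot\tfrac12$ up to the real-log term which cancels, so $\lambda\ln(u_t/u_1)-(1-\lambda)\ln(\overline u_t/\overline u_1) = \ln(u_t/u_1)$ independently of $\lambda$; this is exactly the computation already performed in section \ref{HKRnormalization} and reproduced in the subsection ``Homological $HKR$-normalization for the interpolation''. So $\phi_S^\lambda(\mathbf 0,u)=\phi_S^{1/2}(\mathbf 0,u)$ holds. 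The second step is to verify (ii): with $\phi_S^\lambda(w_s,w_t)=\tfrac1{2\pi\I}[\lambda\ln(w_s/w_t)-(1-\lambda)\ln(\overline w_s/\overline w_t)]$ one simply uses additivity of $\ln$, $\ln(w_s/w_t)+\ln(w_t/w_z)=\ln(w_s/w_z)$ (and likewise for the conjugated logarithm), valid on a simply connected domain after the usual branch bookkeeping; linearity in $\lambda$ then gives the claimed transitivity. Both checks are elementary.

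The third and main step is the \emph{inspection} that nothing else in \cite{WillChains} depends on the propagator. Here I would go through Willwacher's boundary analysis and confirm that: internal edges between collapsing type I vertices contribute, as in section \ref{KV}, the forms $\I\,\mathrm d\varphi + \lambda\,\mathrm d\ln(z_k-z_l)-(1-\lambda)\,\mathrm d\ln(\overline z_k-\overline z_l)$, whose $\varphi$-independence of the absolute-value parts and the skew-symmetry argument of Willwacher reproduce the Kontsevich vanishing lemma for all $\lambda$ (already established in Theorem \ref{InterpolationTsyganFormality}); external edges touching $\mathbb R$ or $S^1$ degenerate to $\mathrm d\ln(1)=0$ exactly as in the strata S2.1)/S2.2) and S3) discussions above; and the remaining top-degree forms extend to the boundary because the singular pieces $(2\lambda-1)\,\mathrm d\epsilon/\epsilon$ always occur in the fixed linear combination with $\I\,\mathrm d\varphi$ and cancel by $1$-form skew-symmetry. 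Once these local checks line up, Willwacher's global Stokes argument applies \emph{verbatim}, and the two surviving boundary families are precisely $\mathcal S^\lambda\circ B$ and $\mathrm d\circ\mathcal S^\lambda$, proving the proposition.

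The main obstacle I anticipate is purely bookkeeping rather than conceptual: one must make sure that \emph{every} place where \cite{WillChains} invokes a concrete formula for $\phi_S$ is in fact only invoking (i) or (ii), in particular in the parts of his argument handling the $B$-operator stratum with its ordered type II vertices, where — as remarked in the subsection on reality of $(n,0)$ and $(n,1)$ weights — the ordering of type II vertices forces linear combinations of $\int_0^{2\pi}\mathrm d\varphi\, e^{\I k\varphi}\varphi^l$ that mix real and imaginary parts; one must check that Willwacher's cancellations there are \emph{identities of forms before integration} (so that they persist for any $\lambda$) and not coincidences special to $\lambda=1/2$. Granting that, the proof is complete by reduction to \cite{WillChains}.
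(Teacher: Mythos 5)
Your proposal follows essentially the same route as the paper: the paper's own ``proof'' consists precisely of the reduction to Willwacher's argument in \cite{WillChains} by isolating the two propagator properties you identify --- the $\mathrm{HKR}$-normalisation $\phi_S^\lambda(\mathbf 0,u)=\phi_S^{1/2}(\mathbf 0,u)$ for $u\in S^1$ and the transitivity of the central propagator --- and asserting that the rest of \cite{WillChains} is propagator-independent. Your version is if anything slightly more explicit (you actually verify the two identities and flag the bookkeeping point about ordered type II vertices, which the paper leaves implicit), but the argument is the same.
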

\chapter{Appendix}
\section{The weight of the first wheel-like graph}
 In~\cite[Subsubsection 1.4.2]{K}, Kontsevich provides a general formula for an associative product $\star$ up to second order w.r.t.\ $\hbar$, but this formula does not really coincide with the operator $B_2(\cdot,\cdot)$ in \cite[Subsubsection 1.4.2]{K}.
At the end of~\cite[Chapter 7]{BCKT}, the Kontsevich integral weight of this first wheel-like graph is also computed, but one of the key arguments in the computation relies either on a deep result of~\cite{FSh,Sh} (which in turn implies the vanishing of the integral weights of all wheels with spokes pointing inwards to the centre), or on a direct computation by Cattaneo-Felder, hinted to in~\cite{Sh} (and motivating the conjecture proved therein).
However, the computations in~\cite[Chapter 7]{BCKT} are surely not elementary, despite their brevity.
Therefore, Rossi and I felt that no harm is done by writing down our own computations: Despite being therein nothing really new in the result, we use a slightly different way of applying Stokes' Theorem (namely, we use that Kontsevich's propagator is exact on a certain boundary stratum of its domain of definition).

The previous pictured wheel-like graph \ref{Graph2} of type $(2,2)$, does not contribute to the formula in~\cite[Subsubsection 1.4.2]{K}, but it should contribute: The ``method" for the calculation of this special weight is just to try to apply Stokes theorem several times, this method is not very intuitive, technical integrals remain and it may not work for arbitrary weights. Let us here mention that the method used in chapter \ref{MerkWe} is easier to adapt for the calculation of an arbitrary weight, but the universal method described in \ref{MerkWe} is less elegant and tedious.

Because of the action of $\mathbb R^+\ltimes\mathbb R$ we can fix $z_2$ at $\I$ for the calculation 
$$w_\Gamma=\frac{1}{(2\pi)^4}\int_{{\{r_1,r_2\in\mathbb{R},r_1\neq{r_2},z\in\mathbb{H},z\neq\I\}}^+}\hspace{-1.5cm}{\d}\phi(z,\I){{\d}\phi(z,r_1){\d}\phi(\I,z){\d}\phi(\I,r_2)}$$
\subsection*{First application of Stokes theorem}
Since $d\phi(z,r_1)$ is an exact form we can use Stokes-theorem and we now discuss the contribution of the different codimension $1$ boundaries:

The boundary term ${z\rightarrow\mathbb{R}}$ with $z=\lim_{\epsilon\rightarrow0}(r+\epsilon\I)$, with $r\in\mathbb{R}$ vanishes because
$$\lim_{\epsilon\rightarrow0}{\d}\phi(r+\epsilon\I,\I)={\d}\arg(1)=0$$
The boundary term ${z\rightarrow\I}$ with $z=\lim_{\epsilon\rightarrow0}(\I+\epsilon\exp(\I\varphi))$ vanishes because of
$$\lim_{\epsilon\rightarrow0}{\d}\phi(\I,\I+\epsilon\exp(\I\varphi))={\d}\varphi=\lim_{\epsilon\rightarrow0}{\d}\phi(\I+\epsilon\exp(\I\varphi),\I)$$
and skew-symmetry. The boundary term ${z\rightarrow\infty}$ with $z=\lim_{R\rightarrow\infty}(R\exp(\I\varphi))$ vanishes because of
$$\lim_{R\rightarrow\infty}{\d}\phi(\I,R\exp(\I\varphi))={\d}\arg(1)=0$$
Hence the only boundary strata of codimension $1$ that contributes is the boundary term $r_1\rightarrow{r_2}$.

Because of 
$\phi(\I,r)=-2\left[\frac{\pi}{2}-\arctan(r)\right]$
and
$\phi(z,r)=-2\left[\frac{\pi}{2}-\arctan\left(\frac{r-\Re(z)}{\Im(z)}\right)\right]$
we have
$$w_\Gamma=\frac{2}{(2\pi)^4}\int_{{\{r\in\mathbb{R},z\in\mathbb{H},z\neq\I\}}^+}\hspace{-1.3cm}{{\d}\phi(z,\I){\d}\phi(\I,z)}\left(\frac{2{\d}r}{1+r^2}\right)\left[\frac{\pi}{2}-\arctan\left(\frac{r-\Re(z)}{\Im(z)}\right)\right]$$
\subsection*{Application of the Residue Theorem}
We will be able to go on with the computation with help of the formula
\begin{equation}\label{eq-res}
f(\alpha,\beta)=\int_\mathbb R\frac{\arctan(\alpha x+\beta)}{1+x^2}{\d}x=\pi\arctan\!\left(\frac{\beta}{\alpha+1}\right)\
\end{equation}
where we assume $\alpha,\ \beta\in \mathbb R$ and $\alpha>0$. To prove this formula we use the fact that the integrand in~\eqref{eq-res} is an absolutely integrable function on $\mathbb R$; viewed as a function on $\alpha$, $\beta$ and $x$, it is obviously smooth w.r.t.\ the three variables in their respective domains of definition.
The partial derivative of the integrand w.r.t.\ $\beta$ is
$\frac{1}{(1+x^2)(1+(\alpha x+\beta)^2)}$,
which is also an absolutely integrable function w.r.t.\ $x$ over $\mathbb R$, hence
$$\partial_\beta f(\alpha,\beta)=\int_\mathbb R \frac{1}{(1+x^2)\left(1+(\alpha x+\beta)^2\right)}\mathrm{d}x$$
The integral on the r.h.s. of the previous identity can be computed by means of the Residue Theorem (it is obvious that all assumptions apply to the case at hand): We apply the Residue Theorem to $\mathbb H^+\sqcup \mathbb R$, thus the only relevant poles of the integrand are $z=i$ and $z=(i-\beta)/\alpha$. 
Some manipulations yield then
$$\partial_\beta f(\alpha,\beta)=\pi\frac{(\alpha+1)}{(\alpha+1)^2+\beta^2}=\pi\partial_{\beta}\arctan\!\left(\frac{\beta}{\alpha+1}\right)$$
whence $f(\alpha,\beta)=\pi\arctan\!\left(\frac{\beta}{\alpha+1}\right)+g(\alpha)$.
Clearly $f(\alpha,0)=0$ and this implies $g(\alpha)=0$, therefore the integral~\eqref{eq-res} equals $\pi\arctan\!\left(\frac{\beta}{\alpha+1}\right)$.\\

Plugging the expression on the r.h.s. of the previous equality, the fact that $\arctan$ is an odd function and elementary manipulations yield
$$w_\Gamma=\frac{2}{(2\pi)^3}\int_{{\{z\in\mathbb{H},z\neq\I\}}^+}{{\d}\phi(z,\I){\d}\phi(\I,z)\bigr[\pi-\arg(z+\I)\bigr]}$$
The formulas $\arg(\overline{s})=2\pi-\arg(s)$ and $\arg(-{s})=\pi+\arg(s)$ for $s\in\mathbb{H}$ imply
$${{\d}\phi(z,\I){\wedge}{\d}\phi(\I,z)}=-2{{\d}\arg(-z+\I){\wedge}{\d}\arg(z+\I)}$$
and by this
$w_\Gamma=w_\Gamma^1+w_\Gamma^2$ where we define $w_\Gamma^1$ and $w_\Gamma^2$ by
$$w_\Gamma^1=\frac{-2}{(2\pi)^2}\int_{\partial{\{z\in\mathbb{H},z\neq\I\}}^+}\arg(z+\I){\d}\arg(-z+\I)$$
$${w}_\Gamma^2=\frac{2}{(2\pi)^3}\int_{\partial{\{z\in\mathbb{H},z\neq\I\}}^+}\arg^2(z+\I){\d}\arg(-z+\I)$$
\subsection*{Second application of Stokes theorem}
For the boundary term $z\rightarrow\mathbb{R}$ we write $z=\lim_{\epsilon\rightarrow0}(r+\epsilon\I)$ and calculate
$$\lim_{\epsilon\rightarrow0}{\d}\arg(-r+\I(1-\epsilon))=-\lim_{\epsilon\rightarrow0}{\d}{\;\arctan\left(\frac{1-\epsilon}{r}\right)}=\frac{{\d}r}{1+r^2}$$
For the boundary $z\rightarrow\infty$ we write $z=\lim_{R\rightarrow\infty}(R\exp(\I\varphi))$ and have
$$\lim_{R\rightarrow\infty}{\d}\arg(-R\exp(\I\varphi)+\I)={\d}{\varphi}$$
For the collapse $z\rightarrow\I$ we write $z=\lim_{\epsilon\rightarrow0}(\I+\epsilon\exp(\I\varphi))$ and have
$$\lim_{\epsilon\rightarrow0}{\d}\arg(-\epsilon\exp(\I\varphi))={\d}{\varphi}$$

Now we choose the orientation of the boundary strata in the usual way
$$w_\Gamma^1=\frac{-2}{(2\pi)^2}\left[\int_{-\infty}^{\infty}{\d}r\left(\frac{1}{1+r^2}\right)\left(\frac{\pi}{2}-\arctan(r)\right)+\int_{0}^{\pi}{\d}\varphi\varphi-\int_{0}^{2\pi}{\d}\varphi\frac{\pi}{2}\right]$$
where the first integral corresponds to the first mentioned boundary term $z\rightarrow\mathbb{R}$, the second integral corresponds to the second mentioned boundary term $z\rightarrow\infty$ and the third integral corresponds to the third mentioned boundary term $z\rightarrow\I$.

The triviality of the integral $w_\Gamma^1$ can also be seen by means of an involution argument as in~\cite[Section 6]{K}, but this elementary computation also yields that $w_\Gamma^1$ vanishes:
$$w_\Gamma^1=\frac{-2}{(2\pi)^2}\Biggr[\frac{\pi}{2}\arctan(r)\vert_{-\infty}^{\infty}+{\frac{\varphi^2}{2}}\rvert_{0}^{\pi}-\frac{\pi}{2}{{\varphi}}\rvert_{0}^{2\pi}\Biggr]$$
$$=\frac{-2}{(2\pi)^2}\left[\frac{\pi}{2}\pi+\frac{\pi^2}{2}-\frac{\pi}{2}{2{\pi}}\right]$$
The elementary computation of $w_\Gamma^2=w_\Gamma$ is quite analogous:
$$w_\Gamma=\frac{2}{(2\pi)^3}\left[\int_{-\infty}^{\infty}\hspace{-0.2cm}{\d}r\left(\frac{1}{1+r^2}\right)\left(\frac{\pi}{2}-\arctan(r)\right)^2+\int_{0}^{\pi}\hspace{-0.2cm}{\d}\varphi\varphi^2-\int_{0}^{2\pi}\hspace{-0.2cm}{\d}\varphi\left(\frac{\pi}{2}\right)^2\right]$$
where the first integral corresponds to the first mentioned boundary term and so on. This integrals are easy to solve and we can finish the computation:
$$w_\Gamma=\frac{2}{(2\pi)^3}\Biggr[\left(\left(\frac{\pi}{2}\right)^2\arctan(r)+\frac{1}{3}\arctan^3(r)\right)\vert_{-\infty}^{\infty}+{\frac{\varphi^3}{3}}\rvert_{0}^{\pi}-\left(\frac{\pi}{2}\right)^2{{\varphi}}\rvert_{0}^{2\pi}\Biggr]$$
$$=\frac{2}{(2\pi)^3}\left[\left(\frac{\pi}{2}\right)^2\pi+\frac{2}{3}\left(\frac{\pi}{2}\right)^3+\frac{\pi^3}{3}-\left(\frac{\pi}{2}\right)^2{2{\pi}}\right]$$
$$=\frac{1}{24}$$

\section{\textit{The ``shadow" of the Merkulov $n$-wheels}}\label{MerkWe}
As noticed in \cite{AWRT} and \cite{WR} the logarithmic Kontsevich formality map $\mathcal{U}^{\ln}$ has nice number theoretic properties, because it corresponds to the original Drinfeld associator known as Knizhnik-Zamolodchikov associator \cite{Dri} and hence to multiple $\zeta$ values.

Merkulov showed that the weight of the $n$-wheel with spokes pointing outwards in the logarithmic case equals $\zeta(n)$. The calculation of this  $n$-wheel weights is contained in his article \cite{M} in the proof that for the logarithmic formality map $\mathcal{U}^{\ln}$ the characteristic class is given by
$$\exp\left(\sum_{n=2}^{\infty}\frac{\zeta(n)}{n}(x/2\pi\I)^n\right)$$

Here we show an alternative calculation of the Merkulov $n$-wheels. Our calculation slightly differs from Merkulov's original computation because we do not use a certain freedom to fix a vertex of the first type. We do not fix this vertex because we are interested in some formulas that correspond on the number theoretic side to an obvious $\mathbb R^+\ltimes\mathbb R$ invariance of the Kontsevich propagator. To the best of our knowledge the number theoretic meaning of the $\mathbb R^+\ltimes\mathbb R$ invariance has not been studied before in the literature and to consider this invariance was inspired by the discussion at the end of section \ref{Coupl}.

 The invariance of the logarithmic  Kontsevich propagator \cite{KoM} under the action of the group of holomorphic transformations of $\mathbb{CP}^1$ preserving the upper half-plane and the point $\infty$ yields for Merkulov's $n$-wheels a constant containing certain multiple harmonic series and polylogarithms, more precise
\begin{proposition}\label{MerkulovWheel} For every $2\leq{n}\in\mathbb{N}$ and $\vert{w}\vert<1$ 
$$\hspace{-0.5cm}\sum_{\hspace{0.5cm}\substack{{m,m'\in\mathbb{N}}\\{m+m'<n}}}\hspace{-0.6cm}(-1)^{m'}\binom{n}{m}\binom{n-m}{m'}\hspace{-4cm}\sum_{\substack{{l_i\in\mathbb{N}^+}\\{\hspace{3.4cm}l_1\leq{l_n}\leq{l_{n-1}\leq{\cdots\leq{l_{m+m'+2}}\leq{l_{m+m'+1}}}}}\\{\hspace{3.4cm}{l_{m+m'+1}}>l_{m+m'}>\cdots>l_2>l_1\;\text{if}\;m+m'>0}}}\hspace{-3.9cm}\frac{\vert{w}\vert^{2\left(-l_1+\sum_{i=m+1}^{m+m'+1}l_i\right)}}{\prod_{i=1}^{n}l_i}\hspace{-2.5cm}\prod_{\hspace{+1.5cm}i\in\{1,2,\cdots,n\}\backslash\{2,3,\cdots,m+m'+1\}}\hspace{-2.6cm}\left(1-\vert{w}\vert^{2l_i}\right)$$
is a constant function given by the value $\zeta(n)$.\end{proposition}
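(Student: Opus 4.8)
The plan is to evaluate the weight of the $n$-wheel $\Gamma$ of \ref{Wheel} along two different sections of the residual group $\mathbb R^+\ltimes\mathbb R$ and to read off the asserted identity by comparing the two answers. First I would transform the integral from $\mathbb H$ to $\mathbb D_1(0)$ by the M\"obius map $z\mapsto(z-\I)/(z+\I)$ and work throughout with the logarithmic propagator $\phi^{\ln}$ of Section~\ref{2valent}; on the disk the group $\mathbb R^+\ltimes\mathbb R$ becomes the two-dimensional group of holomorphic automorphisms of $\mathbb D_1(0)$ fixing the point $1$ (the image of $\infty$), and this group acts transitively on $\mathbb D_1(0)$. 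Hence for every $w$ with $\vert w\vert<1$ there is a legitimate section $\Sigma_w$ of $C^+_{n+1,0}$ along which the central (hub) vertex sits at $w$. Since $W_\Gamma=\int_{C^+_{n+1,0}}\bigwedge_e\d\phi^{\ln}(e)$ is the integral over the compactified quotient of a descended top form, it does not depend on the chosen section, so $w\mapsto\int_{\Sigma_w}\bigwedge_e\d\phi^{\ln}(e)$ is constant on $\mathbb D_1(0)$. For $w=0$ this is Merkulov's original computation \cite{M}: by the recipe recalled in Section~\ref{2valent} the integral collapses, after the angular integrations couple the geometric-series exponents, to $\int_0^1\!\cdots\!\int_0^1\frac{\d x_1\cdots\d x_n}{1-x_1\cdots x_n}=\zeta(n)$ for $n\ge2$. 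It therefore remains to evaluate $\int_{\Sigma_w}\bigwedge_e\d\phi^{\ln}(e)$ for generic $w$ and to recognise the right-hand side of the proposition.

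For the evaluation along $\Sigma_w$ I would apply the universal method of Section~\ref{2valent} verbatim. Write each rim vertex in polar coordinates $w_i=r_i e^{\I\varphi_i}$. A direct computation of $\d\phi^{\ln}$ shows that each of the $n$ spokes $\d\phi^{\ln}(w,w_i)$ is a pure $\d w_i$-form, equal to $\frac{1}{2\pi\I}\d w_i\bigl(\tfrac1{w_i-w}-\tfrac{\overline w}{1-\overline w w_i}\bigr)$, whereas each of the $n$ rim edges $\d\phi^{\ln}(w_i,w_{i+1})$ carries a $\d w_i$-part, a $\d w_{i+1}$-part and a $\d\overline{w_i}$-part. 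To build a form of top degree $2n$ the wedge of the $n$ spokes must supply $\d w_1\wedge\cdots\wedge\d w_n$, which forces each rim edge to contribute its $\d\overline{w_i}$-part, itself a sum of two terms $\tfrac{-\d\overline{w_i}}{1-\overline{w_i}}$ and $\tfrac{-w_{i+1}\d\overline{w_i}}{1-\overline{w_i}w_{i+1}}$. So the integrand expands into $2^n\cdot2^n$ terms, indexed by independent binary choices at the $n$ spokes (the factor $\tfrac1{w_i-w}$ or $\tfrac{-\overline w}{1-\overline w w_i}$) and at the $n$ rim edges (the factor $\tfrac{-1}{1-\overline{w_i}}$ or $\tfrac{-w_{i+1}}{1-\overline{w_i}w_{i+1}}$). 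After splitting the radial domain at the radii $\vert w\vert,r_1,\dots,r_n$ each such factor becomes a geometric series in one of $\overline w w_i$, $w/w_i$, $w_i/w$, $w_i$, $\overline{w_i}w_{i+1}$ by \ref{GS}; the angular integrals $\int_0^{2\pi}e^{\I k\varphi_i}\d\varphi_i=2\pi\delta_0^k$ of \ref{TI} then couple the exponents of these series into constraints among a single family of indices $l_1,\dots,l_n$, while the radial integrals \ref{PI} produce the factors $\tfrac1{l_i}$ together with $\vert w\vert^{2l_i}$ or $(1-\vert w\vert^{2l_i})$ according to whether the $r_i$-integral runs over $[0,\vert w\vert]$ or over $[\vert w\vert,1]$.

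The combinatorics of the proposition is exactly the bookkeeping of which of the $4^n$ branch choices survive the $\delta_0^k$ with $k\ne0$ and how the survivors organise (using the cyclic $\mathbb Z/n$-symmetry of the wheel) into the displayed double sum. Concretely I expect $m$ to count the spokes at which one selects the branch $\tfrac{-\overline w}{1-\overline w w_i}$ and $m'$ the rim edges at which one selects $\tfrac{-w_{i+1}}{1-\overline{w_i}w_{i+1}}$, so that $\binom{n}{m}\binom{n-m}{m'}$ and the sign $(-1)^{m'}$ are the multiplicities of those selections; the two interleaved chains of indices then arise as a non-strictly increasing chain from the ``$w_i/w$''-type regions read around the rim and a strictly decreasing chain from the coupled ``$\overline w w_i$'' and ``$\overline{w_i}w_{i+1}$''-type regions, overlapping at $l_1$ and $l_{m+m'+1}$; the inequality $m+m'<n$ reflects that at least one rim vertex still ranges freely, producing a genuine factor $(1-\vert w\vert^{2l_i})$, and the set $\{2,\dots,m+m'+1\}$ over whose complement this product is taken labels precisely the coupled vertices. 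Equating the resulting expression with the value $\zeta(n)$ read off from $\Sigma_0$ proves the proposition, the $w$-independence being a consequence rather than an input.

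The main obstacle, which I would budget most time for, is this last matching: checking that the $4^n$-fold expansion collapses to exactly the stated index ranges (the two chains with the two shared endpoints), that the exponent of $\vert w\vert^2$ is exactly $-l_1+\sum_{i=m+1}^{m+m'+1}l_i$ --- in particular tracking the \emph{negative} term $-l_1$, which comes from the single spoke whose rim vertex has $\vert w_i\vert<\vert w\vert$ and is therefore expanded as $\tfrac{-1}{w}\sum_l(w_i/w)^l$ --- and that the product of $(1-\vert w\vert^{2l_i})$ runs over the complement of $\{2,\dots,m+m'+1\}$ and no larger or smaller set. Beside this one must supply the routine analytic points already used throughout Section~\ref{2valent}: justifying the interchange of the geometric series with the radial and angular integrations on the split domains, and noting that the order-one poles at $w_i=w_j$ and $w_i=w$ are integrable because they are absorbed by the area element in polar coordinates centred at the singularity. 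No Stokes' theorem is needed here; the single non-elementary ingredient is the section-independence of $W_\Gamma$, which is immediate from the manifest invariance of $\phi^{\ln}$ under the automorphism group.
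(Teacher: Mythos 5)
Your overall strategy coincides with the paper's: the weight of the $n$-wheel is independent of the chosen section of the residual automorphism group, Merkulov's evaluation with the hub fixed at $0$ gives $\zeta(n)$, and expanding the integral along the section with the hub at a generic $w\in\mathbb{D}_1(0)$ by the recipe of Section \ref{2valent} (split radial domains, geometric series \ref{GS}, angular deltas \ref{TI}, radial integrals \ref{PI}) produces the displayed sum, so constancy is a consequence of invariance rather than an input. However, the step you yourself flag as the main obstacle --- decoding the combinatorics of $(m,m')$ --- is misidentified in your plan. First, the rim-edge branching does not survive into the final bookkeeping: whenever a rim edge $(w_{i-1},w_i)$ takes its constant branch $-1/(1-\overline{w}_{i-1})$, the variable $w_i$ occurs in the remaining integrand only antiholomorphically apart from its own spoke factor, and the vanishing lemma \ref{MerkulovVanishing} (with $p=w$) kills that term outright; hence only $\prod_i w_i/(1-\overline{w}_{i-1}w_i)$ survives from the rim and the expansion is indexed by the $2^n$ spoke choices alone, not by $4^n$ terms. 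Second, and consequently, $m'$ does not count rim-edge choices: in the paper $m$ counts the spokes taking the branch $\overline{w}/(1-w_i\overline{w})$, whose series converges on the whole disk with no radial splitting, while $m'$ counts the spokes taking $1/(w_i-w)$ whose vertex lies in the inner region $\vert w_i\vert<\vert w\vert$; the sign $(-1)^{m'}$ comes from the $m'$ factors $-\tfrac1w\sum_l(w_i/w)^l$, so there are in general $m'$ such vertices rather than the ``single spoke'' your sketch describes. (A small additional slip: the spoke one-form is $\tfrac{1}{2\pi\I}\,\d w_i\bigl(\tfrac{1}{w_i-w}+\tfrac{\overline w}{1-\overline w w_i}\bigr)$, with a plus sign.) None of this breaks the plan in principle --- the terms you would retain and the paper discards do integrate to zero, so a brute-force expansion would eventually collapse to the same answer --- but without invoking \ref{MerkulovVanishing} first, the matching you postpone becomes substantially heavier, and as written your proposed identification of $(m,m')$ would not reproduce the index ranges and the exponent $-l_1+\sum_{i=m+1}^{m+m'+1}l_i$ appearing in the proposition.
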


\begin{proof} The constant ``shadow" \ref{MerkulovWheel} corresponds to integration of the vertices $w_1,\cdots,w_n$ on the circle of the Merkulov $n$-wheel we pictured in the introduction \ref{Wheel}.

 Because we integrate a form of maximal degree and the propagators $\phi^{\ln}(w,w_i)$ do not contain $\overline{w}_i$ we have to evaluate
\begin{align*}\int_{w_1,\cdots{w}_n\in\mathbb{D}_1(0)}\hspace{-1.7cm}{\d}{w_{1}}{\d}\overline{w_{1}}{\cdots}{\d}{w_{n}}{\d}\overline{w_{n}}&\prod_{i=1}^n\left(\frac{1}{w_i-{w}}+\frac{\overline{w}}{{1-w_i\overline{w}}}\right)\\&\hspace{-0.5cm}\left(-\frac{1}{1-\overline{w_n}}+\frac{w_{1}}{1-\overline{w}_{n}w_{1}}\right)\prod_{i=2}^n\left(-\frac{1}{1-\overline{w}_{i-1}}+\frac{w_{i}}{1-\overline{w}_{i-1}{w_{i}}}\right)\end{align*}
where the point $w$ is the centre of the wheel and we convenient rescaled the interpolation propagator \ref{InterpoPro} by the factor $2\pi\I$, for the proof of \ref{MerkulovWheel} this rescaling is irrelevant, but let us mention that due to Merkulov the actual weight is $(-1)^{n(n-1)/2}\zeta(n)/(2\pi\I)^n$.

Notice that in
$$\left(-\frac{1}{1-\overline{w_n}}+\frac{w_{1}}{1-\overline{w}_{n}w_{1}}\right)\prod_{i=2}^n\left(-\frac{1}{1-\overline{w}_{i-1}}+\frac{w_{i}}{1-\overline{w}_{i-1}{w_{i}}}\right)$$
every $w_i\;\forall{i}=1\cdots{n}$ appears only once in the power series expansions and the vanishing lemma \ref{MerkulovVanishing} helps to eliminate some terms appearing in any Merkulov $n$-wheel, more precise it reduces the integral to the computation of
$$\int_{w_1,\cdots{w}_n\in\mathbb{D}_1(0)}\hspace{-1.7cm}{\d}{w_{1}}{\d}\overline{w_{1}}{\cdots}{\d}{w_{n}}{\d}\overline{w_{n}}\frac{w_{1}}{1-\overline{w}_{n}w_{1}}\prod_{i=2}^n\frac{w_{i}}{1-\overline{w}_{i-1}{w_{i}}}\prod_{i=1}^n\left(\frac{1}{w_i-{w}}+\frac{\overline{w}}{{1-w_i\overline{w}}}\right)$$
For example if we would now set $w=0$ then \ref{GS}, \ref{TI} and \ref{PI} quite immediately yield Merkulov's observation that the weight of the $n$-wheel equals $\zeta(n)$, as a hint notice that $w_i$ and $\overline{w_i}$ appear exactly once in two different geometric series and \ref{TI} couples the summation indices of the power series expansions.

Now we will show what happens if we evaluate this integral without setting $w=0$:
Two of the terms in the previous integral, for instance
\begin{align*}&\int_{w_1,\cdots{w}_n\in\mathbb{D}_1(0)}\hspace{-1.7cm}{\d}{w_{1}}{\d}\overline{w_{1}}{\cdots}{\d}{w_{n}}{\d}\overline{w_{n}}\frac{w_{1}}{1-\overline{w}_{n}w_{1}}\prod_{i=2}^n\frac{w_{i}}{1-\overline{w}_{i-1}{w_{i}}}\prod_{i=1}^n\frac{\overline{w}}{{1-w_i\overline{w}}}\\&\int_{w_1,\cdots{w}_n\in\mathbb{D}_{\vert{w}\vert}(0)}\hspace{-1.7cm}d{w_{1}}d\overline{w_{1}}{\cdots}d{w_{n}}d\overline{w_{n}}\frac{w_{1}}{1-\overline{w}_{n}w_{1}}\prod_{i=2}^n\frac{w_{i}}{1-\overline{w}_{i-1}{w_{i}}}\prod_{i=1}^n\frac{1}{{w_i-{w}}}\end{align*}
vanish if we integrate out the $n$ vertices, because here the intersection of all the equations resulting from
$$\int_{0}^{2\pi}{\d}{\varphi}e^{i{n}\varphi}=2\pi\delta_0^n$$
is contradicting contributions.

We multiply out the remaining formula and compute
$$\int_{\substack{{}\\{}\\{w_1,\cdots,{w}_m\in\mathbb{D}_1(0)}\\{w_{m+1},\cdots,{w}_{m+m'}\in\mathbb{D}_{\vert{w}\vert}(0)}\\{w_{m+m'+1},\cdots,{w}_n\in\mathbb{D}_1(0)\backslash\mathbb{D}_{\vert{w}\vert}(0)}}}\hspace{-3.5cm}{\d}{w_{1}}{\d}\overline{w_{1}}{\cdots}{\d}{w_{n}}{\d}\overline{w_{n}}\frac{w_{1}}{1-\overline{w}_{n}w_{1}}\prod_{i=2}^n\frac{w_{i}}{1-\overline{w}_{i-1}{w_{i}}}\prod_{i=1}^m\frac{\overline{w}}{{1-w_i\overline{w}}}\prod_{i=m+1}^{n}\frac{1}{w_i-{w}}$$
$$\hspace{-0.7cm}=\left(\frac{2\pi}{\I}\right)^n\hspace{-0.2cm}(-1)^{m'}\hspace{-4.1cm}\sum_{\substack{{l_i\in\mathbb{N}}\\{\hspace{3.4cm}l_1\leq{l_n}\leq{l_{n-1}\leq{l_{n-2}\cdots\leq{l_{m+m'+2}}\leq{l_{m+m'+1}}}}}\\{\hspace{3.4cm}{l_{m+m'+1}}>l_{m+m'}\cdots>l_2>l_1\;\text{if}\;m+m`>0}}}\hspace{-4.1cm}\frac{\vert{w}\vert^{2(l_{m+1}-l_1)}}{\prod_{i=1}^{n}(1+l_i)}\vert{w}\vert^{2(m'+\sum_{i=m+2}^{m+m'+1}l_i)}\hspace{-3.2cm}\prod_{\hspace{+2.6cm}i\in\{1,2,\cdots,n\}\backslash\{2,3,\cdots,m+m'+1\}}\hspace{-3.1cm}\left(1-\vert{w}\vert^{2(1+l_i)}\right)$$
where we again calculate with help of \ref{GS}, \ref{TI} and \ref{PI}. A little bit more precise \ref{TI} implies a system of equations that couples some power series expansions of the shape
\begin{align*}
1+l_1-l_2+k_1&=0\\&
\;\vdots\\
1+l_{m+m'}-l_{m+m'+1}+k_{m+m'}&=0\\
l_{m+m'+1}-l_{m+m'+2}-k_{m+m'+1}&=0\\&
\;\vdots\\
l_{n}-l_{1}-k_n&=0
\end{align*}
where the first block corresponds to some integrals $\int_0^1$ and some integrals $\int_0^{\vert{w}\vert}$ and the second block to integrals $\int_{\vert{w}\vert}^1$.

Finally each of this integrals has to be multiplied with the combinatorial factor
$$\binom{n}{m}\binom{n-m}{m'}$$
and we yield for every $2\leq{n}$ the invariant ``{shadow}" of the Merkulov $n$-wheel stated in \ref{MerkulovWheel}.\end{proof}

In \ref{MerkulovWheel} among other terms
$$\sum_{i=0}^{n}(-1)^i\binom{n}{i}\mathrm{Li}_n(\vert{w}\vert^{2i})$$
appears for $m=m'=0$, where
$$\mathrm{Li}_n(w)=\sum_{l=1}^{\infty}{w^l}/{l^n}\;\text{for}\;w\in\mathbb{D}_1(0)$$
denotes as usual the polylogarithm.

As mentioned the weight of the Merkulov $n$-wheels is well-known, but the idea of \ref{MerkulovWheel} is to demonstrate a bit the universal method of the computation of Kontsevich integrals and deduce from the obvious invariance of the propagator other statements that maybe are not so obvious, as discussed in the following example.
\subsection{Example:``Shadow" of the $2$-wheel and $\sum_{l\in\mathbb{N}^+}{1}/{l(m+l)}$ for $m\in\mathbb{N}^+$}
Here the general formula for the $n$-wheels
for the logarithmic propagator $\phi^1$ implies that the following function does not depend on $w\in\mathbb{D}_1(0)$:
\begin{equation}\label{2WheelDis}
\sum_{m=1}^\infty\frac{1-2\vert{w}\vert^{2m}+{\vert{w}\vert^{4n}}}{2m^2}-\hspace{-0.2cm}\sum_{\substack{{l_i\in\mathbb{N}}\\{l_2>l_1}}}\hspace{-0.1cm}\frac{\left(2-\vert{w}\vert^{2(1+l_1)}\right)\vert{w}\vert^{2(1+l_2)}}{(1+l_1)(1+l_2)}+\hspace{-0.1cm}\sum_{\substack{{m\in\mathbb{N}^+}\\{l\in\mathbb{N}}}}\hspace{-0.1cm}\frac{\vert{w}\vert^{2m}}{(1+l)(m+1+l)}
\end{equation}
By inspecting the four power series expansions displayed in \ref{2WheelDis} we yield for all $m\in\mathbb{N}^+$
\begin{equation}\label{MerkIde}
\sum_{l\in\mathbb{N}^+}\frac{1}{l(m+l)}=\frac{2}{m}\sum_{l=1}^{{m-1}}\frac{1}{l}-\hspace{-0.3cm}\sum_{\substack{{l_1>l_2\in\mathbb{N}^+}\\{l_1+l_2=m}}}\hspace{-0.1cm}\frac{1}{l_1l_2}+\begin{cases} -\frac{1}{m^2}\;,\;$m$\;\text{even}\\\frac{1}{m^2}\quad,\;$m$\;\text{odd}\end{cases}
\end{equation}
The r.h.s of \ref{MerkIde} is clearly a finite sum of rational terms and hence also the l.h.s of \ref{MerkIde}, namely the infinite series
$\sum_{l\in\mathbb{N}^+}{1}/{l(m+l)}$
in fact is a rational number for all $m\in\mathbb{N}^+$, but notice that this rationality property is violated for $m=0$ because of Euler's formula
$\zeta(2)=\sum_{l=1}^\infty\frac{1}{l^2}=\pi^2/6$
and Lindemann's famous transcendence result.

It is well-known that the harmonic numbers satisfy
\begin{equation}\label{HamIde}
\sum_{l\in\mathbb{N}^+}\frac{1}{l(m+l)}=\frac{1}{m}\sum_{{l}=1}^{{m}}{1}/{l}
\end{equation}
For the proof one uses telescope sums and we thank Pieter Moree for pointing out this easier identity with help of the harmonic numbers.

By rewriting
$$\sum_{\substack{{l_1>l_2\in\mathbb{N}^+}\\{l_1+l_2=m}}}\frac{1}{l_1l_2}=\sum_{l=1}^{\lfloor{(m-1)/2}\rfloor}\frac{1}{l}+\frac{1}{m-l}$$
and elementary manipulations it is not difficult to see that both previous identities \ref{MerkIde} and \ref{HamIde} for the series 
$\sum_{l\in\mathbb{N}^+}{1}/({l(m+l)})$
are indeed equivalent. However, one can consider it as a slight conceptional advantage of our calculation \ref{MerkulovWheel} that we interpret the formulas \ref{HamIde} as a incarnation of the obvious scaling and real translation invariance of the Kontsevich propagator in the special case of the Merkulov $2$-wheel because we have some generalisations to $n$-wheels.

\begin{bibdiv}
\begin{biblist}

\bib{AWRT}{article}{
author= {Alekseev~A.}
author= {Rossi~C.A.}
author= {Torossian~C.}
author= {Willwacher~T.}
title={Logarithms and deformation quantization},
journal={arxiv},
date={2014}
}

\bib{AT}{article}{
author={Alekseev~A.},
author={Torossian~C.},
title= { Kontsevich deformation quantization and flat connections},
journal={Comm. Math. Phys. \textbf{300}, no. 1, 47–64},
date={2010},
}

\bib{AT2}{article}{
author={Alekseev~A.},
author={Torossian~C.},
title= {The Kashiwara-Vergne conjecture and Drinfeld's associators},
journal={Ann. of Math. (2) \textbf{175}, no. 2, 415-463},
date={2012},
}

\bib{MA}{article}{
author={Ammar~M.}
author={Chloup~V.}
author={Gutt~S.}
title= { Universal star products},
journal={ Letters in Mathematical Physics, \textbf{84},  preprint 2008 (pdf version), 199-215,},
date={2008}
}

\bib{AST}{article}{
   author={Andler~M.},
   author={Sahi~S.},
   author={Torossian~C.},
   title={Convolution of invariant distributions: proof of the
   Kashiwara-Vergne conjecture},
   journal={Lett. Math. Phys. 177-203},
   volume={69},
   date={2004},
}

\bib{AMM}{article}{
   author={Arnal~D.},
   author={Manchon~D.},
   author={Masmoudi~M.},
   title={Choix des signes pour la formalit\'e de M. Kontsevich},
   journal={Pacific J. Math. 23-66},
   volume={203},
   date={2002},
}
\bib{DBN}{article}{
author={Bar-Natan~D.}
author={Dancso~Z.}
title= {Pentagon and hexagon equations following Furusho},
journal={ arxiv},
date={2010},
}

\bib{BFFLS1}{article}{
author={Bayen~F.}
author={Flato~M.}
author={Fr\o{}nsdal~C.}
author={Lichnerowicz~A.}
author={Sternheimer~D.},
title= {Deformation\-Theory and Quantization.},
journal={  Springer-Verlag, Volume \textbf{94}, Group Theoretical Methods in Physics, 280-289, },
date={1979}
}

\bib{BFFLS2}{article}{
author={Bayen~F.}
author={Flato~M.}
author={Fr\o{}nsdal~C.}
author={Lichnerowicz~A.}
author={Sternheimer~D.},
title= { Quantum Mechanics as a Deformation of Classical Mechanics.},
journal={ Lett. Math. Phys. \textbf{1}, 521-530, },
date={1977}
}

\bib{BeGCoQu}{article}{
author={Berezin~F. A.},
title= { General Concept of Quantization.},
journal={  Commun. Math. Phys. \textbf{40}, 153-174,},
date={1975}
}

\bib{BeQu}{article}{
author={Berezin~F. A.},
title= { Quantization.},
journal={  Math. USSR Izvestija \textbf{8}.5, 1109-1165, },
date={1975}
}

\bib{Ka1}{article}{
author={Bezrukavnikov~R.}
author={ Kaledin~D.},
title= { Fedosov quantization in algebraic context},
journal={Moscow. Math. J. Volume \textbf{4}, Issue 3, 559-592},
date={2004}
}

\bib{Ka2}{article}{
author={Bezrukavnikov~R.}
author={ Kaledin~D.},
title= { Fedosov quantization in positive characteristic},
journal={J. Amer. Math. Soc. \textbf{21}, 409-438},
date={2008},
}

\bib{BoWaGNS}{article}{
author={Bordemann~M.}
author={Waldmann~S.},
title= { Formal GNS Construction and States in Deformation Quantization},
journal={ Commun. Math. Phys. \textbf{195}, 549-583},
date={1998}
}

\bib{BoWaWT}{article}{
author={Bordemann~M.}
author={Waldmann~S.},
title= { A Fedosov Star-Product of Wick type for Kaehler Manifolds.},
journal={ Lett. Math. Phys. \textbf{41}, 243-253,},
date={1997}
}

\bib{FB1}{article}{
author={Brown~F.},
title= {Multiple Zeta Values and periods of moduli spaces $M_{0,n}$},
journal={Annales scientifiques de l’ENS, 42:3 , 371–489},
date={2009}
}

\bib{FB2}{article}{
author={Brown~F.},
title= {The massless higher-loop two-point function},
journal={arxiv},
date={2008}
}

\bib{BuWa}{article}{
author={Bursztyn~H.}
author={Waldmann S.},
title= { On positive deformations of *-algebras},
journal={DITO, G., STERNHEIMER, D. (EDS.): Conf\'erenence Mosh\'e Flato 1999, Vol. II (Dijon). [25], 69-80},
date={1999}
}

\bib{CaGu}{article}{
author={Cahen~M.},
author={Gutt~S.},
title= {Local Cohomology of the Algebra of $C^\infty$ Functions on a Connected Manifold},
journal={Lett.Math.Phys. \textbf{4}, 157-167},
date={1980}
}

\bib{CaRo}{article}{
author={Calaque~D.},
author={Rossi~C.A.},
title= {Compatibility with cap-products in Tsygan Formality and homological Duflo isomorphism},
journal={Lett. Math. Phys. \textbf{95} (issue 1), 135-209},
date={2011}
}

\bib{CFT}{article}{
author={Cattaneo~A.S.}
author={Felder~G.}
author={Tomassini L.},
title= { From local to global Quantization of Poisson Manifolds.},
journal={ Duke Math. J. \textbf{115}.2, 329-352, 
 Sov. Phys. Dokl. \textbf{34}, 319-321, 1989.},
 date={2002}
}

\bib{BCKT}{article}{
   author={Cattaneo~A.},
   author={Keller~B.},
   author={Torossian~C.},
   author={Brugui{\`e}res~A.},
   title={D\'eformation, quantification, th\'eorie de Lie},
   journal={Panor. Synth\`eses, Soc. Math. France, Paris},
     volume={20},
   date={2005},
}

\bib{CaFe}{article}{
author={Cattaneo~A.S.}
author={Felder~G.},
title= { A path integral approach to Kontsevich quantization},
journal={ Commun. Mat. Phys. \textbf{212} , 591-611},
date={2000}
}

\bib{CF}{article}{
author={Cattaneo~A.S.}
author={ Felder~G. }
title= { On the globalization of Kontsevich's star product and the perturbative Poisson sigma model.},
journal={ Zurich Open Repository and Archive},
date={2001}
}

\bib{Cat}{article}{
author={Cattaneo~A.-S.},
author={Sch\"atz~F.},
title= { Equivalence of higher derived brackets},
journal={J. Pure Appl. Algebra \textbf{212}, Issue 11, 2450-2460},
date={2008},
}

\bib{Dir}{article}{
author={Dirac~P.A.M.},
title={The principles of quantum mechanics first and second edition},
journal={Clarendon press. Oxford ,},
date={1930}
}

\bib{DeWL}{article}{
author={DeWilde~M.}
author={Lecomte~P.B.},
title= { Existence of star-products and of formal deformations of the Poisson Lie algebra of arbitrary symplectic manifolds},
journal={Lett. Math. Phys. \textbf{7}, 487-496},
date={1983}
}
\bib{D}{article}{
author={Dolgushev~V.A.}
title= { Covariant and Equivariant Formality Theorems.},
journal={ Advances in Mathematics
Volume \textbf{191}, Issue 1, 15 Pages 147-177},
date={2005}
}

\bib{DT1}{article}{
author={Dolgushev~V.A.}
title= { A proof of Tsygan's formality conjecture for an arbitrary smooth manifold},
journal={ arxiv},
date={2007}
}

\bib{DT2}{article}{
author={Dolgushev~V.A.}
title= { Erratum to: ``A proof of Tsygan's formality conjecture for an arbitrary smooth manifold"},
journal={ arxiv },
date={2007}
}

\bib{DoSt}{article}{
author={Dolgushev~V.A.}
title= { Stable formality quasi-isomorphisms for Hochschild cochains I},
journal={ arxiv},
date={2011},
}

\bib{Dri}{article}{
author={Drinfeld~V.G.},
title= { On quasi-triangular quasi-Hopf algebras and a group closely connected with $Gal(\overline{\mathbb{Q}}/\mathbb{Q})$},
journal={Leningrad Math. J. Vol. \textbf{2} , No. 4, 829-860},
date={1991}
}

\bib{W}{article}{
author={Emmrich~C.}
author={ Weinstein~A.},
title= { The Differential Geometry of Fedosov's Quantization.},
journal={ arXiv:hep-th/9311094},
date={1994}
}

\bib{FeS}{article}{
author={Fedosov~B.V.},
title= { A Simple Geometrical Construction of Deformation Quantization.},
journal={ J. Diff. Geom. \textbf{40}, 213-238, },
date={1994}
}

\bib{FeIT2}{article}{
author={Fedosov~B.V.},
title= { Deformation Quantization and Index Theory.},
journal={ Akademie Verlag, Berlin, },
date={1996}
}

\bib{FSh}{article}{
   author={Felder~G.},
   author={Shoikhet~B.},
   title={Deformation quantization with traces},
   journal={Lett. Math. Phys. \textbf{1}. 75-86},
   volume={53},
   date={2000},
}

\bib{FW}{article}{
   author={Felder~G.},
   author={Willwacher~T.},
   title={On the (ir)rationality of Kontsevich weights},
   journal={Int. Math. Res. Not. IMRN, \textbf{4}., 701-716},
   date={2010},
}
\bib{FMcP}{article}{
   author={Fulton~W.}
   author={MacPherson~R.},
   title={A compactification of configuration spaces},
   journal={Ann. of Math. 1. (2), 183-225},
   volume={139},
   date={1994},
}

\bib{Fu}{article}{
author={Furusho~H.},
title= { Pentagon and hexagon equations},
journal={Ann. Math., \textbf{171} (1), 545-556},
date={2010},
}

\bib{G}{article}{
author={Gelfand~I.}
author={Retakh~V.}
author={Shubin~M. },
title= { Fedosov Manifolds},
journal={ArXiv.org. },
date={2009}
}

\bib{Groe}{article}{
author={ Groenewold~H.J.},
title={ On the Principles of elementary quantum mechanics},
journal={Physica., 405-460},
volume={12 },
   date={1946},
}

\bib{Gu}{article}{
author={Gutt~S.},
title= {An explicit $\star$-product on the cotangent bundle of a Lie group},
journal={Lett. Math. Phys. \textbf{7}, no. 3, 249-258,},
date={1983}
}

\bib{HSC}{article}{
author={Hess~H.},
title= { Symplectic connections in geometric quantization and factor orderings},
journal={Dissertation, Fachbereich Physik, Freie Universit\"at, Berlin, },
date={1981}
}

\bib{Kho}{article}{
author={Khovanskii~A.G.},
title= { On a Lemma of Kontsevich},
journal={Functsional\'{}nyi Analiz i Ego Prilozheniya Vol. \textbf{31}, No. 4, 89-91, },
date={1997}
}

\bib{KaN}{article}{
author={Kobayashi~S.}
author={Nomizu~K.},
title={ Foundations of Differential Geometry i and ii.},
journal= {Wiley-Interscience Publishers, New York, },
date={1963,1969}
}

\bib{K}{article}{
   author={Kontsevich~M.},
   title={Deformation quantization of Poisson manifolds, I},
   journal={Preprint of the IHE ́S, q-alg/9709040.}
   date={1997}
}

\bib{KoM}{article}{
   author={Kontsevich~M.},
   title={Operads and motives in deformation quantization},
   journal={Lett. Math. Phys. Vol. \textbf{66}, No. 3, 157-216},
   date={2003},
}

\bib{J}{article}{
author={L\"offler~J. },
title= { Fedosov differentials and Catalan numbers.},
journal={ IOP J. Phys. A: Math. Theor. \textbf{43} 235404, },
date={2010}
}

\bib{M}{article}{
author= {Merkulov~S.A.},
title={ Exotic automorphisms of the Schouten algebra of polyvector fields},
journal={ arxiv},
date={2008}
}

\bib{Mi}{article}{
author={Michor~P.W.},
title= { Topics in differential geometry},
journal={ AMS: ISBN-13: 978-0-8218-2003-2 Graduate Studies in Mathematics vol. \textbf{93},},
date={2008}
}

\bib{Moy}{article}{
author={Moyal~J.E.},
title= { Quantum Mechanics as a Statistical Theory.},
journal={  Proc. Camp. Phil.Soc. \textbf{45}, 99-124, },
date={1994}
}

\bib{NaTal}{article}{
author={Nest~R.}
author={ Tsygan~B.},
title= { Algebraic Index Theorem.},
journal={   Commun. Math. Phys. \textbf{172}, 223-262, },
date={1995}
}

\bib{NeDiss}{article}{
author={Neumaier~N.},
title= { Universality of Fedosovs Construction for Star-Products of Wick Type
  on Pseudo-Kaehler Manifolds.},
journal={ Rep. Math. Phys. \textbf{52}, 43-80, },
date={2001}
}

\bib{Pf2}{article}{
author={Pflaum~M.J.}
author={Posthuma~H.}
author={Tang~X.},
title= { Cyclic Cocycles on Deformation Quantizations and Higher Index
  Theorems.},
journal={ arXiv:math-ph/0208020, },
date={2008}
}

\bib{WR}{article}{
author= { Rossi~C.A.}
author={Willwacher T.}
title={Proof of P. Etingof's conjecture about Drinfeld associators},
date={2013},
journal={arxiv},
}

\bib{SS}{article}{
author={Schaller~P.}
author={Strobl~T.},
title= { Poisson structure induced (topological) field theories},
journal={ Mod. Phys. Lett. A. \textbf{9}. 33, 3129-3136},
date={1994}
}

\bib{Sh}{article}{
   author={Shoikhet~B.},
   title={Vanishing of the Kontsevich integrals of the wheels},
   journal={EuroConf\'erence Mosh\'e Flato 2000, Part II (Dijon), Lett. Math. Phys. (2), 141-149},
   volume={56},
   date={2001},
   }

\bib{ShTsy}{article}{
   author={Shoikhet~B.},
   title={A proof of Tsygan formality conjecture for chains},
   journal={arxiv},
date={2000}
   }

\bib{S}{article}{
author= {Shoikhet~B.},
title={ Vanishing of the Kontsevich integrals of the wheels},
journal={ arxiv},
date={2000}
}

\bib{TaF}{article}{
author={Tamarkin~D.},
title= { Another proof of M. Kontsevich formality theorem},
journal={E-Preprint math. QA/9803025},
date={1998},
}

\bib{TsyganChains}{article}{
author= {Tsygan~B.},
title={ Formality conjectures for chains},
journal={ Differential topology, infinite«dimensional Lie algebras, and applications, volume \textbf{194} of Amer. Mat. Soc. Transl. Ser. \textbf{2} PRovidence, RI, 261-274},
date={1999},
}

\bib{VdB}{article}{
   author={Van den Bergh~M.},
   title={The Kontsevich weight of a wheel with spokes pointing outward},
   journal={Algebr. Represent. Theory (2-5), 443-479},
   volume={12},
   date={2009},
  
}

\bib{WaB}{article}{
author={Waldmann~S.},
title= { Poisson-Geometrie und Deformationsquantisierung},
journal={Springer-Verlag, ISBN: 978-3-540-72517-6},
date={2007}
}

\bib{TRT}{article}{
author={Willwacher T.},
title={M. Kontsevich's graph complex and the Grothendieck-Teichmueller Lie algebra},
journal={arxiv},
date={2010}
}

\bib{WillChains}{article}{
author={Willwacher T.},
title={Formality of cyclic chains},
date={2008},
journal={arxiv},
}

\bib{WiGRT}{article}{
author={Willwacher T.},
title={The Grothendieck-Teichm\"uller group action on differential forms and formality morphism of chains},
journal={arxiv},
date={2014},

}

\bib{ZA}{article}{
   author={Zagier~D.},
   title={Values of zeta functions and their applications},
   journal={Vol. \textbf{2} Progress in Mathematics, \textbf{120},  Birkh\"auser, 497-512}
   date={1994}
}

\end{biblist}
\end{bibdiv}
\vfill{\begin{center}
\textcircled{c} Copyright by Johannes L\"offler, 2015, All Rights Reserved
\end{center}}
\end{document}